\newtheorem{assumption}{Assumption}
\begin{document}

\title{ Bias-Optimal Vol-of-Vol Estimation: The Role of Window Overlapping   }

 \author{Giacomo Toscano \and Maria Cristina Recchioni}
 \institute{ Giacomo Toscano (corresponding author) \at University of Firenze and Scuola Normale Superiore, Italy - email: giacomo.toscano@unifi.it; address: Via delle Pandette 9, 50127 Firenze, Italy    \and Maria Cristina Recchioni \at Università Politecnica delle Marche, Italy - email: m.c.recchioni@univpm.it; address: Piazzale Martelli 8, 60121 Ancona, Italy}

 \maketitle

\begin{abstract}

The simplest and most natural vol-of-vol estimator, the pre-estimated spot variance based realized variance (PSRV), is typically plagued by a large finite-sample bias. In this paper, we analytically show that allowing for the overlap of consecutive local windows to pre-estimate the spot variance may correct for this bias. In particular, we provide a feasible rule for the bias-optimal selection of the length of local windows when the volatility is a CKLS process. The effectiveness of this rule for practical applications is supported by numerical and empirical analyses.

\end{abstract}

 \keywords{stochastic volatility of volatility, high-frequency data, bias optimization, CIR model, CKLS model.}
 
 \noindent{\textbf{JEL codes:}} C13, C14, C58, G10

\section{Introduction}\label{Intro}

Estimating  the volatility of asset volatility (hereinafter vol-of-vol) is relevant in many areas of mathematical finance, such as the calibration of stochastic volatility of volatility models (\cite{bnv}, \cite{scm}),  the hedging of portfolios against volatility of volatility risk (\cite{hsst}), the estimation of the leverage effect (\cite{kx}, \cite{aflwy}), and the inference of future returns (\cite{btz}), along with spot volatilities (\cite{mz}).   

The literature offers a number of consistent estimators  for the integrated vol-of-vol. The first estimator to appear was the one proposed by  \cite{bnv}, termed pre-estimated spot-variance based realized variance (PSRV), which is, in fact, simply the realized variance of the unobservable spot variance, computed using estimates of the latter. Later, \cite{v} derived two sophisticated versions of the simple PSRV: one that allows for a central limit theorem with the optimal rate of convergence, but also for negative values, and another that preserves positivity at the expense of a slower rate of convergence. Note that the simple PSRV and its sophisticated versions are consistent when the price and volatility processes are continuous semimartingales, in the absence of microstructure noise contaminations. Further, Fourier-based estimators of the integrated vol-of-vol were introduced by \cite{scm} and \cite{ct}. In particular, the estimator by \cite{scm} is asymptotically unbiased in the presence of market microstructure noise, while the estimator by \cite{ct} allows for a central limit theorem in the presence of jumps in the price and volatility processes.

The numerical studies in \cite{aflwy} and \cite{scm} show that both realized and Fourier-based integrated vol-of-vol estimators may carry a substantial finite-sample bias unless the selection of the tuning parameters involved in their computation is carefully optimized. However, this is a rather unexplored issue, which we aim to explore. To do so, we focus on the simple PSRV, since it represents the most intuitive and easy-to-implement  vol-of-vol estimator.  Furthermore, asymptotically-optimal estimators do not necessarily guarantee the best finite-sample performance, as pointed out in the extensive study by \cite{go} on integrated volatility estimators and confirmed for integrated vol-of-vol estimators by the numerical studies in \cite{aflwy} and \cite{scm}. Thus, there is no reason to expect a priori that the simple PSRV would show worse finite-sample performance than its sophisticated version with optimal rate of convergence.

As mentioned, the PSRV is the realized volatility of the unobservable spot volatility process, computed from discrete estimates of the latter. In other words, the PSRV is the sum of the squared increments of estimates of the unobservable spot volatility on a discrete grid.  These estimates are obtained as local averages of the price realized volatility. Formally, the locally averaged realized variance and the PSRV are defined as follows.

\begin{definition}{\textbf{Locally averaged realized variance}}\label{larv}
 
Suppose that the log-price process $p$ is observable on an equally-spaced grid of mesh size $\delta_N$, with $\delta_N \to 0$ as $N \to \infty$. Also, let $k_N=O(\delta_N^b)$, $b \in (-1,0),$ be a sequence of positive integers such that  $k_N \to \infty$ and define the local window $W_N:=k_N\delta_N$ such that  $W_N \to 0$ as $N \to \infty$. The locally averaged realized variance at time $t$ is defined as

$$\hat \nu_N(t) := \displaystyle \frac{1}{k_N \delta_N} \sum_{j=1}^{k_N} \Big[ p(\lfloor t/ \delta_N\rfloor \delta_N - k_N \delta_N + j\delta_N) - p(\lfloor t/ \delta_N\rfloor \delta_N - k_N \delta_N + (j-1) \delta_N)       \Big]^2,$$

\vspace{0.5cm}
\noindent where $\lfloor\,\cdot\,\rfloor$ denotes the floor function.

\end{definition}

\begin{definition}{\textbf{Pre-estimated spot-variance based realized variance}}\label{psrv}

Suppose that the log-price process $p$ is observable on an equally-spaced grid of mesh size $\delta_N$, with $\delta_N \to 0$ as $N \to \infty$. The pre-estimated spot-variance based realized variance (PSRV) on the interval $[\tau, \tau + h]$ is defined as

$$PSRV_{[\tau, \tau + h],N} := \sum_{i=1}^{\lfloor h/\Delta_N \rfloor} \Big[ \hat \nu_N(\tau+i \Delta_N) - \hat \nu_N(\tau+(i-1) \Delta_N) \Big]^2,$$

\noindent where:

\begin{itemize}
\item[-] $\hat\nu_N(\cdot) $ is the locally averaged realized variance in Definition \ref{larv}, with $k_N=O(\delta_N^b)$, $b \in (-1,0)$;

\item[-] $\Delta_N= O(\delta_N^c)$, $c \in (0,1)$, is the locally averaged realized variance sampling frequency.

\end{itemize}   
\end{definition}

The following propositions summarize the asymptotic properties of the locally averaged realized variance and the PSRV. For further details, see Chapter 8 in\cite{aj}.

\begin{proposition}
Let the log-price process $p$ be a continuous semimartingale and let the process $\nu$ denote its instantaneous volatility. Then $\hat \nu_N(t)$ is a consistent local estimator of $\nu(t)$ as $N \to \infty$. 
\end{proposition}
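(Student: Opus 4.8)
The plan is to exploit the semimartingale structure by writing $p_t = p_0 + \int_0^t b_s\,ds + \int_0^t \sigma_s\,dW_s$, where $\nu_s = \sigma_s^2$ is the spot variance, and to show that $\hat \nu_N(t)$ reproduces, in the limit, the average of $\nu$ over the shrinking window $[\bar t_N - W_N, \bar t_N]$, with $\bar t_N := \lfloor t/\delta_N\rfloor \delta_N$, which in turn converges to $\nu(t)$. First I would invoke a standard localization argument (as in Chapter 8 of \cite{aj}) to assume without loss of generality that the drift $b$ and the volatility $\sigma$ have bounded paths, so that all the moment bounds used below are available.

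Writing $\Delta_j p$ for the $j$-th increment inside the window and decomposing it into its drift increment $\Delta_j A$ and its martingale increment $\Delta_j M$, I would first dispose of the drift. Since $|\Delta_j A| = O(\delta_N)$ uniformly, the drift-squared contribution to $\hat \nu_N(t)$ is $O(\delta_N)$, while the cross term $\frac{2}{W_N}\sum_j \Delta_j A\,\Delta_j M$ is $O_P(\sqrt{\delta_N})$ by Cauchy--Schwarz together with $\sum_j (\Delta_j M)^2 = O_P(W_N)$. Both vanish as $N \to \infty$, reducing the problem to the martingale term $\frac{1}{W_N}\sum_{j=1}^{k_N}(\Delta_j M)^2$.

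The core of the argument is then a two-step decomposition of this martingale term. Setting $\zeta_j := (\Delta_j M)^2 - \int_{I_j}\nu_s\,ds$ over the $j$-th subinterval $I_j$, the variables $\zeta_j$ form martingale differences with $E[\zeta_j^2] = O(\delta_N^2)$ under localization, so that $\mathrm{Var}\big(\frac{1}{W_N}\sum_j \zeta_j\big) = O(k_N \delta_N^2 / W_N^2) = O(1/k_N) \to 0$; this is precisely where the requirement $k_N \to \infty$ enters, and it yields $\frac{1}{W_N}\sum_j (\Delta_j M)^2 - \frac{1}{W_N}\int_{\bar t_N - W_N}^{\bar t_N}\nu_s\,ds \xrightarrow{P} 0$. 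It then remains to show $\frac{1}{W_N}\int_{\bar t_N - W_N}^{\bar t_N}\nu_s\,ds \to \nu(t)$, which follows from $W_N \to 0$ and the continuity (or at least left-continuity at $t$) of the paths of $\nu$, since the window shrinks to $t$ from the left. Chaining the two steps gives $\hat \nu_N(t)\xrightarrow{P}\nu(t)$.

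I expect the main obstacle to be the fluctuation control in the first step: one must quantify how fast the realized variance of the continuous martingale approaches the integrated variance over a window that is itself shrinking, and verify that the stochastic error is dominated precisely by the growth in the number of increments $k_N$ while the window length $W_N$ decays. This is the step where the two competing scaling requirements on $(k_N,\delta_N)$ --- many observations per window, yet a vanishing window --- must be balanced against each other, and it is the reason the conditions $k_N \to \infty$ and $W_N \to 0$ appear jointly in Definition \ref{larv}.
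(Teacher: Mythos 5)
Your argument is correct and is essentially the standard proof of spot-variance consistency that the paper defers to (it simply cites Chapter 8.1 of A\"it-Sahalia and Jacod rather than writing it out): localization, elimination of the drift and cross terms, a martingale-difference variance bound of order $1/k_N$ for the realized-minus-integrated variance over the window, and convergence of the shrinking-window average of $\nu$ to $\nu(t)$ using $W_N\to 0$ and path regularity of $\nu$ at $t$. No gaps; the two scaling conditions $k_N\to\infty$ and $W_N\to 0$ enter exactly where you say they do.
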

\begin{proof}  See Chapter 8.1 in \cite{aj}.
 \end{proof}

\begin{proposition}\label{consist}
Let the log-price process $p$ and the spot volatility process $\nu$ be continuous semimartingales. Then the PSRV is a consistent estimator of the quadratic variation of the volatility process $\langle \nu, \nu \rangle_{[\tau,\tau +h]}$  if  $b \in (-1/2,0)$ and $ c \in (0, -b/2)$. 
\end{proposition}
\begin{proof}
See Proposition 8 in \cite{bnv}.\end{proof}

\begin{remark} Note that the requirements for rates $b$ and $c$ that guarantee consistency imply that $\frac{W_N}{\Delta_N} \to 0$ as $N \to \infty$. Indeed, as one can easily verify,
$-1/2<b<0$ and $0<c<-b/2$ imply $ c < 1+b, $ which, in turn, implies $ \frac{W_N}{\Delta_N} \to 0$ as $N \to \infty$.
\end{remark}\label{rem1}

 In practical applications, when computing PSRV values, one has to select the spot volatility estimation grid. Moreover, since the spot volatility is estimated as an average of the price realized volatility over a local window, the length of the latter must also be selected. More specifically, the figure below details the different quantities involved in the computation of the PSRV: the time horizon $h$; the log-price sampling frequency $\delta_N:=\frac{h}{N}$; the spot volatility sampling frequency $\Delta_N:=\lambda_N\delta_N$, $\lambda_N= min(N, \lceil \lambda \delta_N^{c-1}\rceil),$ $\lambda >0$, $c \in (0,1)$; the size of the local window to estimate the spot volatility $W_N=k_N\delta_N$, $k_N= \lceil \kappa \delta_N^b \rceil$, $\kappa >0$, $b \in (-1,0)$; and the spot volatility estimates $\hat\nu(s)$, $s=\tau+j\Delta_N$, $j=0,1,...,\lfloor h / \Delta_N \rfloor$. Note that $\lceil\,\cdot\,\rceil$ denotes the ceiling function.  \\
 
 \begin{figure}[H]\label{sketch}

\vspace{0.1cm}
\begin{center}
\begin{tikzpicture}
  \draw[line width=0.6pt, ->, >=latex'](0,0)--(11,0);
         \foreach \x/\xtext in {2/$ \hat\nu(\tau)$,4/$\hat\nu(\tau+\Delta_N)$,6/$\hat\nu(\tau+2\Delta_N)$,10/$\hat\nu(\tau+h)$}
      \draw(\x,-5pt)--(\x,5pt) node[above] {\xtext};      
     \draw[decorate, yshift=-5ex]  (2.75,0) -- node[below=0.4ex] {$  \delta_N$}  (3,0);
    \draw[decorate, yshift=-5ex]  (4,0) -- node[below=0.4ex] {$\Delta_N=\lambda_N\delta_N$}  (6,0);
     \draw[decorate, yshift=-5ex] (2,0) -- node[below=0.4ex] {$W_N=k_N\delta_N$} (0.75,0);
     \draw[decorate, yshift=-4ex] ;
     \foreach \x in {0.75,1,...,2}  \draw (\x,0) -- (\x,-3pt);
    \foreach \x in {2.75,3,...,4}  \draw (\x,0) -- (\x,-3pt);
        \foreach \x in {4.75,5,...,6}  \draw (\x,0) -- (\x,-3pt);
     \foreach \x in {8.75,9,...,10} \draw (\x,0) -- (\x,-3pt);
 \node [below] at (7.75,0) {$...$};
  \node [below] at (11,0) {$time$};
\end{tikzpicture}
\end{center}
   
   \caption{ Graphic representation of the quantities involved in the computation of the PSRV on the interval $[\tau, \tau+h]$.}  
    \end{figure}
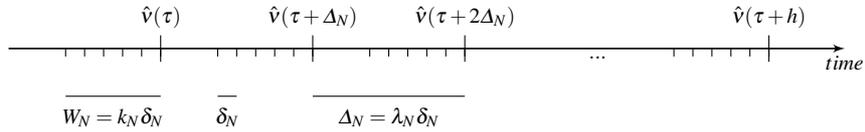

   As a consequence, for given values of the asymptotic rates $b$ and $c$, the finite-sample performance of the PSRV (i.e., the performance of the PSRV for a fixed $N$) depends on the selection of two tuning parameters: $\lambda$, which determines the mesh of the spot volatility estimation grid and $\kappa$, which determines the length of the local window used to estimate the spot volatility.

With regard to the selection of $\kappa$, note that the efficient computation of the spot volatility in finite samples may require the selection of fairly long local windows (see, e.g., \cite{LeeM}, \cite{afl} and \cite{ZB}). This in turn suggests that the finite-sample efficient implementation of the PSRV over a given period (e.g., one day) may require the use of price observations from the previous period(s) (e.g.,  day(s)). At the same time, this might imply that it is optimal to allow consecutive local windows to overlap in finite samples, that is, $W_N>\Delta_N$ for $N$ finite. This aspect is confirmed by the numerical study in \cite{scm}, which shows that it is optimal to select $\kappa$ such that  $W_N>\Delta_N$ in finite samples. The aim of this paper is to gain insight into the bias-reducing effect due to window-overlapping from an analytic perspective. To do so, we follow an approach inspired by the one used in \cite{afl} to solve the ``leverage effect puzzle". 

The ``leverage effect puzzle" pertains to the absence of correlation between log-price and (estimated) volatility changes at high-frequencies, observed in empirical samples.  \cite{afl} solve this puzzle by showing analytically that a substantial bias masks the presence of correlation unless log-price and volatility estimates changes are computed on a suitably sparse grid. The aim is not to solve the problem of the efficient non-parametric estimation of the leverage at high-frequencies, but rather to obtain insight into the puzzle by solving it in a widely used parametric setting that allows for fully explicit computations. 
This paper is written in the same spirit. In fact, we do not address the general problem of the efficient non-parametric vol-of-vol estimation from high-frequency prices, but, rather, our aim is to obtain insight from an analytic perspective into why the PSRV, the simplest and most natural vol-of-vol estimator, is plagued by a large bias in finite samples and investigate the role of window-overlapping as a tool for reducing such large bias.

\section{Outline of the paper}
To achieve this aim, we proceed as follows.
 In Section \ref{mot} we perform a preliminary numerical exercise that uncovers the crucial role of the local-window  parameter $\kappa$ in determining the finite-sample performance of the PSRV and, at the same time, shows that the latter is basically insensitive to the selection of the grid parameter $\lambda$. In particular, it is evident from simulations that the PSRV finite-sample bias is optimized by selecting $\kappa$ such that consecutive local windows to estimate the spot volatility overlap. Numerical results of Section \ref{mot} confirm those by \cite{scm} and motivate the analytic study of Section \ref{anres}.

In Section \ref{anres}, we address the problem of the optimal selection of PSRV tuning parameters in finite samples from an analytic perspective. To do so, in the spirit of \cite{afl}, we assume a widely-used parametric form for the data-generating process, which allows us to obtain the full explicit PSRV finite-sample bias expression. Specifically, we assume the price to be a continuous semimartingale and the volatility to be a CIR process (see \cite{cir}). In general, independently of the parametric assumption on the data-generating process, the PSRV finite-sample bias expression differs in case window overlapping is allowed, i.e., when $W_N > \Delta_N$, or not, i.e., $W_N \le \Delta_N$. Consequently, in Section \ref{anres} we study both cases. 

In the no-overlapping case we adopt a conventional approach and isolate the dominant term of the bias as $N\to \infty$, thereby showing that a value of $\kappa$ that annihilates the dominant term of the bias does not always exist and, even when it does exist, its computation would be basically unfeasible in practice, as it depends on the drift parameters of the volatility, which cannot be reliably estimated on a fixed time horizon, due to the fact that their consistent estimation is possible only in the classic  long-sample asymptotics setting (see, e.g., \cite{kankris}). In addition, when the optimal value of $\kappa$ exists, for typical orders of magnitude of the CIR parameters it actually satisfies the no-overlapping constraint only at ultra-high frequencies ($< 1$ second).

In the overlapping case, instead, the natural expansion as $N \to \infty$ is precluded, as the consistency of the PSRV requires that consecutive windows do not  overlap as the number of price observations grows to infinity (see Remark \ref{rem1}). Therefore, in this case we adopt a novel approach and expand sequentially the bias expression as the tuning parameter $\lambda$ and the time horizon $h$ go to zero, based on the fact that, in practical applications,  $h$ and  $\lambda$ are typically very close to zero. This approach yields a dominant term of the bias which is independent of the tuning parameter $\lambda$ and is annihilated by selecting the asymptotic rate of $k_N$ as $b=-1/2$, the asymptotic rate of $\Delta_N$ as $c<1/2$, and the local-window tuning parameter as $\kappa =2\sqrt{\nu(\tau)}{\gamma}^{-1}$, where $\nu(t) $ and $\gamma$ denote, respectively,  the spot variance process at the initial time $\tau$ and the CIR diffusion parameter.  This analytic result shows that, when overlapping is allowed, it is possible to select $\kappa$ such that the bias is effectively optimized in practical applications and supports  the numerical evidence on the bias-reducing effect due to window overlapping collected in Section \ref{mot} and in \cite{scm}. However, this rule to select $\kappa$ is unfeasible unless reliable estimates of $\nu(\tau)$ and $\gamma$ are available. Accordingly, in Appendix B we detail a simple procedure to estimate $\nu(\tau)$ and $\gamma$.

In Section \ref{anres} we also address the problem of the bias-optimal implementation of the PSRV in the more realistic situation where the price process is contaminated by an i.i.d. microstructure noise process at high frequencies. Again, we distinguish between the overlapping and no-overlapping case and derive, for each case, the exact parametric expression of the extra bias term due to microstructure noise. However, in both cases it emerges that this extra term depends not only on some moments of the noise process but also on the drift parameters of the volatility process, which cannot be consistently estimated
over a fixed time horizon (see, e.g., \cite{kankris}). This precludes the possibility of efficiently subtracting the bias due to noise in small samples. As a solution, we suggest  sampling prices on a suitably sparse grid as in the seminal paper by \cite{abdh}, so that the presence of noise becomes
negligible and the bias optimal rule to select the local-window parameter $\kappa$ can still be applied. The efficiency of this solution is
 verified numerically in Section \ref{num} for typical values of the  noise-to-signal ratio.  For completeness, we also analyze the noise bias expression in the no-overlapping case. In particular, we exploit this expression to derive the asymptotic rate of divergence of the PSRV bias as $N$ tends to infinity.

Additionally, as a byproduct of the PSRV bias analysis, in Section \ref{anres}  we quantify the bias reduction following the assumption that the initial value of the volatility process is equal to the long-term volatility parameter, in the case of both the PSRV and the locally averaged realized variance. This is a very common assumption in the literature, typically made in simulation studies where a mean-reverting process drives the spot volatility (see, e.g., among many others, \cite{afl}, \cite{scm}, \cite{v}).

In Section \ref{DimAn} we use a heuristic approach based on dimensional analysis to generalize the rule for the selection of $\kappa$ to the case of a volatility process belonging to the CKLS class (see \cite{ckls}). Specifically, we find that it is optimal, in terms of bias reduction, to select $ \kappa=2\frac{\nu(\tau)}{\sqrt{\xi(\tau)}}$, where $\nu(t)$ is the variance process and $\xi(t)$ is the variance-of-variance process, while $\tau$ is the initial time of the estimation horizon.  Note that in the absence of price and volatility jumps (a condition required for the PSRV to be consistent), the semi-parametric stochastic volatility model where the price is a semimartingale and the volatility is a CKLS process represents a fairly flexible model. In fact, the CKLS framework encompasses a number of widely-used models for financial applications. Indeed, besides the CIR model, which determines the volatility dynamics in the popular Heston model (\cite{heston}) and its generalized version with stochastic leverage  by \cite{veraart2}, the CKLS family includes, e.g., the model by \cite{brschw} and the model by  \cite{cox}, which appear, respectively, in the continuous-time GARCH stochastic volatility model by \cite{nelson} and 3/2 stochastic volatility model by \cite{32}.

  In Section \ref{num}  we perform an extensive numerical study where we test the performance of the feasible rule to select $\kappa$ derived in Section \ref{anres} and generalized in Section \ref{DimAn}. The results confirm that this rule is effective in reducing the PSRV bias.  We underline that this rule does not require the estimation of the drift parameters of the CIR process, which can not be consistently estimated on a fixed time horizon.  Finally, in Section \ref{emp} we illustrate the results of an empirical study, in which we compute PSRV values from high-frequency S\&P 500 prices, selecting $\kappa$ based on the  bias-optimal rule. Section \ref{concl} summarizes our conclusions. Finally, Appendix A contains the proofs and Appendix B illustrates the feasible procedure that we propose to  select $\kappa$ from sample prices.

\section{Preliminary results}\label{mot}

The finite-sample accuracy of the PSRV requires the careful selection of the tuning parameters $\kappa$ and $\lambda$.  In this section we gain some preliminary insight into this issue by performing a numerical study, whose result motivate the analytic investigation of Section \ref{anres}. In particular, we simulate observations from the following data-generating process, where the volatility is a CIR process. Note that this data-generating process is also used for the analytic study in Section \ref{anres}.

 \begin{assumption}\label{Ass2}{\textbf{Data-generating process}}\label{dgp}

For $t \in [0,T]$, $T>0$, the dynamics of the log-price process $p(t)$ and the spot volatility process $\nu(t)$ read:
 
 $$\begin{cases} \displaystyle p(t) = p(0)+ \int_0^t \sqrt{\nu(s)} dW(s)\\ \displaystyle\nu(t) = \nu(0) + \theta \int_0^t  \Big( \alpha -\nu(s)\Big) ds + \gamma\int_0^t \sqrt{\nu(s)} dZ(s)\end{cases},$$

\noindent where $W$ and $Z$ are two Brownian motions on $(\Omega, \mathcal{F}, (\mathcal{F}_t)_{t \ge 0}, P)$, $p(0) \in \mathbb{R}$ is the initial price, and the strictly positive constants $  \nu(0),\theta, \alpha$ and $\gamma $ denote, respectively, the initial volatility and the speed of mean reversion, long-term mean and vol-of-vol parameters. We also assume that $2\alpha \theta > \gamma^2$ to ensure that $\nu(t)$ is a.s. positive $\forall t \in [0,T]$. 

 \end{assumption}
In particular, we simulate one thousand 1-year trajectories of 1-second observations, with a year composed of 252 trading days of 6 hours each. We consider three scenarios determined by the following sets of model parameters: \textit{Set 1}: $(\alpha, \theta,\gamma, \rho,\nu(0))= (0.2,5,0.5,-0.2,0.2)$;  \textit{Set 2}: $(\alpha, \theta,\gamma, \rho,\nu(0))=(0.02,10,0.25, -0.8, 0.03)$;  \textit{Set 3}: $(\alpha, \theta,\gamma, \rho,\nu(0))= (0.2,5,0.5, -0.2, 0.4)$. 
 
  The first set of parameters, \textit{Set 1}, is taken from \cite{scm} and \cite{v} and is used as the baseline scenario.
 The second, \textit{Set 2}, represents the opposite scenario. In fact, the volatility generated by \textit{Set 2} is lower than the volatility generated by \textit{Set 1}, since the long term mean, $\alpha$, and the speed of mean reversion, $\theta$, are, respectively, much lower and much higher than in \textit{Set 1}.
The second scenario is also characterized by a lower volatility of the volatility, which is captured by the parameter $\gamma$ and a more pronounced  leverage effect, which is captured by the correlation parameter $\rho$. The third set of parameters, \textit{Set 3}, differs from the first only in that the initial value of the volatility,  $\nu(0)$, is twice the long term volatility, $\alpha$. In this regard, note that if the initial volatility $\nu(0)$ is equal to $\alpha$, the spot volatility has a constant unconditional mean over time under Assumption \ref{Ass2} (see Appendix A in \cite{bz}). Setting $\nu(0)=\alpha$ is a simplifying assumption typically adopted in numerical studies where a mean-reverting volatility process is used (see, e.g., among many others, \cite{afl}, \cite{scm}, \cite{v}).

We estimate daily values of the PSRV in these three scenarios from simulated prices sampled with a 1-minute frequency. For the estimation, we set $b=-1/2$ and $c=1/4$ \footnote{This choice of $b$ and $c$ satisfies the constraints for asymptotic unbiasedness (see Theorem \ref{th1}). Moreover, note that the selection $b=-1/2$ is also performed in the numerical exercises of \cite{aflwy} and \cite{scm}.} and study the sensitivity of the bias to different values of $\kappa$ and $\lambda$. With respect to $\lambda$, we consider values in the set $(0.0002, 0.0004, 0.0006, 0.0010, 0.0019, 0.0029, 0.0057)$, which correspond to $\Delta_N$ equal to $1,2,3,5,10,15,30$ minutes, respectively, thereby preserving the high-frequency nature of the estimator.  As for $\kappa$, 
we consider values in the set $(0.017,    0.033,    0.05,    0.1,    0.2,    0.4,    0.5,    1,    1.5,    2,    2.5, 3)$, which correspond to $W_N$ equal to (approximately) $5,    10,    15, 30,    60, 120,   150,   300,   450,   600,   750,   900 $ minutes, respectively. Overall, these sets of values for $\lambda$ and $\kappa$ allow us to consider both cases when window overlapping occurs, that is when $W_N > \Delta_N$, and cases when it does not occur, that is when $W_N \le \Delta_N$. Figure \ref{fig1} summarizes the results of the numerical exercise for values of $\kappa$ that lead to a relative bias smaller than an absolute value of $1$. This happens for $\kappa=1.5, 2, 2.5, 3$. Instead, for $\kappa$ smaller than $1.5$, the relative bias rapidly explodes for all values of $\lambda$ considered, as shown in Figure \ref{fig1b}.

\begin{figure}[h!]
\includegraphics[width=\linewidth]{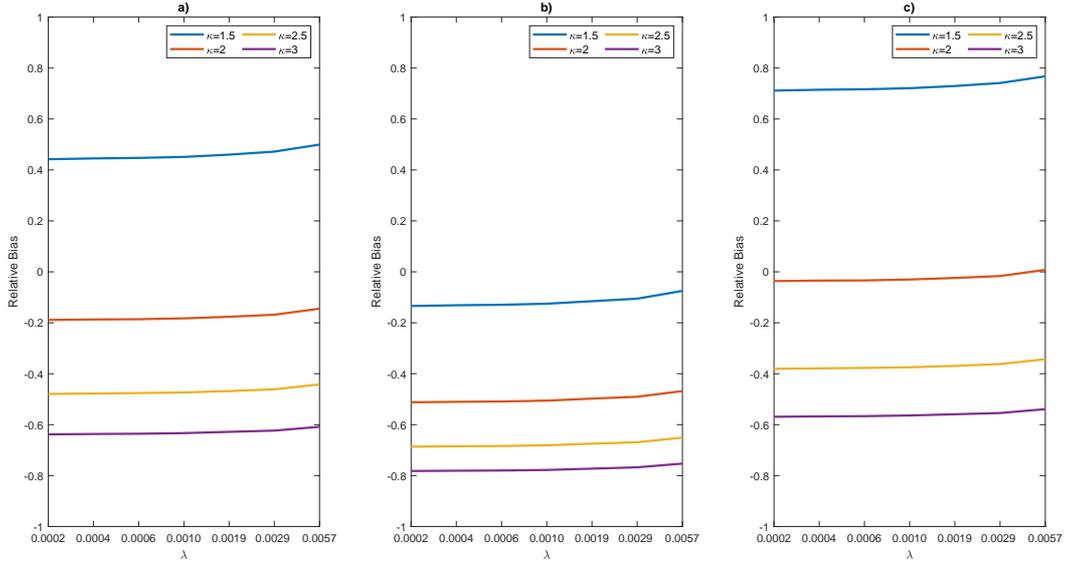}
\caption{Daily PSRV finite-sample relative bias as a function of $\lambda$ for values of $\kappa$ $\in (1.5,2,2.5,3)$ and $\delta_N$ = 1 minute, $b=-1/2$, $c=1/4$. The values of $\lambda$ on the $x$-axis correspond to $\Delta_N$ equal to $j\delta_N$, for $j=1,2,3,5,10,15,30$. The panels refer to the following parameter sets: a)  \textit{Set 1}: $(\alpha, \theta,\gamma, \rho,\nu(0))=(0.2,5,0.5,-0.2,0.2)$; b)  \textit{Set 2}: $(\alpha, \theta,\gamma, \rho,\nu(0))=(0.03,10,0.25,-0.8,0.03)$; and  c) \textit{Set 3}:  $(\alpha, \theta,\gamma, \rho,\nu(0))=(0.2,5,0.5,-0.2,0.4)$.} \label{fig1}
\end{figure}

\begin{figure}[h!]
\includegraphics[width=\linewidth]{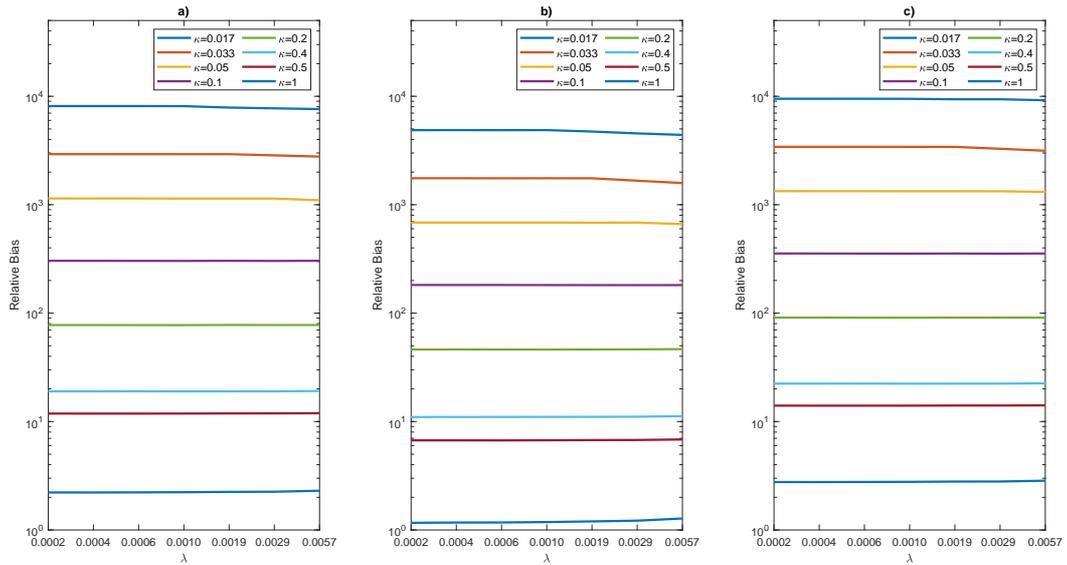}
\caption{Daily PSRV finite-sample relative bias as a function of $\lambda$ for values of $\kappa$ $\in (0.017, 0.033, 0.05, 0.1, 0.2, 0.4, 0.5, 1)$ and $\delta_N$ = 1 minute, $b=-1/2$, $c=1/4$. The values of $\lambda$ on the $x$-axis correspond to $\Delta_N$ equal to $j\delta_N$, for $j=1,2,3,5,10,15,30$. The panels refer to the following parameter sets: a)  \textit{Set 1}: $(\alpha, \theta,\gamma, \rho,\nu(0))=(0.2,5,0.5,-0.2,0.2)$; b)  \textit{Set 2}: $(\alpha, \theta,\gamma, \rho,\nu(0))=(0.03,10,0.25,-0.8,0.03)$; and  c) \textit{Set 3}:  $(\alpha, \theta,\gamma, \rho,\nu(0))=(0.2,5,0.5,-0.2,0.4)$. The y-axis is expressed in log-scale.} \label{fig1b}
\end{figure}

As one can easily verify, the values of $\kappa$ in Figure \ref{fig1} imply that local windows for estimating the spot volatility overlap, for all values of $\lambda$ considered. Consequently, Figure \ref{fig1} tells us that window overlapping is crucial in order to optimize the relative bias of the PSRV  even when {$\Delta_N >> \delta_N$}. This confirms the numerical results in \cite{scm}. Furthermore,  one can also easily check that the combinations of $\lambda$ and $\kappa$ such that overlapping does not occur are all included in Figure \ref{fig1b}, where the relative bias is larger than $1$ and rapidly increases as $\kappa$ becomes smaller, for any $\lambda$  considered, reaching the order magnitude  $10^3$ when $W_N$ equals $5$ minutes.

Moreover, focusing on Figure \ref{fig1}, it is worth noting that the bias-optimal selection of $\kappa$ is strongly dependent on the parameters of the data-generating process. In fact, the same value of $\kappa$ may lead to very different values of the bias in the three scenarios considered: for instance, the selection $\kappa=2$ leads to a relative bias of approximately  $-20\%$ in scenario 1, $- 50\%$ in scenario 2 and $-3\%$ in scenario 3.
At the same time, Figure \ref{fig1} also tells that the bias is not very sensitive to the selection of $\lambda$. Finally, Panel a) of Figure \ref{fig1} suggests that, for all values of $\lambda$ considered, the bias-optimal value of $\kappa$ is close to 2 in scenario 1. This indication is in line with the numerical findings by \cite{scm}, where, based on the same parameter set as in scenario 1, the optimal value of $\kappa$ is found to be approximately equal to 2.
 
In sum, our preliminary numerical study shows not only that allowing for window overlapping is critical to avoid obtaining highly biased vol-of-vol estimates, but also that the selection of $\kappa$ is crucial {for optimizing} the PSRV finite-sample bias and, in particular, it is critical to uncover the dependence between the bias-optimal value of $\kappa$ and the parameters of the data-generating process. Gaining a more in-depth understanding of these numerical findings is what motivates our analytic study in the next section.

 \section{Analytic results}\label{anres}
  
 In this section we analyze the  PSRV  finite-sample bias  in a parametric setting, namely assuming that the volatility is a CIR process, so that a fully explicit formula of the latter can be obtained. We treat the overlapping case (i.e., the case when  $W_N > \Delta_N$) and the no-overlapping case (i.e., $W_N\le\Delta_N$) separately  as, in general, the finite-sample bias expression differs in the two cases, independently of the parametric model used. Lemma \ref{explbias}  details the explicit expression of the PSRV bias for $N$ fixed.

\begin{lemma}\label{explbias}
Let Assumption \ref{Ass2} hold, with $Z$ independent of $W$. Further, let $N$ be fixed. If $W_N \le \Delta_N$, the bias of the PSRV in Definition \ref{psrv} reads

 \begin{equation}\label{truebias}
 E\Big[PSRV_{[\tau,\tau+h],N}- \langle\nu,\nu\rangle_{[\tau,\tau+h]} \Big] = \displaystyle \gamma^2 \alpha h (A_N-1) +  \gamma^2 \Big( E[\nu(\tau)] - \alpha \Big) \frac{1-e^{-\theta h}}{\theta} (B_N-1) + C_ N. 
 \end{equation}

\noindent Instead, if $W_N > \Delta_N$, it reads

\begin{equation}\label{truebiasoverl}
E\Big[PSRV_{[\tau,\tau+h],N}- \langle\nu,\nu\rangle_{[\tau,\tau+h]} \Big] = \displaystyle \gamma^2 \alpha h (A_N-1) +  \gamma^2 \Big( E[\nu(\tau)] - \alpha \Big) \frac{1-e^{-\theta h}}{\theta} (B_N-1) + C_N +O_N.
\end{equation}

The parametric expressions of $A_N$, $B_N$, $C_N$ and $O_N$ are rather cumbersome and thus are reported in the Appendix (see, respectively, equations (\ref{AN}), (\ref{BN}), (\ref{CN}) and (\ref{ON}) in the proof to Lemma \ref{explbias}).
\end{lemma}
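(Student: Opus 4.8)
The plan is to compute $E\big[PSRV_{[\tau,\tau+h],N}\big]$ exactly and subtract $E\big[\langle\nu,\nu\rangle_{[\tau,\tau+h]}\big]$, the latter being immediate. Indeed, under Assumption \ref{Ass2} one has $\langle\nu,\nu\rangle_{[\tau,\tau+h]}=\gamma^2\int_\tau^{\tau+h}\nu(s)\,ds$, and the CIR mean $E[\nu(s)]=\alpha+(E[\nu(\tau)]-\alpha)e^{-\theta(s-\tau)}$ integrates to $\gamma^2\alpha h+\gamma^2(E[\nu(\tau)]-\alpha)\tfrac{1-e^{-\theta h}}{\theta}$, which is exactly the quantity subtracted off inside the two ``$-1$'' pieces of (\ref{truebias})--(\ref{truebiasoverl}). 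Hence the entire content of the lemma is the evaluation of $E[PSRV]$, and the same two terms reappear multiplied by the factors $A_N$ and $B_N$.

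The first step is to exploit the independence of $W$ and $Z$ by conditioning on the whole volatility path, which I denote by $\mathcal{F}^\nu$. Conditionally on $\mathcal{F}^\nu$, each elementary log-price increment over a grid interval $[t_{j-1},t_j]$ is centered Gaussian with variance $\sigma_j^2:=\int_{t_{j-1}}^{t_j}\nu(u)\,du$, and increments over disjoint intervals are independent. Writing $\hat\nu_i:=\hat\nu_N(\tau+i\Delta_N)$ and splitting $\hat\nu_i=m_i+\varepsilon_i$, with conditional mean $m_i=E[\hat\nu_i\mid\mathcal{F}^\nu]=\tfrac{1}{k_N\delta_N}\sum_{j\in W_i}\sigma_j^2$ and $E[\varepsilon_i\mid\mathcal{F}^\nu]=0$, the cross term drops out and
\begin{equation*}
E\big[(\hat\nu_i-\hat\nu_{i-1})^2\mid\mathcal{F}^\nu\big]=(m_i-m_{i-1})^2+\mathrm{Var}\big(\varepsilon_i-\varepsilon_{i-1}\mid\mathcal{F}^\nu\big).
\end{equation*}
By conditional Gaussianity $\mathrm{Var}((\Delta p_j)^2\mid\mathcal{F}^\nu)=2\sigma_j^4$, so the variance term reduces to sums of $\sigma_j^4$. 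In the no-overlapping case $W_N\le\Delta_N$ the two local windows share no increment, so it equals $\tfrac{2}{(k_N\delta_N)^2}\big(\sum_{j\in W_i}\sigma_j^4+\sum_{j\in W_{i-1}}\sigma_j^4\big)$; in the overlapping case $W_N>\Delta_N$ the shared increments in $W_i\cap W_{i-1}$ generate an additional negative covariance $-\tfrac{4}{(k_N\delta_N)^2}\sum_{j\in W_i\cap W_{i-1}}\sigma_j^4$. This is precisely the origin of the extra term $O_N$ that appears only in (\ref{truebiasoverl}).

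The second step is to take the outer expectation over the volatility path. Since everything above is linear in products of at most two integrals $\int\nu$, only the first and second moments of the CIR process are needed: the mean already used, and the autocovariance, which admits the closed form $\mathrm{Cov}(\nu(u),\nu(v))=e^{-\theta|v-u|}\mathrm{Var}(\nu(u\wedge v))$ together with the standard expression for $\mathrm{Var}(\nu(\cdot))$. Consequently $E[\sigma_j^2]$, $E[\sigma_j^4]$ and $E[\sigma_j^2\sigma_l^2]$ all reduce to single and double time integrals of exponential-polynomial integrands, evaluable in closed form. I would carry these out keeping the natural separation, inherited from $E[\nu(u)]$, into a part proportional to $\alpha$ and a part proportional to $E[\nu(\tau)]-\alpha$.

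The final step is the assembly: summing over $i=1,\dots,\lfloor h/\Delta_N\rfloor$ and regrouping. The $E[(m_i-m_{i-1})^2]$ terms, which carry the genuine signal about $\langle\nu,\nu\rangle$, collect into the factors $A_N$ and $B_N$ multiplying $\gamma^2\alpha h$ and $\gamma^2(E[\nu(\tau)]-\alpha)\tfrac{1-e^{-\theta h}}{\theta}$ respectively; the $\sigma_j^4$ variance terms, which are a pure artifact of the pre-estimation error, collect into $C_N$; and the overlap covariance collects into $O_N$, giving (\ref{AN})--(\ref{ON}). I expect the main obstacle to be bookkeeping rather than any deep step: one must track exactly which elementary increments belong to $W_i$, to $W_{i-1}$, and to their intersection (the overlap pattern is governed by $\lfloor W_N/\Delta_N\rfloor$ and by the floor-function snapping of window endpoints to the grid), and then sum the resulting exponential-in-grid-time series, which telescope or reduce to geometric sums yielding the stated closed forms. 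The overlapping case is the more delicate, precisely because $W_i\cap W_{i-1}\neq\emptyset$ forces one to isolate its contribution into $O_N$ without disturbing the identification of $A_N$, $B_N$ and $C_N$.
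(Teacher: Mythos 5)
Your approach is essentially the paper's: condition on the volatility path, use the conditional Gaussianity of the price increments (so that second and fourth conditional moments reduce to $\sigma_j^2\sigma_l^2$ and $3\sigma_j^4$), then close everything with the CIR mean and autocovariance and regroup. The paper organizes the algebra as $E[RV_i^2]+E[RV_{i-1}^2]-2E[RV_iRV_{i-1}]$ rather than through your signal-plus-noise split $\hat\nu_i=m_i+\varepsilon_i$, but the two are algebraically identical.

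One attribution in your write-up is imprecise and, if taken literally, would produce a wrong $O_N$. You assert that the extra overlap term originates ``precisely'' from the conditional covariance $-\tfrac{4}{(k_N\delta_N)^2}\sum_{j\in W_i\cap W_{i-1}}\sigma_j^4$ of the pre-estimation errors. That is only part of it. The term $E[m_im_{i-1}]=\tfrac{1}{(k_N\delta_N)^2}\int_{W_i}\int_{W_{i-1}}E[\nu(u)\nu(v)]\,du\,dv$ also changes its closed form when the windows overlap, because the integration domain then straddles the diagonal and the kernel $e^{-\theta|u-v|}$ must be split there; in the disjoint case one has $v<u$ throughout and a single branch suffices. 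In the paper this is why the cross term is decomposed into four pieces $O_1,\dots,O_4$ (new part, shared part, old part, and the squared shared part), and the bulk of $O_N$ comes from $O_2$, $O_3$, $O_4$ — the second-moment cross products over the overlap-adjusted domains — not from the fourth-moment contribution alone. So long as you evaluate $E[m_im_{i-1}]$ with the correct piecewise kernel in the overlapping case (your stated autocovariance formula with $|v-u|$ and $u\wedge v$ does support this), your computation will land on the paper's $A_N$, $B_N$, $C_N$, $O_N$; but the difference between the two regimes must be routed into $O_N$ from both sources, not just from the error covariance.
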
 

\begin{proof}
See Appendix A.
\end{proof}

\begin{remark}
The bias in the case $W_N \le \Delta_N$ differs from that in the case $W_N > \Delta_N $ for the presence of the extra term $O_N$, which appears due to the fact that the parametric expression of  $E[RV (\tau+i\Delta_N, k_N\delta_N)RV (\tau+i\Delta_N-\Delta_N, k_N\delta_N)]$ differs in the two cases. See the proof of Lemma \ref{explbias} for the definition of the quantity $RV$, which is only used in the Appendix, and further details.
\end{remark}

\begin{remark}
The explicit bias expression in Lemma \ref{explbias} is derived under the simplifying assumption that $Z$ is independent of $W$, which rules out leverage effects.  The sensitivity of the PSRV bias to the presence of leverage effects is studied numerically in Section \ref{num}, where simulations suggest that such effects are a negligible source of finite-sample bias, thereby preserving the practical relevance of the results derived in this section. In the literature, numerical and empirical studies of the impact of leverage effects on analytical results derived under the no-leverage assumption are found, e.g., in \cite{bnvj}, \cite{csda} and references therein.
\end{remark}

 In the next subsections we investigate the existence of a rule to select the tuning parameters $\kappa$ and $\lambda$ in both the cases $W_N > \Delta_N$ and $W_N \le \Delta_N$. To do so, we first isolate the leading term of the bias in each case and then verify whether the latter can be canceled by an $ad$ $hoc$ selection of tuning parameters. We address overlapping case first, as it is the one relevant for practical applications, based on the results of the simulation studies in Section \ref{mot} and  in \cite{scm}.

\subsection{The relevant case for practical applications: $W_N > \Delta_N$}  \label{OVER}

When $W_N > \Delta_N$, the natural expansion of the bias as the number of sampled price observations $N$  tends to infinity is precluded, because the consistency of the PSRV requires that $\frac{W_N}{\Delta_N}\to 0$ as $N \to \infty$. Thus, we determine the leading term of the bias through an alternative asymptotic expansion, which exploits some natural, non-restrictive constraints on the magnitude of the tuning parameter $\lambda$ and the time horizon $h$. Specifically, we first regard the bias in equation (\ref{truebiasoverl}) as a function of $\lambda$ and we perform its Taylor expansion with base point  $\lambda=0$. Then, regarding each term of this expansion as a function of $h$, we perform their Taylor expansions with base point $h=0$. The choice of the base point $\lambda=0$ is supported by the fact that  the largest feasible values of $\lambda$ {are} very small, e.g.,  on the order  of  $10^{-3}$ when $c<1/2$ and $\delta_N$ is equal to one minute (see Figure \ref{fig1} for the case $c=1/4$).  Note that a  value of $\lambda$ is  feasible if it satisfies  $\Delta_N:=\lambda\delta_N^c < h$. The choice of base point $h=0$ is instead supported by the fact that in the literature on high-frequency econometrics, the typical time horizon used to estimate the integrated quantities is one trading day, i.e., $h=1/252\approx 4 \cdot 10^{-3}$. The order of this sequential expansion is rather natural: intuitively, we first take the limit $\lambda \to$ 0 to approximate the integral of the vol-of-vol in an infill-asymptotics sense, then take the limit $h \to$ 0 to localize the estimate of the integral near the initial time $\tau$. This approach leads to the following result.

\begin{theorem}\label{th4}
 Let Assumption \ref{Ass2} hold, with $Z$ independent of $W$. Further, let $W_N>\Delta_N$. Then, for  $N$ fixed, as $\lambda\to 0, h\to 0$ 
 
\begin{equation}\label{overlapuncond}
E\Big[PSRV_{[\tau,\tau+h],N} - \langle\nu,\nu\rangle_{[\tau,\tau+h]}  \Big] =\begin{cases} \displaystyle  \Bigg(  \frac{ 4E[\nu(\tau)]^2}{ {\kappa}^2\delta_N^{1+2b}} -\gamma^2 E[\nu(\tau)]    \Bigg)h  + O(h^{1-b}) +O(\lambda)  \quad \textit{if} \quad b\ge-1/2, c<-b \\   \displaystyle -\gamma^2 E[\nu(\tau)]  h  + O(h^{-2b}) +O(\lambda)  \quad \textit{if} \quad b<-1/2, c<1+b \\   
  \end{cases}.
\end{equation}

\noindent Moreover, let $(\mathcal{F}^{\nu}_t)_{t\ge0}$ be the natural filtration associated with the process $\nu$. Then, for $N$ fixed, as $\lambda\to 0, h\to 0$

\begin{equation}\label{overlapcond}
E\Big[PSRV_{[\tau,\tau+h],N} - \langle\nu,\nu\rangle_{[\tau,\tau+h]}  | \mathcal{F}^{\nu}_\tau\Big] = \begin{cases} \displaystyle  \Bigg(    \frac{4\nu(\tau)^{2 }}{ {\kappa}^2\delta_N^{1+2b}}  -   \gamma^2\nu(\tau)      \Bigg)h  + O(h^{1-b}) +O(\lambda)  \quad \textit{if} \quad b\ge-1/2, c<-b \\ \displaystyle  -\gamma^2\nu(\tau)   h  + O(h^{-2b}) +O(\lambda)  \quad \textit{if} \quad b<-1/2, c<1+b \\   
  \end{cases}.
\end{equation}

\end{theorem}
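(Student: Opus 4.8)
The plan is to start from the exact finite-sample bias of Lemma \ref{explbias}, equation (\ref{truebiasoverl}), and substitute the explicit parametric forms of $A_N$, $B_N$, $C_N$ and $O_N$ reported in the Appendix (equations (\ref{AN})--(\ref{ON})). Using $\delta_N=h/N$ together with $k_N\sim\kappa\,\delta_N^{b}$, $\lambda_N\sim\lambda\,\delta_N^{c-1}$, $\Delta_N=\lambda_N\delta_N$ and $W_N=k_N\delta_N$, I would rewrite the whole right-hand side as an explicit function of the two free quantities $\lambda$ and $h$, with $N$ fixed. Since the consistency expansion $N\to\infty$ is ruled out by the overlapping constraint $W_N>\Delta_N$ (Remark \ref{rem1}), I would instead carry out the sequential expansion dictated in the text: first regard the bias as a function of $\lambda$ and Taylor-expand about $\lambda=0$, retaining the $\lambda$-independent part and relegating the remainder to $O(\lambda)$; then regard each surviving coefficient as a function of $h$ (through the explicit $h$-factors and through $\delta_N=h/N$) and Taylor-expand about $h=0$, keeping $\delta_N$ explicit in the coefficients while organizing the errors by their order in $h$.

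The substantive step is to identify which contributions survive this double limit. The terms $\gamma^2\alpha h(A_N-1)$ and $\gamma^2(E[\nu(\tau)]-\alpha)\tfrac{1-e^{-\theta h}}{\theta}(B_N-1)$ together encode the discrepancy between $E[PSRV]$ and the true quadratic variation $E[\langle\nu,\nu\rangle_{[\tau,\tau+h]}]=\gamma^2\int_\tau^{\tau+h}E[\nu(s)]\,ds$, which to leading order equals $\gamma^2E[\nu(\tau)]h$; this is the source of the $-\gamma^2E[\nu(\tau)]h$ term. The genuinely new contribution is $O_N$, the overlapping correction, which originates from the cross-moment $E[RV(\tau+i\Delta_N,k_N\delta_N)RV(\tau+i\Delta_N-\Delta_N,k_N\delta_N)]$ of realized variances computed on overlapping local windows. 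Its dominant part comes from the quarticity of the locally averaged realized variance carried by the \emph{symmetric difference} of two consecutive windows, whose size is governed by $\lambda_N$ rather than by $k_N$; after summing over the $\lfloor h/\Delta_N\rfloor$ grid points the $\lambda$-factors cancel exactly and one obtains $\tfrac{4\,E[\nu(\tau)]^2}{\kappa^2\delta_N^{1+2b}}\,h$, with a $\lambda$-independent leading coefficient as claimed. I would then invoke the CIR moment identities, in particular $E[\nu(s)]=\alpha+(E[\nu(\tau)]-\alpha)e^{-\theta(s-\tau)}$ and the companion second-moment formula, to reduce every coefficient to the advertised combination of $\alpha$, $\theta$, $\gamma$ and $E[\nu(\tau)]$, discarding terms that are $o$ of the retained ones as $h\to0$.

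The dichotomy between the two branches of (\ref{overlapuncond}) is then a pure order-of-magnitude comparison. Writing the overlapping term as $\tfrac{4E[\nu(\tau)]^2}{\kappa^2\delta_N^{1+2b}}h\asymp h^{-2b}$ (via $\delta_N=h/N$) and comparing it with the quadratic-variation term $\asymp h$, one sees that the former dominates precisely when $-2b\le 1$, i.e. $b\ge-1/2$, the threshold $b=-1/2$ making $\delta_N^{1+2b}=1$, whereas for $b<-1/2$ it is absorbed into the remainder $O(h^{-2b})$. Simultaneously, the admissible ranges of $c$ ($c<-b$ versus $c<1+b$) are exactly what keeps the window ratio in the overlapping regime and guarantees that the cross-terms neglected in the $\lambda$- and $h$-expansions are of genuinely smaller order, yielding the stated errors $O(h^{1-b})$ and $O(h^{-2b})$. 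For the conditional statement (\ref{overlapcond}) I would rerun the identical expansion after conditioning on $\mathcal{F}^{\nu}_{\tau}$: by the Markov property of the CIR process the only change is that the initial-value-dependent moments are evaluated given $\nu(\tau)$, so each occurrence of $E[\nu(\tau)]$ is replaced by $\nu(\tau)$, while the structure of the expansion and the case split are untouched.

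The main obstacle will be the bookkeeping of the expansion of $O_N$: its Appendix expression is cumbersome and, before the limit, many contributions are individually of order $\delta_N^{-(1+2b)}$ or larger, so the delicate point is to exhibit the cancellations, in particular the telescoping over the symmetric difference of consecutive windows that annihilates the $k_N^2$ and $k_N$ pieces and leaves only the $\lambda_N$-order piece, and to verify that the covariance (autocorrelation-decay) contributions, which reconstruct a fraction of the true quadratic variation, are pushed into $O(\lambda)$ and do not corrupt the leading coefficient. Getting the surviving power of $\delta_N$ right, and hence the clean threshold $b=-1/2$, is the crux; once that is established the remainder is routine Taylor expansion and order comparison.
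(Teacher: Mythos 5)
Your proposal follows essentially the same route as the paper: start from the exact parametric bias of Lemma \ref{explbias} in the overlapping case, perform the sequential Taylor expansion first in $\lambda$ about $0$ and then in $h$ about $0$ (with $\delta_N=h/N$, $N$ fixed), and obtain the conditional version by replacing $E[\nu(\tau)]$ with $\nu(\tau)$ via the Markov property. The only difference is that the paper delegates the symbolic bookkeeping to \textit{Mathematica}, whereas you outline the hand computation and correctly identify the structural origin of the $\lambda$-independent leading term $4E[\nu(\tau)]^2\kappa^{-2}\delta_N^{-(1+2b)}h$ in the quarticity over the symmetric difference of consecutive windows.
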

 
\begin{proof}
See Appendix A.  
\end{proof}

\begin{remark}
The expansion in Theorem \ref{th4} is performed under the asymptotic constraints on rates $b$ and $c$ that ensure the asymptotic unbiasedness of the PSRV under Assumption \ref{Ass2} (see Theorem \ref{th1}).
\end{remark} 

\begin{remark}
The conditional bias expansion in equation (\ref{overlapcond}) allows the dominant term of the bias  {to be expressed} in terms of $\nu(\tau)$ and $\gamma$, two quantities  {that} can be consistently estimated over a fixed time horizon. This is crucial for the existence of a feasible procedure to select $\kappa$, as detailed below. Instead, the unconditional expression in equation (\ref{overlapuncond}) depends on $E[\nu(\tau)]$, whose parametric expression in turn depends on the drift parameters of the volatility and thus cannot be consistently estimated over a fixed time horizon (see, e.g., \cite{kankris}). In particular, it holds
$E[\nu(\tau)]=(\nu(0)-\alpha)e^{-\theta \tau} +\alpha$ (see equation (4) in Section 2.2.1 of \cite{bz}).
\end{remark}

Figure \ref{fig2} compares the true finite-sample bias of the daily PSRV in equation \ref{truebiasoverl} with the dominant term of the expansion in equation \ref{overlapuncond}  as functions of the tuning parameter $\kappa$. Specifically, the panels refer to the three parameter sets already used in Section \ref{mot}, that is \textit{Set 1} (panel a)), \textit{Set 2} (panel b)), and \textit{Set 3} (panel c)). Note that we have set $b=-1/2$, $c=1/4$, $\lambda=0.0006$, $h=1/252$, and $N=360$. 
 The corresponding $\delta_N$ and $\Delta_N$ are equal to 1 minute and  (approximately)  3 minutes, as we consider 6-hr trading days. The approximation of the true bias with the dominant term of the expansion is very accurate.

\begin{figure}[h]
\includegraphics[width=\linewidth]{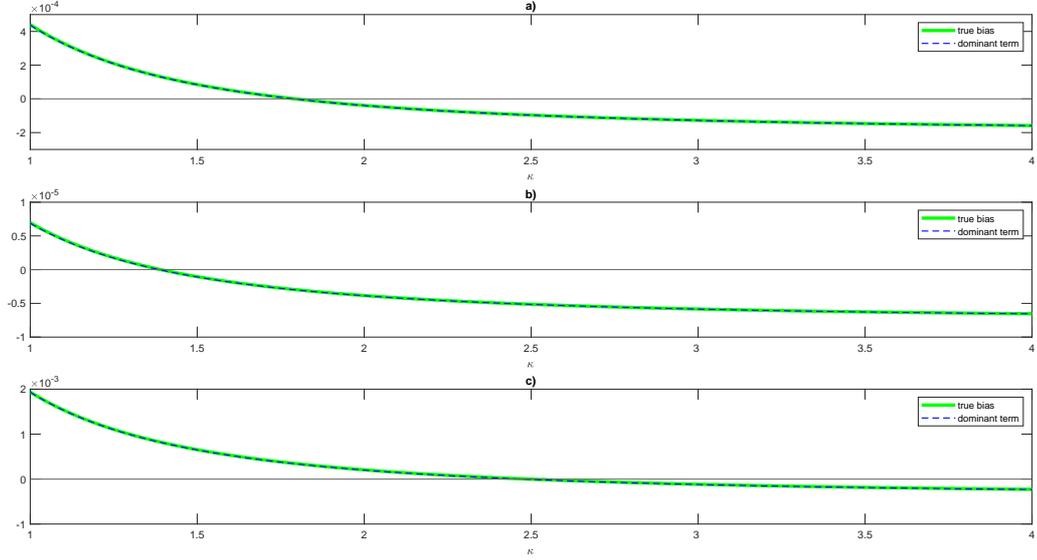}
\caption{Comparison between the true finite-sample bias of the daily PSRV in equation \ref{truebiasoverl} and the dominant term of the expansion in Theorem \ref{th4}  as functions of $\kappa$ for $b=-1/2$, $c=1/4$, 
$\lambda=0.0006$, and $N=360$. Panel a) refers to the parameter set $(\alpha, \theta,\gamma, \nu(0))
=(0.2,5,0.5,0.2)$; panel b) to $(\alpha, \theta,\gamma, \nu(0))=(0.03,10,0.25,0.03)$, and panel c) to $(\alpha, \theta,\gamma,  \nu(0))=(0.2,5,0.5,0.4)$. For panel c) we consider $\tau = 5$ days, while  {the bias terms  in panels a) and b) are independent} of $\tau$.} \label{fig2}
\end{figure} 
  
 Based on the conditional bias expansion in equation (\ref{overlapcond}), we make the following considerations on the optimal selection of tuning parameters in finite samples. 
 First, we note that the dominant term of the bias can be annihilated simply by suitably selecting $\kappa$ for any feasible value of $\lambda$ when $b\ge -1/2$, $c<-b$. Instead, when $b<-1/2$, $c<1+b$, the dominant term of the bias is independent of $\kappa$ and $\lambda$. Specifically, when $b\ge -1/2$, $c<-b$, the suitable selection is
\begin{equation}\label{kappaconB}  
\kappa=   \frac { 2\sqrt{\nu(\tau) }}{\hat{\gamma} \delta_N^{b+1/2}}.
\end{equation}
 
 { However, since $\kappa$ is a tuning parameter, it is not allowed to depend on $N$. Therefore, the only admissible choice is $b=-1/2$ and $c<1/2$,  so that the suitable selection becomes}  
 
 \begin{equation}\label{kopt}
\kappa= \kappa^*:=  \displaystyle \frac{2 \sqrt{\nu(\tau)}}{\gamma}. 
 \end{equation}
 
Further, if $\nu(0)=\alpha$, then $E[\nu(\tau)]=\alpha$ (see equation (4) in Section 2.2.1 of \cite{bz}) and thus, based on equation (\ref{overlapuncond}), it is immediate to see that the bias-optimal value of $\kappa$ reduces to $ \displaystyle \frac{2 \sqrt{\alpha}}{\gamma}.$ Interestingly,  this analytic result  supports the optimal selections of $b$ and $\kappa$ determined numerically in the literature.
Indeed, for the first parameter set in the numerical exercise in Section \ref{mot}, \textit{Set 1}, which is also used in \cite{scm}, $\kappa^*$ is  equal to $1.79$, a value compatible with the numerical result in \cite{scm}, where the optimal $\kappa$ is said to be approximately equal to $2$. 
 Note also that the numerical studies in \cite{aflwy}, \cite{scm} both select $b=-1/2$.  With regard to this selection of $b$, the following remark is in order.

\begin{remark}
The selection $b=-1/2$ does not satisfy the consistency constraint in Proposition \ref{psrv}. However, to achieve consistency, it is sufficient to select $b=-1/2+\epsilon$ and $c=1/4 -\epsilon$, with $\epsilon$ strictly positive but arbitrarily small and, for such selection of $b$,
bearing in mind (\ref{kappaconB}),  the impact of $\epsilon$ on the optimal selection of $\kappa$ will be negligible in finite-sample exercises. Therefore, in finite-sample exercises we can select $\kappa=\kappa^*$ as in (\ref{kopt}).
\end{remark}

Furthermore, the following remark regarding the selection $\kappa=\kappa^*$ is in order.

\begin{remark}
 {The overlapping condition  $W_N>\Delta_N$ implies a constraint on the price grid $\delta_N$. In particular, if $\kappa=\kappa^*$, for $b=-1/2$, $c<1/2$,  $W_N>\Delta_N$ is equivalent to $\delta_N > \delta^*:=\displaystyle \Big( \frac{\kappa^*}{ \lambda}\Big)^{\frac{1}{c-1/2}} $. The threshold $\delta^*$ is very small  for typical orders of magnitude of $\alpha, \theta $ and $\gamma$,  $h$ corresponding to one trading day and any feasible value of $\lambda$.
 For example, for the values of the  parameters in \textit{Set 1} (see Section \ref{mot}), $\lambda=0.0006$ and $c=1/4$ (so that, if $\delta_N=1$ minute, then $\Delta_N:=\lambda\delta_N^c \approx 3$ minutes), we have $\delta^*=7.5 \cdot 10^{-8}$ seconds and thus the constraint $\delta_N>\delta^*$ is largely satisfied at the most commonly  available price sampling frequencies.}  
 \end{remark}

However, Equation (\ref{kopt}) implies that the bias-optimal selection  $\kappa:=\kappa^{*}$ is unfeasible unless reliable estimates of $\nu(\tau)$ and $\gamma$ are available. In Appendix B we detail a simple feasible procedure to obtain $\kappa^{*}$. In a nutshell, the procedure is as follows. First, we estimate $\nu(\tau)$ using the Fourier spot volatility estimator by \cite{mm}. Then we estimate $\gamma$ via a simple indirect inference method.

\subsection{The case $W_N \le \Delta_N$} 
 
The finite-sample bias expression for $W_N \le \Delta_N$ in equation (\ref{truebias}) is the starting point to derive the asymptotic constraints on rates $b$ and $c$ that ensure the asymptotic unbiasedness of the PSRV. In this regard, we obtain the following result, which is based on the asymptotic expansion of the bias in the limit $N \to \infty$.

\begin{theorem}\label{th1}
Let Assumption \ref{Ass2} hold, with $Z$ independent of $W$.  Then, if $b\ge -1/2$ and $c < -b$ or $b<-1/2$ and $c<1+b $, $\frac{W_N}{\Delta_N} \to 0 $ as $N \to \infty$ and the PSRV as given in Definition \ref{psrv} is asymptotically unbiased, i.e.,

 $$ E\Big[PSRV_{[\tau,\tau+h],N} - \langle\nu,\nu\rangle_{[\tau,\tau+h]}  \Big] \to 0  \quad \textit{as} \quad N \to \infty. $$

\noindent In particular, as $N \to \infty$,    
\begin{eqnarray}\label{dom}
 E\Big[PSRV_{[\tau,\tau+h],N} - \langle\nu,\nu\rangle_{[\tau,\tau+h]}  \Big]= a_1 \Delta_N+ a_2 \frac{1}{k_N\Delta_N}+a_3 \frac{k_N \delta_N}{\Delta_N} + o\Big(\Delta_N\Big) + o\Big(\frac{1}{k_N\Delta_N}\Big) + o \Big(\frac{k_N \delta_N}{\Delta_N}\Big),   
\end{eqnarray}

where:
\begin{eqnarray*} 
&&a_1=   - \frac{\theta }{2} \gamma^2\alpha h + \frac{\theta}{2} \gamma^2 (E[\nu(\tau)] -\alpha)\frac{1-e^{-\theta h}}{\theta} + \frac{\theta}{2} (1-e^{-2 \theta h})\Big[(E[\nu(\tau)]-\alpha)^2+\frac{\gamma^2}{\theta}\Big(\frac{\alpha}{2}-E[\nu(\tau)]\Big)\Big]   , \\
&&a_2= \frac{2}{\theta}\gamma^2\alpha h + \frac{4}{\theta}\gamma^2 (E[\nu(\tau)] -\alpha)\frac{1-e^{-\theta h}}{\theta} +\frac{2}{ \theta }(1-e^{-2 \theta h})\Big[(E[\nu(\tau)]-\alpha)^2+\frac{\gamma^2}{\theta}\Big(\frac{\alpha}{2}-E[\nu(\tau)]\Big)\Big]     \\ 
&& \quad\quad +  4 \alpha^2 h +\frac{8 \alpha (E[\nu(\tau)]-\alpha)(1-e^{-\theta h})}{\theta  },\\
&& a_3=-\gamma^2 (E[\nu(\tau)] -\alpha)\frac{1-e^{-\theta h}}{\theta}.   
\end{eqnarray*}

\end{theorem}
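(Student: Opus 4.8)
The plan is to start from the exact finite-sample bias formula established in Lemma \ref{explbias} for the no-overlapping case, namely equation (\ref{truebias}), together with the explicit parametric expressions of $A_N$, $B_N$ and $C_N$ recorded in (\ref{AN}), (\ref{BN}) and (\ref{CN}). Since the right-hand side of (\ref{truebias}) is a completely explicit (if cumbersome) function of the CIR parameters and of the three tuning quantities $\delta_N$, $k_N$ and $\Delta_N$, the entire theorem reduces to a deterministic asymptotic analysis of this expression as $N\to\infty$, i.e., as $\delta_N\to0$, $k_N\to\infty$, $\Delta_N\to0$ with $W_N=k_N\delta_N\to0$. No probabilistic argument beyond Lemma \ref{explbias} is needed: once the moments have been computed, everything is calculus.

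The core of the argument is a Taylor expansion of $A_N-1$, $B_N-1$ and $C_N$ in the small quantities $\delta_N$, $\Delta_N$ and $1/k_N$. I would first expand all the exponential factors of the form $e^{-\theta\Delta_N}$, $e^{-\theta k_N\delta_N}$ and $e^{-2\theta\Delta_N}$ appearing in these blocks, retaining enough orders to capture the first non-vanishing contribution of each, and then divide by the small denominators (such as $k_N\delta_N$ and $\Delta_N$) occurring in the expressions. The key structural fact, verified by inspection of (\ref{AN})--(\ref{CN}), is that after this expansion every surviving term organizes itself into exactly one of three scales: $\Delta_N$, $\tfrac{1}{k_N\Delta_N}$ and $\tfrac{k_N\delta_N}{\Delta_N}=\tfrac{W_N}{\Delta_N}$. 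Grouping the prefactors $\gamma^2\alpha h$, $\gamma^2(E[\nu(\tau)]-\alpha)\tfrac{1-e^{-\theta h}}{\theta}$ and the internal constants of $C_N$ according to these three scales yields the coefficients $a_1$, $a_2$ and $a_3$ stated in (\ref{dom}), everything else being collected into the remainders $o(\Delta_N)$, $o(\tfrac{1}{k_N\Delta_N})$ and $o(\tfrac{k_N\delta_N}{\Delta_N})$. The three scales admit a natural reading that serves as a consistency check: the $\Delta_N$ term is the Riemann-discretization error of the integral $\gamma^2\int_\tau^{\tau+h}\nu$, the $\tfrac{1}{k_N\Delta_N}$ term is the propagation of the $O(1/k_N)$ estimation variance of each spot estimate through the $\sim h/\Delta_N$ squared increments (hence the leading $4\alpha^2h$ piece, driven by the mean square $E[\nu^2]$ near stationarity), and the $\tfrac{W_N}{\Delta_N}$ term is the window-smoothing bias.

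Once the expansion (\ref{dom}) is in hand, the asymptotic-unbiasedness claim is immediate. Writing $k_N=O(\delta_N^b)$ and $\Delta_N=O(\delta_N^c)$, the three scales have orders $\Delta_N=O(\delta_N^{c})$, $\tfrac{1}{k_N\Delta_N}=O(\delta_N^{-b-c})$ and $\tfrac{W_N}{\Delta_N}=O(\delta_N^{1+b-c})$. Since $c\in(0,1)$ the first always vanishes; the second vanishes iff $c<-b$ and the third iff $c<1+b$, so all three vanish precisely when $c<\min(-b,\,1+b)$. Comparing the two binding thresholds, $-b\le 1+b$ exactly when $b\ge-1/2$, so this requirement is equivalent to the stated dichotomy ($b\ge-1/2$, $c<-b$) or ($b<-1/2$, $c<1+b$). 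In either regime $\tfrac{W_N}{\Delta_N}\to0$, recovering Remark \ref{rem1}, and each of $a_1\Delta_N$, $a_2\tfrac{1}{k_N\Delta_N}$, $a_3\tfrac{k_N\delta_N}{\Delta_N}$ (together with its $o(\cdot)$ remainder) tends to zero, giving $E[PSRV_{[\tau,\tau+h],N}-\langle\nu,\nu\rangle_{[\tau,\tau+h]}]\to0$.

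The main obstacle I anticipate is purely computational bookkeeping. The expressions $A_N$, $B_N$ and $C_N$ are lengthy, and the difficulty is to expand them to just the right order so that no genuine $\Delta_N$-, $\tfrac{1}{k_N\Delta_N}$- or $\tfrac{k_N\delta_N}{\Delta_N}$-contribution is accidentally discarded into a remainder, while simultaneously confirming that no fourth independent scale appears. In particular, one must carefully allocate the cross-terms produced by products of several expanded exponentials, and keep the $e^{-\theta h}$ and $e^{-2\theta h}$ factors fixed (they are \emph{not} expanded here, since $h$ is held constant while $N\to\infty$) rather than mixing them with the vanishing factors. This matching of like powers of $\delta_N^b$ and $\delta_N^c$, rather than any conceptual issue, is where essentially all the work lies.
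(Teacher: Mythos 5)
Your proposal follows essentially the same route as the paper: the paper's proof likewise takes the exact bias formula of Lemma \ref{explbias} in the case $W_N\le\Delta_N$, expands $A_N-1$, $B_N-1$ and $C_N$ as $N\to\infty$ into the three scales $\Delta_N$, $\tfrac{1}{k_N\Delta_N}$ and $\tfrac{k_N\delta_N}{\Delta_N}$ to read off $a_1,a_2,a_3$, and then checks that all three scales vanish exactly under the stated dichotomy on $b$ and $c$. Your rate bookkeeping ($c<\min(-b,1+b)$ with the threshold switching at $b=-1/2$) matches the paper's, so the argument is correct and no further comparison is needed.
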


\begin{proof}
See Appendix A.
\end{proof}

A bias-optimal rule for the selection of the tuning parameters $\kappa$ and $\lambda$ when $W_N \le \Delta_N$ is given in the following corollary to Theorem \ref{th1}. Unfortunately, this bias-optimal rule is of little interest for practical applications, as explained in Remark \ref{rm2}.

\begin{corollary}
 The leading term of the PSRV finite-sample bias expansion in Eq. (\ref{dom}) can be canceled in the case  $b=-1/2$ and  $c=1/4$, provided that there exists a solution ${(\tilde\kappa , \tilde\lambda ) \in \mathbb{R}_{>0}\times  \mathbb{R}_{>0}}$ to the following system:

$$\begin{cases} \displaystyle a_3 \kappa^2 + a_1 \lambda^2 \kappa + a_2=0   \\  W_N \le \Delta_N  
 \end{cases}.$$

If a solution $(\tilde\kappa , \tilde\lambda )\in \mathbb{R}_{>0}\times  \mathbb{R}_{>0}$ exists, the corresponding bias-optimal selection of $W_N$ and $\Delta_N$ reads $$W_N=\tilde\kappa \delta_N^{1/2}, \ \Delta_N=\tilde\lambda \delta_N^{1/4}.$$
\end{corollary}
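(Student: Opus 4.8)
The plan is to start from the three-term leading part of the bias expansion in equation (\ref{dom}), namely $a_1\Delta_N + a_2 (k_N\Delta_N)^{-1} + a_3 k_N\delta_N\Delta_N^{-1}$, and substitute the parametric forms of the tuning parameters. Recalling that $k_N\sim\kappa\delta_N^{b}$ and $\Delta_N\sim\lambda\delta_N^{c}$ (the ceiling functions in the definitions of $k_N$ and $\lambda_N$ contribute only lower-order corrections that are irrelevant at leading order), the three terms carry orders $\delta_N^{c}$, $\delta_N^{-b-c}$ and $\delta_N^{1+b-c}$ respectively. The decisive observation is that these three exponents coincide precisely when $c=-b-c=1+b-c$, a linear system whose unique solution is $b=-1/2$, $c=1/4$; this is exactly the case singled out in the statement. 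First I would therefore record this order-matching computation, which explains why $b=-1/2$, $c=1/4$ is the natural---indeed the only---choice under which all three leading contributions share the same order, $\delta_N^{1/4}$, and note that it lies inside the regime $b\ge-1/2$, $c<-b$ where the expansion (\ref{dom}) is valid.

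With $b=-1/2$, $c=1/4$ fixed, the next step is to factor the common order out of the three terms. Writing $k_N=\kappa\delta_N^{-1/2}$ and $\Delta_N=\lambda\delta_N^{1/4}$, direct substitution gives $a_1\Delta_N=a_1\lambda\,\delta_N^{1/4}$, $a_2(k_N\Delta_N)^{-1}=(a_2/(\kappa\lambda))\,\delta_N^{1/4}$ and $a_3 k_N\delta_N\Delta_N^{-1}=(a_3\kappa/\lambda)\,\delta_N^{1/4}$, so the leading term equals $\delta_N^{1/4}\big(a_1\lambda + a_2/(\kappa\lambda) + a_3\kappa/\lambda\big)$. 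Since $\delta_N^{1/4}\neq0$, cancellation of the leading term is equivalent to the vanishing of the bracket; multiplying the bracket by $\kappa\lambda>0$ (which does not alter its zero set on $\mathbb{R}_{>0}\times\mathbb{R}_{>0}$) turns it into the quadratic $a_3\kappa^2 + a_1\lambda^2\kappa + a_2=0$ appearing in the statement. I would then note that any admissible pair must additionally respect the no-overlapping regime in which (\ref{dom}) was derived, i.e. $W_N\le\Delta_N$, which is the second line of the system.

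To close, I would read off the optimal window lengths: if a solution $(\tilde\kappa,\tilde\lambda)\in\mathbb{R}_{>0}\times\mathbb{R}_{>0}$ of the system exists, then inserting it into the parametric definitions with $b=-1/2$, $c=1/4$ yields $W_N=k_N\delta_N=\tilde\kappa\,\delta_N^{1/2}$ and $\Delta_N=\tilde\lambda\,\delta_N^{1/4}$, as claimed. Since the statement is conditional---it presupposes the existence of a positive solution---no existence argument is required; the mathematical content is entirely the order-balancing that collapses three a priori distinct rates to the single rate $\delta_N^{1/4}$ together with the resulting quadratic constraint. I expect the main obstacle to be purely bookkeeping: confirming that the sub-leading remainders $o(\Delta_N)$, $o((k_N\Delta_N)^{-1})$ and $o(k_N\delta_N/\Delta_N)$ in (\ref{dom}) all remain of strictly smaller order than $\delta_N^{1/4}$ after the substitution, so that annihilating the bracket genuinely removes the dominant contribution rather than merely one of several same-order pieces.
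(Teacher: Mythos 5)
Your proposal is correct and follows essentially the same route as the paper: substitute $k_N=\kappa\delta_N^{b}$, $\Delta_N=\lambda\delta_N^{c}$ into the three leading terms of (\ref{dom}), observe that at $b=-1/2$, $c=1/4$ they all collapse to order $\delta_N^{1/4}$ with coefficient $a_1\lambda+a_2/(\kappa\lambda)+a_3\kappa/\lambda$, and reduce cancellation to the quadratic $a_3\kappa^2+a_1\lambda^2\kappa+a_2=0$ subject to $W_N\le\Delta_N$. The only difference is presentational: the paper reaches this via an exhaustive enumeration of the dominant term in every $(b,c)$ sub-regime (using the explicit rates of $A_N-1$, $B_N-1$, $C_N$, which also settles the remainder bookkeeping you flag at the end), whereas you isolate the relevant case directly by the order-matching argument.
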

\begin{proof}
See Appendix A. 
\end{proof}

\begin{remark}\label{rm2}
 For $b =-1/2$ and $c=1/4$, the no-overlapping condition $W_N \le \Delta_N$ is equivalent to $  \delta_N \le (\lambda/ \kappa)^{4}$. Assuming that a positive solution $\tilde\kappa(\lambda)$ to $a_3 \kappa^2 + a_1 \lambda^2 \kappa + a_2=0$ exists for some $\lambda>0$, we define the ``no-overlapping" threshold for $\delta_N$ as {$\delta^*(\lambda):=  (\lambda / \tilde\kappa(\lambda))^{4}$}. For the three sets of CIR parameters used in the numerical study in Section \ref{mot}, Figure \ref{deltastar} plots the threshold $\delta^*(\lambda)$ as a function of  $\lambda\in (0,\lambda^*]$, where  $\lambda^*$ is the largest admissible value of $\lambda$ such that  $\lambda \delta^*(\lambda)^{1/4}\le h$, i.e., such that $\Delta_N \le h$ when $\delta_N$ is equal to the ``no-overlapping" threshold. Specifically, Figure \ref{deltastar} shows that the sampling frequency corresponding to $\delta^*(\lambda)$ is bounded by a value smaller than, respectively,  $0.02$ (see Panel a)), $0.05$ (see Panel  b)) and $0.125$ seconds (see Panel c)). This suggests that for typical values of the CIR parameters, the system in Corollary 1 may be solved only for ultra-high frequencies. Also, note that the solution (if it exists)  depends on $a_1$, $a_2$ and $a_3$, which in turn depend on the expected initial volatility and all CIR parameters, including the drift parameters, which can not be consistently estimated over a fixed time horizon.
\end{remark}

\begin{figure}[h]
\includegraphics[width=\linewidth]{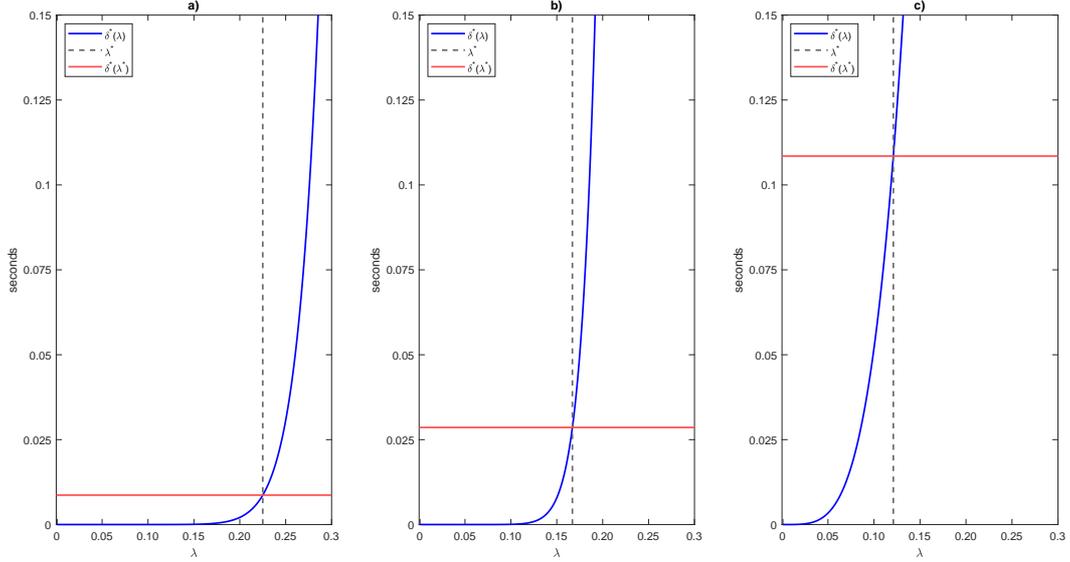}

\caption{ The threshold $\delta^*(\lambda)$ is plotted in black. The dotted gray vertical line corresponds to $\lambda=\lambda^*$. The dotted red horizontal line corresponds to $\delta^{*}(\lambda^*)$. The panels refer to the following sets of parameters: a)  \textit{Set 1}: $(\alpha, \theta,\gamma, \nu(0))=(0.2,5,0.5,0.2)$; b)  \textit{Set 2}: $(\alpha, \theta,\gamma,\nu(0))=(0.03,10,0.25,0.03)$; and  c) \textit{Set 3}:  $(\alpha, \theta,\gamma, \nu(0))=(0.2,5,0.5,0.4)$. {$\delta^*(\lambda)$ is independent of the correlation parameter $\rho$, which therefore does not appear}. For panel c) we consider $\tau = 5$ days, while in panels a) and b) $\delta^*(\lambda) $  is independent of $\tau$. We have assumed $h=1/252$, corresponding to 6 hours (21600 seconds).}  \label{deltastar}
\end{figure}

\subsection{The impact of noise on the bias}
In empirical applications one can only observe the noisy price $\tilde p (t)$, that is, the efficient price contaminated by a noise component that originates from market microstructure frictions, such as bid-ask bounce effects and
price rounding. Here, we assume that the noise component is an i.i.d. process independent of the efficient price process, as in the seminal paper by \cite{roll}. 
For a general discussion of the statistical models of microstructure noise, see \cite{jli}.

 \begin{assumption}\label{Ass3}{\textbf{Data-generating process in the presence of market microstructure  noise}}\label{noise}

The observable price process $\tilde p $ is given by
$$\tilde p (t) = p(t) + \eta(t),$$

\noindent where $p(t)$ represents the efficient price process and evolves according to Assumption \ref{Ass2} while 
$\eta(t)$ is a sequence of i.i.d.\ random variables independent of $p(t)$, such that
 $E[\eta(t)]=0$,  $E[\eta(t)^2]=V_{\eta} < \infty $  and $E[\eta(t)^4]=Q_{\eta} < \infty$ $\forall t$.

 \end{assumption}
 
The presence of noise clearly changes the PSRV bias expression, introducing an extra term, as illustrated in the following lemma. Note that the parametric form of the extra bias term due to the presence of noise is different in the overlapping and no-overlapping cases.

\begin{lemma}\label{explbiasNoise}
Let Assumption \ref{Ass3} hold, with $Z$ independent of $W$, and let $N$ be fixed. Moreover, let $\widetilde{PSRV}_{[\tau,\tau+h],N}$ denote the PSRV in Definition \ref{psrv}, computed from noisy price observations. If $W_N \le \Delta_N$, then

 \begin{equation}\label{truebiasNoise}
 E\Big[\widetilde{PSRV}_{[\tau,\tau+h],N}- \langle\nu,\nu\rangle_{[\tau,\tau+h]} \Big] = \displaystyle \gamma^2 \alpha h (A_N-1) +  \gamma^2 \Big( E[\nu(\tau)] - \alpha \Big) \frac{1-e^{-\theta h}}{\theta} (B_N-1) + C_ N +D_N;
 \end{equation}

\noindent Instead, if $W_N > \Delta_N$, then

\begin{equation}\label{truebiasoverlNoise}
E\Big[\widetilde{PSRV}_{[\tau,\tau+h],N}- \langle\nu,\nu\rangle_{[\tau,\tau+h]} \Big] = \displaystyle \gamma^2 \alpha h (A_N-1) +  \gamma^2 \Big( E[\nu(\tau)] - \alpha \Big) \frac{1-e^{-\theta h}}{\theta} (B_N-1) + C_N +O_N+ D^{*}_N.
\end{equation}

The parametric expressions of $A_N$, $B_N$, $C_N$ and $O_N$ are as in  Lemma \ref{explbias}, while that of the extra term due to the presence of noise $D_N$ (resp., $D^*_N$) in the no-overlapping case (resp., overlapping case) is as follows:

\begin{equation}\label{D_N}
D_N =[4(Q_\eta+V_\eta^2)+16\alpha V_\eta\delta_N]h\frac{1}{k_N\delta_N^2\Delta_N}+\frac{8}{\theta}V_\eta (\alpha -E[\nu(\tau)])(1-e^{-\theta h})\frac{(1+e^{-\theta \Delta_N})(1-e^{-\theta k_N\delta_N})}{(1-e^{-\theta \Delta_N})k_N^2\delta_N^2};
\end{equation}   

\begin{equation}\label{D_Ns}\begin{split}
D^*_N&=[   4\left(Q_\eta+V^2_\eta\right)+ 16\alpha V_\eta \delta_N] h\frac{1}{k_N^2\delta_N^3}+ \frac{8}{\theta}V_\eta(\alpha-E[\nu(\tau)]) (1-e^{-\theta h})\frac{1}{(1-e^{-\theta\Delta_N}) k_{N}^{ 2} \delta_{N}^{ 2}  } \Bigg\{ \frac{(2+  k_{N} )}{2k_N\delta_N}\Bigg[ \frac{(e^{\theta k_N \delta_N - \theta \Delta_N}-1)(k_N\delta_N+\Delta_N)}{k_N\delta_N-\Delta_N}\\ &+(e^{-\theta \Delta_N}-e^{\theta k_N\delta_N})\Bigg]+  \frac{k_N}{2\Delta_N}(1+e^{\theta k_N\delta_N})(1-e^{-\theta\Delta_N})\Bigg\}.\end{split}
\end{equation}

\end{lemma}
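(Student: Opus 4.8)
The plan is to reduce the statement to the noiseless bias of Lemma \ref{explbias} plus two explicit noise contributions, by decomposing each noisy squared increment. Writing $\Delta_j\tilde p = \Delta_j p + \Delta_j\eta$ for the $j$-th price increment inside a local window and expanding the square, the noisy locally averaged realized variance of Definition \ref{larv} splits as
$$\tilde{\hat\nu}_N(t) = \hat\nu_N(t) + X_N(t) + Y_N(t),$$
where $X_N(t) = \frac{2}{k_N\delta_N}\sum_j \Delta_j p\,\Delta_j\eta$ is the price--noise cross term and $Y_N(t) = \frac{1}{k_N\delta_N}\sum_j (\Delta_j\eta)^2$ is the pure-noise term. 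Substituting this into Definition \ref{psrv} and writing $t_i:=\tau+i\Delta_N$, I would denote by $\delta\hat\nu_i$, $\delta X_i$, $\delta Y_i$ the three $\Delta_N$-increments entering the $i$-th summand, so that $\widetilde{PSRV}_{[\tau,\tau+h],N}=\sum_i(\delta\hat\nu_i+\delta X_i+\delta Y_i)^2$, the sum running over $i=1,\dots,\lfloor h/\Delta_N\rfloor$.

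Next I would take expectations of the six resulting terms and show that only three survive. The term $\sum_i E[(\delta\hat\nu_i)^2]$ is literally the noiseless $PSRV$, so it reproduces the bias of Lemma \ref{explbias}, i.e.\ the $A_N,B_N,C_N$ (and, in the overlapping case, $O_N$) contributions. The three mixed terms all vanish: $E[\delta\hat\nu_i\,\delta X_i]=0$ because $X_N$ is linear in $\eta$ with $E[\eta]=0$ and $\eta\perp p$; $E[\delta\hat\nu_i\,\delta Y_i]=E[\delta\hat\nu_i]\,E[\delta Y_i]=0$, since the i.i.d.\ structure makes $E[Y_N(t)]=2V_\eta/\delta_N$ constant in $t$ and hence $E[\delta Y_i]=0$; and $E[\delta X_i\,\delta Y_i]=0$ because each of its terms carries a single martingale price-increment factor, whose expectation vanishes as $p$ has no drift under Assumption \ref{Ass3}. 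This yields the clean identity
$$D_N\ (\text{resp. } D^*_N)\ =\ \sum_i E[(\delta X_i)^2] + \sum_i E[(\delta Y_i)^2].$$

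It then remains to evaluate the two surviving sums. For the pure-noise sum, the second and fourth moments give $E[(\Delta_j\eta)^2]=2V_\eta$ and $\mathrm{Var}((\Delta_j\eta)^2)=2(Q_\eta+V_\eta^2)$, so each window variance scales like $2(Q_\eta+V_\eta^2)/(k_N\delta_N^2)$, and this is precisely where the factor $4(Q_\eta+V_\eta^2)$ in front of $h\,(k_N\delta_N^2\Delta_N)^{-1}$ originates. For the cross sum, independence of $p$ and $\eta$ together with orthogonality of martingale increments factorizes $E[(\delta X_i)^2]$ into $E\big[\int\nu\,ds\big]$ over the relevant windows times noise second moments; inserting the CIR first moment $E[\nu(s)]=(\nu(0)-\alpha)e^{-\theta s}+\alpha$ produces both the constant $16\alpha V_\eta\delta_N$ part and the mean-reversion transient $\frac{8}{\theta}V_\eta(\alpha-E[\nu(\tau)])(1-e^{-\theta h})$. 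Summing over the $\lfloor h/\Delta_N\rfloor$ grid points and collecting these contributions gives $D_N$ in the no-overlapping case.

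The main obstacle is the overlapping case. When $W_N>\Delta_N$, the windows underlying $\tilde{\hat\nu}_N(t_i)$ and $\tilde{\hat\nu}_N(t_{i-1})$ share the price and noise observations on the overlap $[\,t_i-W_N,\ t_{i-1}\,]$ of length $W_N-\Delta_N$, so the cross-covariances $E[X_N(t_i)X_N(t_{i-1})]$ and $E[Y_N(t_i)Y_N(t_{i-1})]$ are no longer negligible. Their careful evaluation—keeping track of exactly which grid increments coincide and of the adjacency covariance $E[\Delta_j\eta\,\Delta_{j\pm1}\eta]=-V_\eta$ created by the shared $\eta$-values—is the delicate computation, and summing the resulting terms over $i$ converts the CIR exponentials into geometric series in $e^{-\theta\Delta_N}$ and $e^{\theta k_N\delta_N}$. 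This is what accounts for the comparatively cumbersome closed form of $D^*_N$, with its nested bracket involving $\frac{k_N\delta_N+\Delta_N}{k_N\delta_N-\Delta_N}$ and $(1+e^{\theta k_N\delta_N})(1-e^{-\theta\Delta_N})$, as opposed to the simpler $D_N$, where consecutive windows are disjoint up to boundary points and no such overlap covariance arises.
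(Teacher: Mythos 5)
Your proposal is correct and follows essentially the same route as the paper: the authors likewise expand each noisy squared increment into the six terms arising from signal, price--noise cross, and pure-noise contributions, kill the three mixed terms by the zero-mean/independence/stationarity of $\eta$ and the martingale property of $p$, and identify $D_N$ (resp.\ $D^*_N$) as the sum of the pure-noise and cross-term second moments, with the overlapping case handled by ``analogous calculations'' that track the shared increments exactly as you describe. The only difference is organizational (you decompose $\tilde{\hat\nu}_N$ before differencing, the paper expands the squared difference of realized variances directly), which changes nothing.
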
 

\begin{proof}
See Appendix A.
\end{proof}

\begin{remark}
From the proof of Theorem \ref{th2} in Appendix A, one can easily see that the expressions of $D_N$  is the same for any continuous mean-reverting volatility model, as their computation only depends on the drift of $\nu$ in Assumption \ref{Ass2}. The same holds also for $D^*_N$ in the overlapping case.
\end{remark}

 Ideally, in the overlapping case, if one could efficiently estimate the extra bias due to noise $D^*_N$  and subtract it, then  the bias-optimal rule to select $\kappa$  could still be applied effectively. Unfortunately,   $D^*_N$ can not be consistently estimated over a fixed time horizon, as it depends on the drift parameters of the volatility $\alpha$ and $\theta$, whose consistent estimation can not be achieved on a fixed time horizon\footnote{Note that $Q_\eta$ and $V_\eta$  can, instead, be  estimated consistently for $T$ fixed, see for instance \cite{ZMA}}. As a solution, we suggest to sample prices on a suitably sparse grid, as done for the realized variance in the seminal paper by \cite{abdh}, so that the extra bias term induced by the presence of noise becomes
negligible and the bias optimal rule to select the local-window parameter $\kappa$ can still be applied. The efficiency of this solution is
 verified numerically in Section \ref{num}.

Finally, for completeness, we also study the asymptotic behavior of the additional bias due to noise in the no-overlapping case, $D_N$. More precisely, in the next theorem we derive its rate of divergence as $N \to \infty$.

\begin{theorem}\label{th2}
Let Assumption \ref{Ass3} hold, with $Z$ independent of $W$. Moreover, let $\widetilde{PSRV}_{[\tau,\tau+h],N}$ denote the PSRV in Definition \ref{psrv}, computed from noisy price observations. Then, if either $b \ge -\frac{1}{2}$ and  $c < -b $  or $b < -\frac{1}{2} $ and  $c < b+1 $,  $\frac{W_N}{\Delta_N} \to 0$ as $N \to \infty$ and $\widetilde{PSRV}_{[\tau,\tau+h],N}$ is asymptotically biased, i.e., 

$$ E\Big[\widetilde{PSRV}_{[\tau,\tau+h],N} -\langle\nu,\nu\rangle \Big]_{[\tau,\tau+h],N} \to \infty, \ \ as \ \ N \to \infty, $$

since the bias term $D_N$ in equation (\ref{truebiasNoise}) of Lemma \ref{explbiasNoise} diverges as $N \to \infty.$ In particular, we have

   \begin{equation*}
  k_N\delta_N^2\Delta_N D_N =4(Q_\eta+V^2_\eta)h +O(\delta_N) \quad \textit{and}\quad k_N\delta_N^2\Delta_N \to 0 , \,\,\, \delta_N\to 0  .
\end{equation*}
\end{theorem}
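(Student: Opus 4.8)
The plan is to reduce everything to the asymptotic analysis of the single noise term $D_N$ given in equation (\ref{D_N}). By Theorem \ref{th1}, under exactly the rate regimes assumed here the noise-free part of the bias (the terms built from $A_N$, $B_N$ and $C_N$) already vanishes as $N\to\infty$, and $\frac{W_N}{\Delta_N}\to 0$ follows from the same argument. Since Lemma \ref{explbiasNoise} shows that the total bias of $\widetilde{PSRV}$ equals this vanishing part plus $D_N$, it suffices to prove $D_N\to+\infty$. I would obtain this from the sharper claims $k_N\delta_N^2\Delta_N D_N = 4(Q_\eta+V_\eta^2)h + O(\delta_N)$ and $k_N\delta_N^2\Delta_N\to 0$: the scaled quantity tends to the strictly positive constant $4(Q_\eta+V_\eta^2)h$ (recall $V_\eta,Q_\eta>0$ and $h>0$), while the scaling factor tends to zero, so $D_N$ diverges to $+\infty$.

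First I would multiply $D_N$ by $k_N\delta_N^2\Delta_N$ and treat the two summands of (\ref{D_N}) separately. The first summand is exact: the multiplication cancels the factor $\frac{1}{k_N\delta_N^2\Delta_N}$ and leaves $[4(Q_\eta+V_\eta^2)+16\alpha V_\eta\delta_N]h = 4(Q_\eta+V_\eta^2)h + O(\delta_N)$, which is precisely the asserted leading term. The work lies in the second summand, where after multiplication one is left with
$$\frac{8}{\theta}V_\eta(\alpha-E[\nu(\tau)])(1-e^{-\theta h})\,\frac{(1+e^{-\theta\Delta_N})(1-e^{-\theta k_N\delta_N})\,\Delta_N}{(1-e^{-\theta\Delta_N})\,k_N}.$$
Here I would use that both $\Delta_N\to 0$ and $W_N=k_N\delta_N\to 0$ as $N\to\infty$, the latter being part of Definition \ref{larv}. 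Taylor-expanding the exponentials gives $1+e^{-\theta\Delta_N}\to 2$, $1-e^{-\theta k_N\delta_N}=\theta k_N\delta_N+O((k_N\delta_N)^2)$ and $1-e^{-\theta\Delta_N}=\theta\Delta_N+O(\Delta_N^2)$, so the ratio collapses to $2\delta_N(1+o(1))$ and the whole summand is $O(\delta_N)$. Combining the two contributions yields $k_N\delta_N^2\Delta_N D_N = 4(Q_\eta+V_\eta^2)h + O(\delta_N)$.

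It then remains to verify $k_N\delta_N^2\Delta_N\to 0$. Using $k_N=O(\delta_N^b)$ and $\Delta_N=O(\delta_N^c)$ gives $k_N\delta_N^2\Delta_N=O(\delta_N^{\,b+c+2})$, and under either admissible regime ($b\ge-1/2,\ c<-b$ or $-1<b<-1/2,\ c<b+1$) the exponent satisfies $b+c+2>0$, so the quantity vanishes. I expect the only delicate point to be the second summand: the factor $\frac{1}{1-e^{-\theta\Delta_N}}$ diverges like $\frac{1}{\theta\Delta_N}$, and one must confirm that it is exactly compensated by the accompanying $\Delta_N$ and $(1-e^{-\theta k_N\delta_N})$ factors rather than producing a spurious leading contribution. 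Once this cancellation is controlled to order $O(\delta_N)$, the divergence $D_N\to+\infty$ — and hence the asymptotic bias of the noisy PSRV — follows at once.
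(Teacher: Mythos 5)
Your proposal is correct and follows essentially the same route as the paper: reduce the claim to the divergence of $D_N$ by invoking Theorem \ref{th1} for the noise-free part of the bias in Lemma \ref{explbiasNoise}, then Taylor-expand the exponentials in $D_N$ to identify $\frac{1}{k_N\delta_N^2\Delta_N}$ as the divergence rate with leading constant $4(Q_\eta+V_\eta^2)h$. If anything, your normalized computation of $k_N\delta_N^2\Delta_N D_N$ is slightly more explicit than the paper's, which only states the three-term asymptotic equivalence of $D_N$ and reads off the order.
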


\begin{proof}
See Appendix A.
\end{proof}

 \subsection{The bias-reducing effect of the assumption $\nu(0)=\alpha$}
As mentioned in Section \ref{OVER}, if $\nu(0)=\alpha$, then $E[\nu(\tau)]=\alpha$. Lemmas \ref{explbias} and \ref{explbiasNoise}  quantify the bias reduction ensuing from assuming that $ \nu(0)  = \alpha$. Indeed, this assumption cuts off the entire source of bias $B_N$  and part of the sources of bias $D_N$ (see equation (\ref{D_N})) or $D^*_N$ (see equation (\ref{D_Ns})).  The finite-sample bias reduction ensuing from the assumption $\nu(0)=\alpha$ is not peculiar to the PSRV, though. {In fact}, this simplifying assumption is also beneficial for reducing the finite-sample bias of the locally averaged realized variance, as shown in the next theorem.

\begin{theorem}\label{th3}
Let Assumption \ref{Ass2} hold.  Moreover, let $\hat\nu(\tau)$ denote the locally averaged realized variance in Definition \ref{larv} at time $\tau$. Then, if $b \in (-1,0)$, $\hat\nu(\tau)$ is asymptotically unbiased, i.e.,

\begin{equation*}
 E[\hat\nu(\tau) - \nu(\tau)] =(   \nu(0)  -\alpha)e^{-\theta \tau}\frac{e^{\theta k_N \delta_N}-1- \theta k_N \delta_N}{\theta k_N \delta_N}, 
\end{equation*}
and, as $N \to \infty$, we have

\begin{equation*}
{ E[\hat\nu(\tau) - \nu(\tau)]=\frac{\theta}{2} (  \nu(0) -\alpha)e^{-\theta\,\tau}k_N\delta_N+o(k_N \delta_N), \quad k_N\delta_N\to 0}. 
\end{equation*}

Let Assumption \ref{Ass3} hold. Moreover, let $w(\tau)$ denote the locally averaged realized variance in Definition \ref{larv} at time $\tau$ computed from noisy price observations. Then, $\forall$ $b \in (-1,0)$, $w(\tau)$ is asymptotically biased, i.e.,

\begin{equation*} 
E[w(\tau) - \nu(\tau)] =(  \nu(0)  -\alpha)e^{-\theta \tau}\frac{e^{\theta k_N \delta_N}-1- \theta k_N \delta_N}{\theta k_N \delta_N} + \frac{2V_\eta}{\delta_N}, 
\end{equation*}
and, as $N \to \infty$, we have
\begin{equation*}
{ E[w(\tau) - \nu(\tau)]=\frac{\theta}{2} ( \nu(0) -\alpha)e^{-\theta\,\tau}k_N\delta_N+\frac{2V_\eta}{\delta_N}+o(k_N\delta_N), \quad \quad k_N \delta_N\to 0}. 
\end{equation*}

\end{theorem}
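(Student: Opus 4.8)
The plan is to compute $E[\hat\nu_N(\tau)]$ directly and subtract $E[\nu(\tau)]$, exploiting the fact that under Assumption \ref{Ass2} the expected squared price increment over a subinterval equals the expected integrated variance there. Writing $s_j := \lfloor \tau/\delta_N\rfloor\delta_N - k_N\delta_N + j\delta_N$ for $j=0,1,\dots,k_N$, the increment $p(s_j)-p(s_{j-1})=\int_{s_{j-1}}^{s_j}\sqrt{\nu(s)}\,dW(s)$, so the It\^o isometry together with Fubini's theorem gives
\begin{equation*}
E\big[(p(s_j)-p(s_{j-1}))^2\big]=\int_{s_{j-1}}^{s_j}E[\nu(s)]\,ds.
\end{equation*}
Summing over $j$ and dividing by $k_N\delta_N$, the subintervals telescope into the single window, yielding $E[\hat\nu_N(\tau)]=\tfrac{1}{k_N\delta_N}\int_{\tau-k_N\delta_N}^{\tau}E[\nu(s)]\,ds$ (with $\tau$ taken on the grid, as the floor function in Definition \ref{larv} ensures).

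The second step is to insert the closed-form CIR mean $E[\nu(s)]=(\nu(0)-\alpha)e^{-\theta s}+\alpha$ (equation (4) in \cite{bz}) and integrate explicitly: the constant part contributes $\alpha$, while the exponential part contributes $(\nu(0)-\alpha)e^{-\theta\tau}\tfrac{e^{\theta k_N\delta_N}-1}{\theta k_N\delta_N}$. Subtracting $E[\nu(\tau)]=(\nu(0)-\alpha)e^{-\theta\tau}+\alpha$ cancels the two $\alpha$ terms and factors out $(\nu(0)-\alpha)e^{-\theta\tau}$, leaving exactly the claimed bias. The asymptotic form then follows from the elementary expansion $e^{x}-1-x=\tfrac{x^2}{2}+O(x^3)$ at $x=\theta k_N\delta_N$, which tends to zero since $k_N\delta_N=O(\delta_N^{1+b})\to0$ for $b\in(-1,0)$; this isolates the leading term $\tfrac{\theta}{2}(\nu(0)-\alpha)e^{-\theta\tau}k_N\delta_N$ with $o(k_N\delta_N)$ remainder.

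For the noisy case I would expand the squared increment of $\tilde p=p+\eta$ as $(\Delta p_j+\Delta\eta_j)^2$, where $\Delta p_j$ is the efficient-price increment and $\Delta\eta_j:=\eta(s_j)-\eta(s_{j-1})$. Upon taking expectations the cross term vanishes because $\eta$ is independent of $p$ with zero mean, and $E[(\Delta\eta_j)^2]=2V_\eta$ because the i.i.d.\ structure of Assumption \ref{Ass3} forces the cross-covariance $E[\eta(s_j)\eta(s_{j-1})]=0$. Summing these constant contributions over the $k_N$ increments and dividing by $k_N\delta_N$ adds precisely $\tfrac{2V_\eta}{\delta_N}$ to the clean-price expectation, which then propagates unchanged through both the exact bias and its asymptotic expansion.

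The argument is a direct first-moment computation, so I expect no genuine obstacle. The only points needing care are justifying the interchange of expectation and time integral via the It\^o isometry and Fubini — legitimate since $\nu$ is a CIR process with finite moments under the Feller condition $2\alpha\theta>\gamma^2$ in Assumption \ref{Ass2} — and, in the noisy case, tracking that only the intra-increment variance, not the shared-endpoint covariance between consecutive $\Delta\eta_j$, enters the mean. Neither is substantive, so the result reduces to the explicit integration and Taylor step described above.
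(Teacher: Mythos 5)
Your proposal is correct and follows essentially the same route as the paper's proof: a direct first-moment computation via the It\^o isometry, telescoping the squared increments into $\frac{1}{k_N\delta_N}E\big[\int_{\tau-k_N\delta_N}^{\tau}\nu(s)\,ds\big]$, inserting the CIR conditional mean (which the paper quotes from its reference rather than integrating explicitly, an immaterial difference), and in the noisy case adding the $2V_\eta/\delta_N$ term from the i.i.d.\ noise increments. No gaps.
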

\begin{proof}
See Appendix A.
\end{proof}
This theorem has two interesting implications. First, under Assumption \ref{Ass2}, the locally averaged realized variance is unbiased in finite samples if and only if $ \nu(0)  =\alpha$. Second, under Assumption \ref{Ass3}, if $\alpha >  \nu(0) $, the presence of noise could actually compensate for the negative bias originating from the first term of the bias expression.
This also holds for the  PSRV finite-sample bias, provided that the term $D_N$  (resp., $D^*_N$) in Lemma \ref{explbiasNoise} is of opposite sign with respect to the sum of the other terms in the bias expression.
 
 \vspace{0.5cm}

 \section{Generalization via dimensional analysis }\label{DimAn}

In this section we propose a heuristic approach, based on dimensional analysis, to generalize the rule for the bias-optimal selection of $\kappa$ in equation (\ref{kopt}), derived under the assumption that the volatility is a CIR process,  to the more general case where the volatility follows a process in the CKLS class (see \cite{ckls}). Specifically, the stochastic volatility model we assume as the data-generating process is now as follows.

 \begin{assumption}{\textbf{Data-generating process}}\label{Ass4}

For $t \in [0,T]$, $T>0$, the dynamics of the log-price process $p(t)$ and the spot volatility process $\nu(t)$ follow

$$\begin{cases} \displaystyle p(t) =p(0)+ \int_0^t \mu(s)ds +   \int_0^t \sqrt{\nu(s)} dW(s) \\ \displaystyle \nu(t) = \nu(0) + \theta \int_0^t  \Big( \alpha -\nu(s)\Big) ds + \gamma\int_0^t  \nu(s)^\beta dZ(s) \end{cases}, $$

\noindent where $W$ and $Z$ are two correlated Brownian motions on $(\Omega, \mathcal{F}, (\mathcal{F}_t)_{t \ge 0}, P)$, $\mu(t)$ is a continuous adapted process, $ \beta \ge 1/2$, $p(0) \in \mathbb{R},  \nu(0), \theta, \alpha, \gamma >0$, and   $2\alpha \theta > \gamma^2$ if $\beta=1/2$.

\end{assumption}

 The stochastic volatility model in Assumption \ref{Ass4} is quite flexible to reproduce empirical prices behaviour in the absence of price and volatility jumps. In fact it
incorporates a number of widely-used stochastic volatility models with continuous price and volatility paths as special cases. For example, if $\beta = 1/2$, one obtains the model by \cite{heston}; if $\beta=1$ one finds the continuous-time Garch model by \cite{nelson}; if $\beta=3/2$, one gets the 3/2 model by \cite{32}. Further, by allowing for a stochastic correlation between $W$ and $Z$, Assumption \ref{Ass4} includes also the generalized Heston model with stochastic leverage introduced by \cite{veraart2}. Finally, note that  Assumption \ref{Ass4} also includes a price drift. The numerical study in Section \ref{num}  confirms that the impact of the latter on the PSRV finite-sample bias is negligible.

We now use dimensional analysis  to heuristically derive a rule for the bias-optimal selection of $\kappa$ under Assumption \ref{Ass4}.
We test the efficacy of this rule in the numerical study of Section \ref{num}, with overwhelming results. Note that dimensional analysis is typically used in physics and engineering to make an educated guess about the solution to a problem without performing a full analytic study (see, e.g., \cite{kyle}, \cite{krishnam}).

The basic concept of dimensional analysis is that one can only add quantities with the same units\footnote{{Dimensional analysis is also called a unit-factor method or a factor-label method, since a conversion factor is used to evaluate the units.}}.  Accordingly, when applying dimensional analysis, the first step entails identifying the units of the quantities appearing in the equations being studied. In this specific analysis, we start with the units of  the quantities  appearing in the model given in Assumption \ref{Ass4}. Let $dim[q]$ denote the unit/dimension of the quantity $q$. The log-return $dp(t)$, $t>0$, is a dimensionless quantity (i.e.,  a pure number) since it is the logarithm of a ratio of prices (the ratio of quantities with the same units is dimensionless). Instead, the quadratic variation of the Wiener processes $W$ and $Z$  has the dimension of $time$ since $W$ and $Z$ are  continuous random walks. As a consequence, we have  $dim[dW(t)]=dim[dZ(t)]=time^{1/2}$ (see, for example, \cite{WILMOTT} or the square-root-of-time rule in \cite{sqrtTime}). Now consider the dynamics of the log-price, bearing in mind that we cannot add or subtract quantities with different measurement units.  The dimension of the left-hand side must then be equal to those of the addenda on the right-hand side, thereby implying that $dim[\mu] =1/time$ and
$\dim[\nu(t)]=1/time$. Thus, from the dynamics of $\nu(t)$, we have $dim[\alpha]=1/time$, $dim[\theta]=1/time$
and $dim[\gamma\nu(t)^\beta \,dZ(t)]=1/time$. The latter implies $dim[\gamma]dim[\nu(t)^\beta]dim[dZ(t)]=1/time$.  Therefore, bearing in mind that 
$dim[\nu(t)^\beta]dim[dZ(t)]=1/time^{\beta-1/2}$, we obtain $dim[\gamma] =1/time^{-\beta+3/2}$.

Now, without loss of generality, let $ \nu(0) =\alpha$ and consider the dominant term in the expansion of Theorem \ref{th4}, i.e., the term 
$$\Big(\frac{4 \alpha^2}{\kappa^2\delta_N^{1+2b}} -\gamma^2\alpha\Big)h.$$ 
Since the dominant term of the PSRV bias must clearly have the same dimension as the expected quadratic variation of $\nu$ over any generic interval of length $h$, i.e., $\gamma^2 \alpha h$, we have

$$dim\Big[\Big(\frac{4 \alpha^2}{\kappa^2\delta_N^{1+2b}} -\gamma^2\alpha\Big)h\Big] =dim[ \gamma^2 \alpha h] = 1/time{^2},$$

\noindent and, as one can easily verify, this implies  $dim[\kappa]=time^{-b}$ (alternatively, one can show that  
$dim[\kappa]=time^{-b}$  by simply noting that $k_N = \kappa\delta_N^b$ is dimensionless and $dim[\delta_N^{b}]=time^{b}$).

Now observe that the leading term of any expansion of the PSRV finite-sample bias must have dimension equal to $1/time{^2}$. Based on this observation, we conjecture that the leading term of the expansion in Theorem \ref{th4} under Assumption \ref{Ass4}  is  

   $$\Big(\frac{4 E[\nu(\tau)]^2}{\kappa^2\delta_N^{1+2b}} -\gamma^2E[\nu(\tau)^{2\beta}]\Big)h,$$

\noindent whose dimension is $1/time^2$, as one can easily check by recalling that $dim[\kappa]=time^{-b}$, $\dim[\nu(t)]=1/time$ and $dim[\gamma] =1/time^{-\beta+3/2}$. Accordingly, if one conditions the bias to the natural filtration of $\nu(t)$ up to time $t=\tau$, the  generalized bias-optimal value of $\kappa$, for $b=-1/2$ and $c<1/2$, reads
\begin{equation}\label{optimal_general}
 \kappa^{**} :=2\frac{ \nu(\tau)^{1-\beta}}{\gamma } ,
\end{equation}

\noindent Note that equation (\ref{optimal_general}) can be rewritten in non-parametric form as 
\begin{equation*}
 \kappa^{**} = 2\frac{\nu(\tau)}{\sqrt{\xi(\tau)}},
\end{equation*}

\noindent where $\xi(t):=\gamma^2\nu(t)^{2\beta}$ is the vol-of-vol process. This result, while offering insight into the non-parametric solution to the  problem of the bias-optimal selection of $\kappa$, is problematic in terms of feasibility as it requires the estimation of the spot vol-of-vol $\xi(t)$ at $t=\tau$, a challenging issue which has not been addressed so far in the literature to the best of our knowledge and goes  beyond the scope of this paper.

Our conjecture is based on the origin of the two addenda in the leading term of the bias  (see Theorem \ref{th4}) in the CIR framework. In fact, bearing in mind the the leading term  is
$$
 \displaystyle  \Bigg(  \frac{4E[\nu(\tau)]^2}{k^2\delta_N^{1+2b}} -\gamma^2 E[\nu(\tau)]    \Bigg)h  \,,
$$
we note that the second addendum, i.e., $\gamma^2E[\nu(\tau)]h$, comes from the  expected quadratic variation  of the volatility process. More specifically, it originates from the leading term of the following expansion:
\begin{eqnarray*}
&&E\Big[ \langle\nu,\nu\rangle_{[\tau,\tau+h]} \Big] =   \gamma^2 \,E[\nu(\tau)] h +o(h), \quad h\to 0.
\end{eqnarray*}
Instead, the first addendum, i.e., $\frac{4E[\nu(\tau)]^2}{k^2\delta_N^{1+2b}}$, is due to the drift of the volatility process.

 Thus in the case of the CKLS model, the first addendum remains unchanged since the drift of the process is the same for any $\beta$, while
the second addendum changes according  {to} the  expected quadratic variation  of the volatility process, {which}, for small $h$,   reads  
\begin{eqnarray*}
&& E\Big[ \langle\nu,\nu\rangle_{[\tau,\tau+h]} \Big] =   \gamma^2 \,E[\nu(\tau)^{2\beta}] h +o(h), \quad h\to 0. 
\end{eqnarray*}
  since 
$E\Big[ \langle\nu,\nu\rangle_{[\tau,\tau+h]} \Big] =   \gamma^2 \int_{\tau}^{\tau+h} \,E[\nu(s)^{2\beta}] ds$.

\section{Numerical results}\label{num}
\subsection{Numerical results in the CIR setting}

 As detailed in Section \ref{anres}, in the absence of microstructure noise and assuming $\nu(\tau)$ to be observable and $\gamma$ to be known, the finite-sample bias of the PSRV is optimized, under Assumption \ref{Ass2} and for any $\nu(0)$, by selecting $b=-1/2$, $c<1/2$ and $\kappa=\kappa^{*}:= 2\sqrt{\nu(\tau)}\gamma^{-1}$. In this subsection, we give numerical confirmation of the optimality of  this rule for the selection of $\kappa$  in three progressively more realistic scenarios, where incremental sources of biases are added.

In the first scenario, we simulate log-price paths under Assumption \ref{Ass2} and compute daily PSRV values from noise-free price observations assuming that the CIR parameters are known and the initial volatility value $\nu(\tau)$ is observable. In this scenario, we use two price sampling frequencies, that is, $\delta_N=1$ minute and $\delta_N= 5$ minutes. Results show that the bias generated by the price discrete sampling is relatively small, e.g., less than $5\%$
if $\delta_N = 1$ minute when $\kappa = \kappa^{*}$ (see Table \ref{tab1}).

In the second scenario, we simulate log-price paths under Assumption \ref{Ass3} and compute PSRV values from noisy prices while assuming that the CIR parameters are known and the initial volatility value $\nu(\tau)$ is observable. As the PSRV is not robust to the presence of noise contaminations in the price process, here we only consider the sampling frequency $\delta_N= 5$ minutes, as recommended in the seminal paper {by} \cite{abdh}, where the authors suggest that this sampling frequency reduces the impact of noise on returns while still falling within a high-frequency framework. Indeed, a comparison of the numerical results obtained in these first two scenarios shows that the impact of  the price noise on the PSRV estimates is relatively small at the 5-minute sampling frequency, when $\kappa =\kappa^{*}$ is used.\\
\indent In the third scenario, we still simulate the log-price path  under Assumption \ref{Ass3}, but the value of the initial volatility, $\nu(\tau)$, is now unobservable and the  model parameter $\gamma$ is unknown.  Thus, we compute PSRV values from noisy prices by selecting $\kappa=\hat\kappa^{*}:=  \frac{2\sqrt{\hat \nu (\tau)}}{\hat\gamma}$. Here, $\hat\nu(\tau)$ and $\hat\gamma$ are obtained through the estimation procedure detailed in Appendix B.
A comparison of the results obtained in these different scenarios shows that 
the  PSRV finite-sample bias reduction obtained with the feasible selection  $\kappa=\hat\kappa^{*}$   is very similar to the reduction obtained with the unfeasible selection $\kappa=\kappa^{*}$. For the simulation of each scenario, we use the three realistic sets of parameters from Section \ref{mot}. For each parameter set, we simulate one thousand 1-year trajectories of 1-second observations.  

The noise component $\eta$ in Assumption \ref{Ass3} is simulated as an i.i.d.\ Gaussian process, with noise-to-signal ratio $\zeta$ ranging from 0.5 to 3.5, as in the numerical exercise proposed in \cite{scm}. We define the noise-to-signal ratio  $\zeta$ as in \cite{scm}, i.e., $\zeta := \frac{std(\Delta \eta)}{std(r)}$, where $\Delta \eta$ denotes a generic increment of the i.i.d.\ process $\eta$ under Assumption \ref{Ass3} and $r$ denotes the noise-free log-return at the maximum sampling frequency available, which is equal to 1 second in our numerical exercise. From the simulated prices, we compute daily PSRV values, that is, we set a small time horizon $h$, i.e., $h=1/252$. Recall that the bias-optimal rule for the selection of $\kappa$ is valid when $b=-1/2$ and $c<1/2$. Accordingly, we set $b=-1/2$ and $c=1/4$ in our numerical study. \\ 
\indent Tables \ref{tab1}--\ref{tab3} summarize the results of our numerical exercises and, to make the results of the three parameter sets  comparable,  we report the values of the relative bias. Since we simulate 6-hr days, $N$ is equal to $360$ when $\delta_N=1$ minute and $72$ when $\delta_N= 5 $ minutes.  Note that the overlapping condition $W_N>\Delta_N$ is always satisfied for the values of $\Delta_N$ in Table \ref{tab1}. In particular, the average length of $W_N$ is approximately equal to: 530 minutes for \textit{Set 1}, 410 minutes for \textit{Set 2} and 580 for \textit{Set 3}, when $\delta_N = 1 $ minute; 1200 minutes for \textit{Set 1}, 930 minutes for \textit{Set 2} and 1310 minutes for \textit{Set 3}, when $\delta_N =5$  minutes. These averages are computed over all simulated days and are stable across the three scenarios. Recall that the length of $W_N$ varies by day, as it depends on $\kappa^{*}$, which in turn depends on the volatility value at the beginning of each day, i.e., $\nu(\tau)$ (in scenarios 1 and 2), or its estimate, i.e., $\hat\nu(\tau)$  (in scenario 3). 

 \begin{table}[h!]
  
\begin{tabular}{   c | c | c | c | c | c  | c    } 
 
   noise-to-signal ratio $\zeta$ & $\delta_N$ &  $\Delta_N$ & $\lambda$ & rel. bias  1\textit{(Set 1}) &  rel. bias (\textit{Set 2}) & rel. bias  (\textit{Set 3})\\ 
\hline \hline 
    $\zeta=0$ & 1 min.   &  $\delta_N $ (1 min.)& $2 \cdot 10^{-4}$ &0.003& 0.004& 0.032\\ 
         &    & $2\delta_N$ (2 min.) & $4 \cdot 10^{-4}$ &0.006&   0.006& 0.033 \\ 
              &    & $3\delta_N$ (3 min.)  & $6 \cdot 10^{-4}$   &0.008& 0.009& 0.034\\ 
                   &    &$5\delta_N$ (5 min.) & $1 \cdot 10^{-3}$ &0.011& 0.013& 0.036   \\ 
                       &    &$10\delta_N$ (10 min.) & $1.9 \cdot 10^{-3}$     &0.021& 0.025& 0.041 \\  
                          &    & $15\delta_N$ (15 min.)& $2.9 \cdot 10^{-3}$ &0.031&  0.037& 0.047 \\ 
\hline                      
  $\zeta=0 $   & 5 min.  & $\delta_N$ (5 min.)& $6 \cdot 10^{-4}$      & 0.024 & 0.024& 0.060\\ 
                &         & $2\delta_N$ (10 min.) & $1.3 \cdot 10^{-3}$  & 0.029 & 0.029& 0.061 \\ 
                &        & $3\delta_N$ (15 min.) & $1.9 \cdot 10^{-3}$  & 0.031 & 0.033 & 0.061\\ 
                &        & $6\delta_N$  (30 min.)& $3.8 \cdot 10^{-3}$ & 0.046 & 0.049 & 0.063\\
                                     
\end{tabular}
\caption{  Scenario 1: daily PSRV relative bias with $\kappa=\kappa^{*}$,  $\zeta =0$,  $\gamma$  known and $\nu(\tau)$  observable. Model parameters: $\alpha =0.2$, $\theta =5$, $\gamma =0.5$, $\rho=-0.2$, $\nu(0)=0.2$ (\textit{Set 1}); $\alpha =0.03$, $\theta =10$, $\gamma =0.25$, $\rho=-0.8$, $\nu(0)=0.03$ (\textit{Set 2}); $\alpha =0.2$, $\theta =5$, $\gamma =0.5$, $\rho=-0.2$, $\nu(0) = 0.4$ (\textit{Set 3}). }\label{tab1}
\end{table}

\begin{table}   
\begin{tabular}{   c | c | c | c | c | c  | c   }
  noise-to-signal ratio $\zeta$ & $\delta_N$ & $\Delta_N$ & $\lambda$ & rel. bias  (\textit{Set 1}) &  rel. bias (\textit{Set 2}) & rel. bias  (\textit{Set 3})\\ 
\hline \hline                      
  $\zeta=0.5$   & 5 min. & $\delta_N$ (5 min.)& $6 \cdot 10^{-4}$      & 0.025 & 0.024& 0.062\\ 
                &        & $2\delta_N$ (10 min.) & $1.3 \cdot 10^{-3}$  & 0.030 & 0.029& 0.062 \\ 
                &        & $3\delta_N$ (15 min.) & $1.9 \cdot 10^{-3}$  & 0.032 & 0.036 & 0.064\\ 
                &        & $6\delta_N$  (30 min.) & $3.8 \cdot 10^{-3}$ & 0.047 & 0.052 & 0.065\\
                  \hline                      
  $\zeta=1.5$ & 5 min. & $\delta_N$ (5 min.) & $6 \cdot 10^{-4}$   & 0.039 & 0.037& 0.075\\ 
    &   & $2\delta_N$ (10 min.) & $1.3 \cdot 10^{-3}$              & 0.044 & 0.043& 0.076 \\ 
             &   & $3\delta_N$ (15 min.) & $1.9 \cdot 10^{-3}$     & 0.046 & 0.049 & 0.078\\ 
                  &   & $6\delta_N$ (30 min.) & $3.8 \cdot 10^{-3}$ & 0.061 & 0.065 & 0.079\\
\hline                      
  $\zeta=2.5$ & 5 min. & $\delta_N$ (5 min.) & $6 \cdot 10^{-4}$   & 0.064 & 0.064& 0.102\\ 
    &   & $2\delta_N$ (10 min.) & $1.3 \cdot 10^{-3}$              & 0.069 & 0.070& 0.103 \\ 
             &   & $3\delta_N$ (15 min.) & $1.9 \cdot 10^{-3}$     & 0.075 & 0.075 & 0.105\\ 
                  &   & $6\delta_N$ (30 min.) & $3.8 \cdot 10^{-3}$ & 0.091 & 0.091 & 0.107\\ 
\hline                      
  $\zeta=3.5$ & 5 min. & $\delta_N$ (5 min.)& $6 \cdot 10^{-4}$   & 0.108 & 0.105& 0.143\\ 
    &   & $2\delta_N$ (10 min.) & $1.3 \cdot 10^{-3}$              & 0.113 & 0.111& 0.145 \\ 
             &   & $3\delta_N$ (15 min.) & $1.9 \cdot 10^{-3}$     & 0.115 & 0.117 & 0.146\\ 
                  &   & $6\delta_N$ (30 min.)  & $3.8 \cdot 10^{-3}$ & 0.130 & 0.132 & 0.149 
\end{tabular}
\caption{ Scenario 2: daily PSRV  relative bias with $\kappa=\kappa^{*}$,  $\zeta >0$, $\gamma$ known and $\nu(\tau)$  observable. Model parameters: $\alpha =0.2$, $\theta =5$, $\gamma =0.5$, $\rho=-0.2$, $\nu(0)=0.2$ (\textit{Set 1}); $\alpha =0.03$, $\theta =10$, $\gamma =0.25$, $\rho=-0.8$, $\nu(0)=0.03$ (\textit{Set 2}); $\alpha =0.2$, $\theta =5$, $\gamma =0.5$, $\rho=-0.2$, $\nu(0) = 0.4$ (\textit{Set 3}).}\label{tab2}
 
 \vspace{0.5cm}
  
\begin{tabular}{   c | c | c | c | c | c  | c   } 
 
   noise-to-signal ratio $\zeta$ & $\delta_N$ & $\Delta_N$ & $\lambda$ & rel. bias  (\textit{Set 1}) &  rel. bias (\textit{Set 2}) & rel. bias  (\textit{Set 3})\\ 
\hline \hline 

  $\zeta=0.5$ & 5 min. & $\delta_N$ (5 min.)& $6 \cdot 10^{-4}$  &0.059 &0.011 &0.046 \\ 
    &  & $2\delta_N$   (10 min.) & $1.3 \cdot 10^{-3}$ &0.059 & 0.011 &0.047 \\ 
             &   & $3\delta_N$ (15 min.) &  $1.9 \cdot 10^{-3}$ &0.060  &  0.013&  0.047\\ 
                  &   & $6\delta_N$ (30 min.) & $3.8 \cdot 10^{-3}$  &0.060 &0.017 & 0.047 \\   
\hline   
  $\zeta=1.5$ & 5 min. & $\delta_N$ (5 min.)& $6 \cdot 10^{-4}$  &0.068 &0.022 &0.049 \\ 
    &  & $2\delta_N$ (10 min.) &    $1.3 \cdot 10^{-3}$ &0.068 & 0.023 &0.049 \\ 
             &   & $3\delta_N$ (15 min.) &  $1.9 \cdot 10^{-3}$ &0.069  &  0.024&  0.049\\ 
                  &   & $6\delta_N$ (30 min.)  & $3.8 \cdot 10^{-3}$  &0.070 &0.027 & 0.050 \\   
\hline                       
  $\zeta=2.5$ & 5 min. & $\delta_N$ (5 min.)& $6 \cdot 10^{-4}$  &0.085 &0.047 &0.053 \\ 
    &  & $2\delta_N$ (10 min.) &    $1.3 \cdot 10^{-3}$ &0.088 & 0.049 &0.053 \\ 
             &   & $3\delta_N$ (15 min.) &  $1.9 \cdot 10^{-3}$ &0.088  &  0.049&  0.054\\ 
                  &   & $6\delta_N$ (30 min.) & $3.8 \cdot 10^{-3}$  &0.088 &0.051 & 0.054 \\ 
\hline
  $\zeta=3.5$ & 5 min. & $\delta_N$ (5 min)& $6 \cdot 10^{-4}$  &0.112 &0.083 &0.058 \\ 
    &  & $2\delta_N$ (10 min.)&    $1.3 \cdot 10^{-3}$ &0.115 & 0.083 &0.058 \\ 
             &   & $3\delta_N$ (15 min.)&  $1.9 \cdot 10^{-3}$ &0.117  &  0.084&  0.059\\ 
                  &   & $6\delta_N$ (30 min.) & $3.8 \cdot 10^{-3}$  &0.118 &0.088 & 0.061 \\ 
\end{tabular}
\caption{ Scenario 3: daily PSRV  relative bias with $\kappa=\kappa^{*}$,  $\zeta >0$, $\gamma$  unknown and $\nu(\tau)$  unobservable.
 Model parameters: $\alpha =0.2$, $\theta =5$, $\gamma =0.5$, $\rho=-0.2$, $\nu(0)=0.2$ (\textit{Set 1}); $\alpha =0.03$, $\theta =10$, $\gamma =0.25$, $\rho=-0.8$, $\nu(0)=0.03$ (\textit{Set 2}); $\alpha =0.2$, $\theta =5$, $\gamma =0.5$, $\rho=-0.2$, $\nu(0) = 0.4$ (\textit{Set 3}).}\label{tab3}
\end{table}

Table \ref{tab1} shows that for $\delta_N= 1$ minute and  $\Delta_N \le 3$ minutes, the bias is almost negligible (i.e., less than $1\%$) when $\nu(0) =\alpha$, while it is slightly larger but still acceptable (i.e., between $3\%$ and $4\%$) when $\nu(0)=2\alpha$. This is in line with equation \ref{truebiasoverl} in Lemma \ref{explbias}, where it is evident that the source of bias $B_N$ is eliminated when $\nu(0) =\alpha$, which implies $E[\nu(\tau)]=\alpha$. With a price sampling frequency of five minutes, the bias is still acceptable, around $6\%$ at worst. Additionally, Table \ref{tab2} shows that in  the presence of noise, price sampling at five-minute intervals to avoid microstructure frictions represents an acceptable compromise, as the bias is less than $15\%$ even in the presence of very intense microstructure effects. Finally, Table \ref{tab3} shows that the statistical error related to the estimation of $\gamma$ and $\nu(\tau)$ could actually partially compensate for the bias due to the presence of noise, especially when the common assumption $\nu(0)=\alpha$ is violated.

Finally, an important remark is in order. The three realistic parameter sets that we have used  all imply the presence of leverage effects, as each includes a negative $\rho$. To meet the simplifying no-leverage assumption under which the results in Section \ref{anres} are derived, for each scenario  we have also performed additional simulations under the hypothesis of the independence between the Brownian motion driving the price and the volatility, keeping the same values of the variance parameters $\alpha, \theta, \gamma$, and $\nu(0)$ that characterize the scenario.  However, the numerical results obtained in the absence of leverage are basically indistinguishable from those illustrated in Tables \ref{tab1}--\ref{tab3}, thereby suggesting that the leverage is a negligible source of bias. Such additional numerical results are not reported here for brevity, but are available from authors.

 \begin{table} 
\begin{tabular}{   c | c | c | c | c | c  | c   } 
   Model & $\delta_N$ & $\Delta_N$ & $\lambda$ & rel. bias  (\textit{Set 1}) &  rel. bias (\textit{Set 2}) & rel. bias  (\textit{Set 3})\\ 
\hline \hline 
    $\beta=\frac{1}{2}$ & 1 min. &    $\delta_N$ (1 min.)& $2 \cdot 10^{-4}$ &0.014& 0.012& 0.027\\ 
         &   & $2\delta_N$ (2 min.)&  $4 \cdot 10^{-4}$ &0.017&   0.015& 0.029 \\ 
              &   & $3\delta_N$ (3 min.)  & $6 \cdot 10^{-4}$   &0.020 & 0.016& 0.029\\ 
                   &   & $5\delta_N$ (5 min.) & $1 \cdot 10^{-3}$ &0.024 & 0.022& 0.031   \\ 
                       &   &$10\delta_N$ (10 min.) & $1.9 \cdot 10^{-3}$     & 0.034& 0.033& 0.036 \\  
                          &   & $15\delta_N$ (15 min.) & $2.9 \cdot 10^{-3}$ &0.042&  0.044& 0.039 \\ 
                          
\end{tabular}
\caption{$\beta=1/2$: daily PSRV finite-sample relative bias with $\kappa=\kappa^{**}$,  $\zeta =0$, $\gamma$  known and $\nu(\tau)$  observable. Model parameters: $\alpha =0.2$, $\theta =5$, $\gamma =0.5$, $\rho=-0.2$, $\nu(0)=0.2$ (\textit{Set 1}); $\alpha =0.03$, $\theta =10$, $\gamma =0.25$, $\rho=-0.8$, $\nu(0)=0.03$ (\textit{Set 2}); $\alpha =0.2$, $\theta =5$, $\gamma =0.5$, $\rho=-0.2$, $\nu(0) = 0.4$ (\textit{Set 3}). The price drift, $\mu$, is always equal to $0.05$. }\label{tab4}

\end{table}

\begin{table}
   
\begin{tabular}{   c | c | c | c | c | c  | c   } 
 
  Model & $\delta_N$ & $\Delta_N$ & $\lambda$ & rel. bias  (\textit{Set 1}) &  rel. bias (\textit{Set 2}) & rel. bias  (\textit{Set 3})\\ 
\hline \hline 
    $\beta=1$ & 1 min. &    $\delta_N$ (1 min.)& $2 \cdot 10^{-4}$ &0.003& 0.002& 0.011\\ 
         &   & $2\delta_N$ (2 min.)&  $4 \cdot 10^{-4}$ &0.004&   0.002& 0.012 \\ 
              &   & $3\delta_N$ (3 min.)   & $6 \cdot 10^{-4}$   &0.005& 0.002& 0.014\\ 
                   &   & $5\delta_N$ (5 min.) & $1 \cdot 10^{-3}$ &0.006& 0.003& 0.015   \\ 
                       &   &$10\delta_N$ (10 min.) & $1.9 \cdot 10^{-3}$     &0.008& 0.005& 0.017 \\  
                          &   & $15\delta_N$ (15 min.) & $2.9 \cdot 10^{-3}$ &0.012&  0.006& 0.021 \\ 
                          
\end{tabular}
\caption{ $\beta=1$: daily PSRV finite-sample relative bias with $\kappa=\kappa^{**}$, $\gamma$ known and $\nu(\tau)$  observable. Model parameters: $\alpha =0.2$, $\theta =5$, $\gamma =0.5$, $\rho=-0.2$, $\nu(0)=0.2$ (\textit{Set 1}); $\alpha =0.03$, $\theta =10$, $\gamma =0.25$, $\rho=-0.8$, $\nu(0)=0.03$ (\textit{Set 2}); $\alpha =0.2$, $\theta =5$, $\gamma =0.5$, $\rho=-0.2$, $\nu(0) = 0.4$ (\textit{Set 3}). The price drift, $\mu$, is always equal to $0.05$.  }\label{tab5}
\end{table}

\subsection{Numerical results in the more general CKLS setting}

\normalsize
We conclude this section by testing the efficacy of the generalized, conjecture-based, criterion for the bias-optimal selection of $\kappa$ under Assumption \ref{Ass4}, i.e., under the assumption that the volatility evolves as a CKLS model. In this case, the feasible version of the bias-optimal rule to select $\kappa$ is given by $\hat\kappa^{**}=\frac{2 \hat\nu(\tau)^{1-\beta}}{\hat\gamma }$, for $b=-1/2$, $c<1/2$. 

 To test the efficacy of this criterion, we repeat the numerical exercise previously performed in scenario 1 under Assumption \ref{Ass2}, considering three different values of $\beta$: $\beta=1/2$, corresponding to the model by \cite{heston}, which differs from the model of Assumption \ref{Ass2} only in the presence of a price drift; $\beta=1$, corresponding to the continuous-time GARCH model by \cite{nelson}; and  $\beta=3/2$, corresponding to the 3/2 model by \cite{32}. For all parameter sets, $\mu$ is set equal to 0.05. Tables 4,5 and 6 show that our general criterion for the bias-optimal selection of $\kappa$ under Assumption \ref{Ass4} is effective, as it gives satisfactory results in terms of relative bias. Note that the case $\beta=1/2$ is of interest only in that it confirms that the criterion for the bias-optimal selection of $\kappa$ derived analytically under Assumption \ref{Ass2}, i.e., $\kappa = \kappa^{*}$, is also effective in the presence of a price drift.

  \vspace{0.25cm}  
\begin{table}  
\begin{tabular}{   c | c | c | c | c | c  | c   } 
 
   Model & $\delta_N$ & $\Delta_N$ & $\lambda$ & rel. bias  (\textit{Set 1}) &  rel. bias (\textit{Set 2}) & rel. bias  (\textit{Set 3})\\ 
\hline \hline 
    $\beta=\frac{3}{2}$ & 1 min. &  $\delta_N$ (1 min.)& $2 \cdot 10^{-4}$ &0.004& 0.001& 0.029\\ 
         &   & $2\delta_N$ (2 min.) &  $4 \cdot 10^{-4}$ &0.004&   0.002& 0.031 \\ 
              &   & $3\delta_N$ (3 min.) & $6 \cdot 10^{-4}$   &0.005& 0.003& 0.031\\ 
                   &   & $5\delta_N$ (5 min.) & $1 \cdot 10^{-3}$ &0.006& 0.006& 0.037  \\ 
                       &   &$10\delta_N$ (10 min.)& $1.9 \cdot 10^{-3}$     &0.006& 0.007& 0.038 \\  
                          &   & $15\delta_N$ (15 min.) & $2.9 \cdot 10^{-3}$ &0.008&  0.009& 0.041 \\ 
                          
\end{tabular}
\caption{$\beta=3/2$: daily PSRV finite-sample relative bias with $\kappa=\kappa^{**}$, $\gamma$  known and $\nu(\tau)$ observable. Model parameters: $\alpha =0.2$, $\theta =5$, $\gamma =0.5$, $\rho=-0.2$, $\nu(0)=0.2$ (\textit{Set 1}); $\alpha =0.03$, $\theta =10$, $\gamma =0.25$, $\rho=-0.8$, $\nu(0)=0.03$ (\textit{Set 2}); $\alpha =0.2$, $\theta =5$, $\gamma =0.5$, $\rho=-0.2$, $\nu(0) = 0.4$ (\textit{Set 3}). The price drift, $\mu$, is always equal to $0.05$. }\label{tab6}
\end{table}

\section{Empirical study}\label{emp}
We conclude the paper with an empirical analysis, where we apply the bias-optimal criterion for selecting $\kappa$ in equation (\ref{optimal_general})  to compute daily PSRV estimates. The dataset is composed of  two 1-year samples of  S\&P 500 1-minute prices relative to the years 2016 and 2017, respectively. 
The two samples are analyzed separately since the volatility of these two time series behaves very differently. In fact, the year 2016 is characterized by volatility spikes (due, e.g., to  uncertainty pertaining to the so-called Brexit in the month of June or the U.S. presidential election in the month of November), while the year 2017 is  characterized by low volatility, as one can see in  Figure \ref{data}. Analyzing  the two series separately allows for validation of the feasible rule for the selection of $\kappa$ in two very different scenarios.

\begin{figure}[H]
\includegraphics[width=\linewidth]{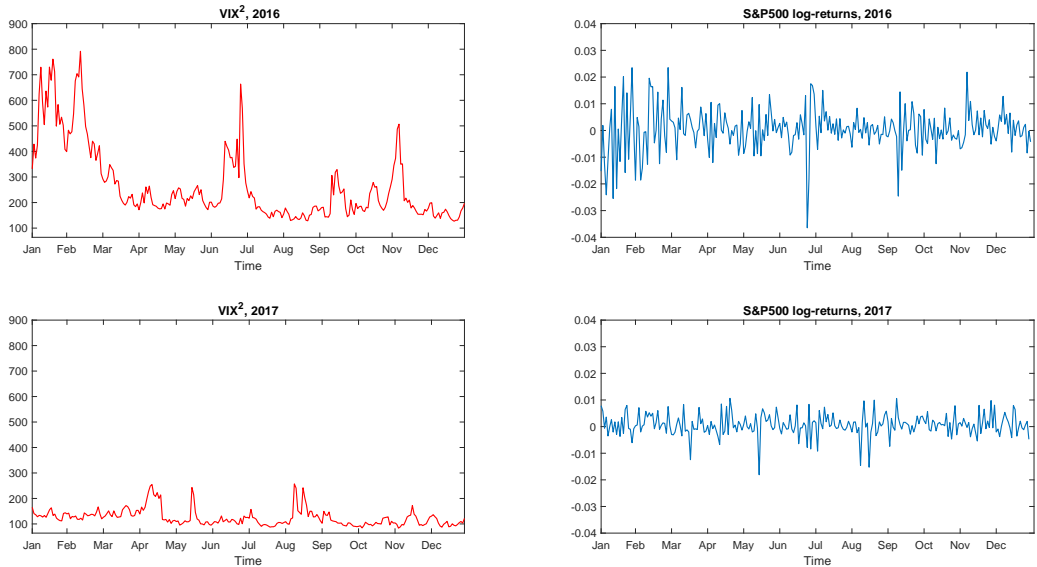}
\caption{Daily VIX$^2$ values (left) and daily S\&P 500 log-returns (right) in the years 2016 and 2017.} \label{data}
\end{figure}

We proceed as follows. First, through the method detailed in Appendix B, we obtain non-parametric estimates of the process $\nu$ at the beginning of each day and estimates of $\gamma$ under Assumption \ref{Ass4}, for the three different values of $\beta$ considered in the numerical exercise of Section \ref{num}. The results of the estimation of $\gamma$ are shown in Table \ref{tab7}\footnote{The estimates of the process $\nu$ at the beginning of each day are not reported for the sake of brevity. See Chapter 4 in \cite{mrs} for a detailed study which demonstrates the finite-sample accuracy of the Fourier estimator of the spot volatility suggested in Appendix B.}.

\begin{table}[H]
 \begin{center}
 
\begin{tabular}{   c | c | c |      c    } 
 
   Model & Sample year & $\hat \gamma $   & $R^2$\\ 
\hline \hline 
  $\beta=\frac{1}{2}$ & 2016 & 0.7127     &   0.1383  \\ 
     & 2017 & 0.4250  &      0.1736  \\ 
     \hline
 $\beta=1$         & 2016   & 6.1682  &    0.0725  \\ 
          &  2017 & 5.9973&    0.0901  \\
 \hline 
  $\beta=\frac{3}{2}$   &   2016   &  72.0358   &0.1141   \\ 
    &  2017    &  65.1084 &     0.0866  \\                          
\end{tabular}
\caption{ Results of the estimation of $\gamma$  under Assumption \ref{Ass4} for different values of $\beta$. }\label{tab7}
\end{center}
\end{table}

Then, based on $R^2$ values, we assume the Heston model ($\beta=1/2$) as the data generating process for both samples. Consequently, we select $b=-1/2$, $c=1/4$, $\kappa = 2 \hat \gamma^{-1}\sqrt{\hat\nu(\tau)}$ and compute daily PSRV values from  empirical prices sampled at the frequency $\delta_N=5$ minutes. The resulting selection of $W_N$ is approximately equal, on average, to $450$ minutes in 2016 and $275$ minutes in 2018. Note that the selection $\delta_N= 5$ minutes is justified by the fact that we assume the impact of microstructure contaminations to be negligible at that sampling frequency, based on the application of the Hausman test by \cite{asx} for the presence of noise, which tells that the impact of noise at the 5-minute frequency is negligible in our samples, confirming a well-known stylized fact (see \cite{abdh}).

Additionally, we have performed the jump-detection test by \cite{cpr}. Precisely, we have performed this test on 5-minute returns, as it is not robust to the presence of noise.  Based on the results of the jump test, we have computed daily PSRV values from 5-minute prices according to the following procedure for the removal of days with jumps\footnote{Note that the analytical results in Section \ref{anres} are derived under the assumption of absence of jumps in the  price and volatility. The literature on non-parametric jump tests provides large and robust empirical evidence, mainly based on US markets, that volatility jumps  are accompanied by price jumps (see, e.g., \cite{JT,BRcoJ,BW}). Thus removing days with price jumps from the estimation  basically also takes care of jumps in the volatility. In the absence of jumps, a model in the CKLS class for the volatility could provide a reasonable trade-off between accuracy in reproducing empirical features of prices and parsimony in terms of parameters to be estimated, as pointed out, e.g., in \cite{Christ2010} and \cite{Goard}.}.  Assume that  time is measured  in days and that we are interested in computing the daily PSRV on  $[\tau, \tau +1]$. Further, without loss of generality, assume that the bias-optimal value of $W_N$ is equal to 2.5 days for $\delta_N$ equal to 5 minutes. If jumps have occurred within the interval $[\tau-1,\tau+1]$, i.e., on the day of interest or the day before,  we do not compute the PSRV on $[\tau,\tau+1]$; if jumps have occurred in $[\tau-2,\tau-1)$, but not in $[\tau-1,\tau+1],$ at any instant $\tau+i\Delta_N$, $i=0,1,\ldots,\lfloor \Delta_N^{-1} \rfloor$,  we select $W_N=1+i\Delta_N \le2$ {to pre-estimate} the spot volatility; instead, if   jumps have occurred in $[\tau-3,\tau-2)$, but not in $[\tau-2,\tau+1]$,  at any instant $\tau+i\Delta_N$, we  select $W_N=min(2+i\Delta_N,2.5)$; finally, if no jump has occurred within the period $[\tau-3,\tau+1]$,  at any instant $\tau+i\Delta_N$, we select $W_N=2.5.$ Overall, based on this procedure, the days for which we do not compute the PSRV amount to $12.25\%$ of the sample in 2016 and  $8.30\%$ of the sample in 2017.

Figures \ref{2016} and \ref{2017} show the daily PSRV values obtained for four different values of $\lambda$ corresponding to a spot volatility estimation frequency $\Delta_N$ equal to 5, 10, 15, and 30 minutes, respectively.

\begin{figure}[hpt]
\includegraphics[width=\linewidth]{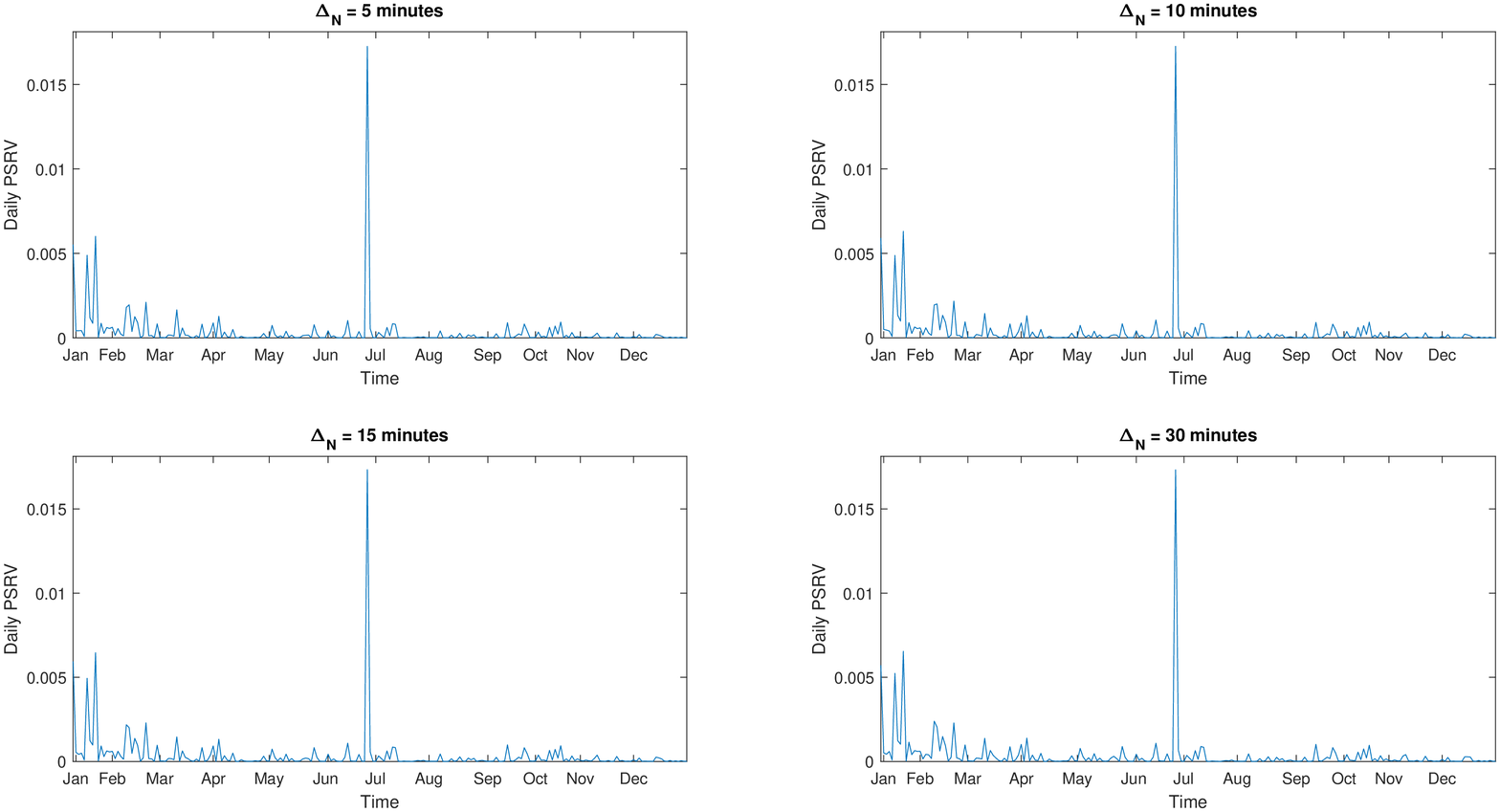}
\caption{Daily PSRV values in the year 2016.} \label{2016}
\end{figure}

\begin{figure}[hpt]
\includegraphics[width=\linewidth]{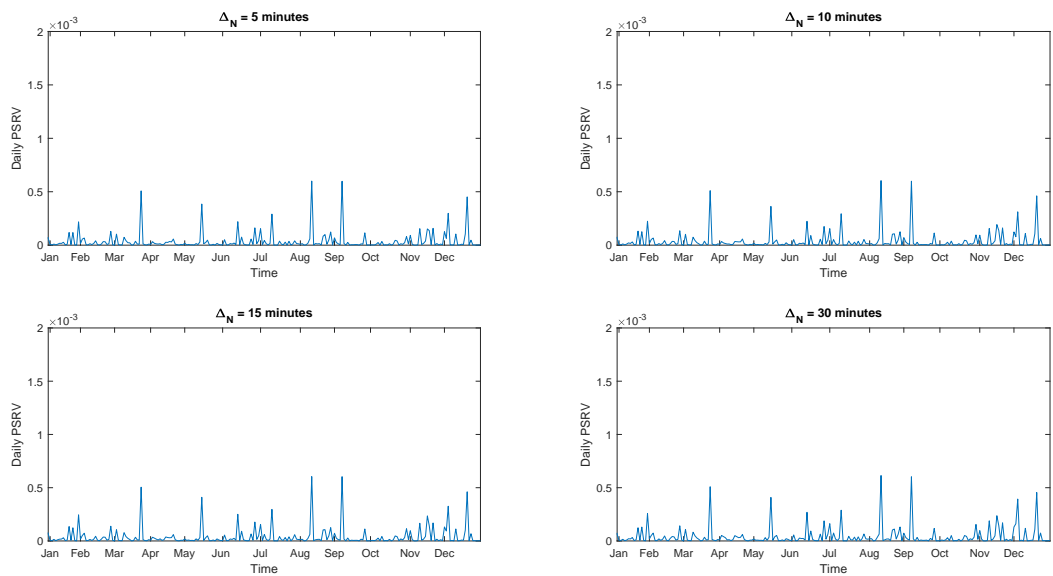}
\caption{Daily PSRV values in the year 2017.} \label{2017}
\end{figure}

Comparing the dynamics of the VIX$^2$ index in Figure \ref{data} with those of the PSRV, one notices that when the VIX$^2$
spikes, the vol-of-vol also spikes (see, e.g., the behavior of the plots at the end of June 2016) and, viceversa, when the VIX$^2$ is low and stable (e.g., in 2017) the vol-of-vol is also low and stable. This evidence corroborates the goodness of our vol-of-vol estimates. Finally, note that 
for either of the two samples, the plots for different values of $\Delta_N$ are basically indistinguishable. With respect to the bias-optimal selection of $\lambda$ (i.e., $\Delta_N$), this evidence confirms what emerges from the analytic study in Section \ref{anres}: the impact of the selection of $\lambda$ (i.e., $\Delta_N$) on PSRV values is marginal, if not negligible.

\section{Conclusions}\label{concl}

The pre-estimated spot-variance based realized variance (PSRV) by \cite{bnv}, the simplest and most natural consistent estimator of the integrated vol-of-vol, is typically affected by a substantial finite-sample bias.  The main contribution of this paper is  to  show, analytically, that  local-window overlapping in finite samples effectively reduces this bias. This result confirms the findings of \cite{scm}, based on simulations.    

The paper is written in the spirit of \cite{afl}. In \cite{afl}, a parametric data-generating process, namely the Heston model, is used to obtain a fully explicit bias expression for the price-volatility correlation, the most natural leverage estimator, which is very biased at high frequencies. Based on the full explicit knowledge of the bias, the authors are able to isolate the sources of bias that affect the simple leverage estimator and derive a feasible strategy to correct for them. In this paper we follow a similar approach. Assuming that the volatility is a CIR process, we obtain the full explicit expression of the PSRV finite-sample bias. Crucially, we show that this expression differs in the overlapping case and the no-overlapping case and, most importantly, that a feasible bias-correction strategy for finite samples can be derived only in the overlapping case. 

Further, using dimensional analysis, we generalize the feasible bias-correction strategy to hold under the assumption that the volatility process belongs to the more general  CKLS class, which encompasses a number of widely-used parametric models. Numerical results corroborate the validity of the generalized rule in that nearly unbiased vol-of-vol estimates are obtained for two  other models  in the CKLS class, namely, the continuous-time GARCH model and the 3/2 model.

In the paper, the impact of microstructure noise on the PSRV bias is also investigated. First, we derive  the exact analytic expression of the extra bias due to noise, which differs in the overlapping and no-overlapping cases. This extra bias can not be consistently estimated over a fixed time horizon and then subtracted, as it depends, other than on some moments of the noise process, on the drift of the volatility.  As a solution, we propose to apply the feasible rule for the bias-optimal selection of the local-window parameter on sparsely-sampled prices, following \cite{abdh}. Numerical evidence of the efficacy of this solution is provided.

Finally, as a byproduct of this analysis, we quantify, for both the PSRV and the locally averaged realized variance,  the  bias reduction  ensuing from  the assumption that the initial value of the volatility  is equal to its long-term mean,  which is very common in simulation studies found in the literature.

\section*{Declarations}

\textit{Funding:} Not applicable. 

\noindent \textit{Conflicts of interest:} The authors have no conflicts of interests to declare. 
 
\noindent\textit{Availability of data and material:} Not applicable. 
 
\noindent\textit{Code availability:} See supplementary file.

\bibliographystyle{apalike}
\bibliography{ref}

 \begin{appendices}
 \vspace{1cm}
\section{Proofs}

\subsection*{Lemma \ref{explbias}}
\begin{proof} 
 
 From Definition \ref{psrv} we have
 $$ PSRV_{[\tau,\tau+h],N }   := \displaystyle \sum_{i=1}^{\lfloor h / \Delta_N \rfloor}  \Big[ \hat{\nu} (\tau+i\Delta_N)-\hat{\nu} (\tau+ i\Delta_N-\Delta_N ) \Big]^2, $$
where, for $s$ taking values on the time grid of mesh-size $\delta_N$:

\begin{itemize}
\item[-]  $\hat{\nu} (s):=RV (s, k_N\delta_N)(k_N\delta_N)^{-1},$  
\item[-]  $RV (s, k_N\delta_N) := \displaystyle \sum_{j=1}^{k_N}  \Delta p^2 (s +  j\delta_N-k_N\delta_N, \delta_N ),$
\item[-] $\Delta p (s,\delta_N  ):= p(s  )-p(s -\delta_N  ).$ 

\end{itemize}

\noindent Note that $E[PSRV_{[\tau,\tau+h],N }]$ can be rewritten as

\begin{equation}\label{A1}
E[PSRV_{[\tau,\tau+h],N }]  = (k_N\delta_N)^{-2}\sum_{i=1}^{\lfloor h / \Delta_N \rfloor}  E[ RV^2 (\tau+i\Delta_N, k_N\delta_N)] + E[RV^2 (\tau+i\Delta_N \\-\Delta_N, k_N\delta_N)]$$
$$ -2E[RV(\tau+i\Delta_N, k_N\delta_N)RV(\tau+ i\Delta_N- \Delta_N, k_N\delta_N) ].
\end{equation}

Therefore, under Assumption \ref{Ass2}, the  explicit formula for $E[PSRV_{[\tau,\tau+h],N }] $  can be obtained by deriving the analytic expression for $E[RV^2 (\tau+i\Delta_N, k_N\delta_N)]$, $E[RV^2 (\tau+ i\Delta_N -\Delta_N, k_N\delta_N)]$ and
$E[RV(\tau +i\Delta_N, k_N\delta_N)RV(\tau +i\Delta_N- \Delta_N, k_N\delta_N) ] $. Note that the expression of the last term differs in the no-overlapping case $W_N \le \Delta_N$ and the overlapping case $W_N >\Delta_N$. \\
We derive the exact expression of these terms separately as follows.

\vspace{0.5cm}
 
 \textbf{I)  Analytic expression of} $\boldsymbol{E[RV^2 (\tau+i\Delta_N, k_N\delta_N)]}$ 
 \vspace{0.25cm}
 
To {simplify} the notation, let $a_{i,u,N}$ {denote} the quantity  $a_{i,u,N} =\tau+i\Delta_N + (u-k_N)\delta_N $. Also, let $(\mathcal{F}^{v}_{s})_{s\ge0}$ be the natural filtration associated with the process $\nu$. We have

$$\displaystyle{ E[RV^2 (\tau+i\Delta_N, k_N\delta_N)] =  \sum_{j=1}^{k_N} E[ \Delta p ^4(\tau+i\Delta_N + (j-k_N)\delta_N)]}+ 2\displaystyle{ \sum_{j=2}^{k_N} E[ \Delta p ^2(\tau+i\Delta_N + (j-k_N)\delta_N) \displaystyle\sum_{h=1}^{j-1}  \Delta p^2(\tau+i\Delta_N + (h-k_N)\delta_N)]}$$

$$=\displaystyle \sum_{j=1}^{k_N} E\Big[ \Big(\int_{a_{i,j-1,N}}^{a_{i,j,N}} \sqrt{v(s)} dW(s)\Big)^4 \Big]+  2\displaystyle \sum_{j=2}^{k_N}\displaystyle\sum_{h=1}^{j-1}  E \Big[ \Big(\int_{a_{i,j-1,N}}^{a_{i,j,N}} \sqrt{v(s)} dW(s) \Big)^2 \times \displaystyle \Big(\int_{a_{i,h-1,N}}^{a_{i,h,N}} \sqrt{v(s)} dW(s) \Big)^2\Big],$$
 where:
 
 \begin{itemize}
\item[$\bullet$] $\displaystyle\int_{a_{i,j-1,N}}^{a_{i,j,N}} \sqrt{v(s)} dW(s) | \mathcal{F}^{v}_{a_{i,j,N}} \sim\mathcal{N}\Big(0,\displaystyle \int_{a_{i,j-1,N}}^{a_{i,j,N}} v(s)ds \Big),$ since $Z$ is assumed to be independent of $W$ (see Sections 3.1.3 and 3.1.4  of \cite{aj}); this in turn implies \\

$  E\Big[ \Big(\displaystyle\int_{a_{i,j-1,N}}^{a_{i,j,N}} \sqrt{v(s)} dW(s)\Big)^4 \Big]= E\Big[ E\Big[\Big( \displaystyle\int_{a_{i,j-1,N}}^{a_{i,j,N}} \sqrt{v(s)} dW(s) \Big)^4| \mathcal{F}^{v}_{a_{i,j,N}} \Big]\Big] = 3E\Big[ \Big( \displaystyle\int_{a_{i,j-1,N}}^{a_{i,j,N}} v(s) ds \Big)^2 \Big];  $

\item[$\bullet$] for $h < j $ and $s<r$, \\
 
$E \Big[\displaystyle \Big(\int_{a_{i,j-1,N}}^{a_{i,j,N}} \sqrt{v(s)} dW(s) \Big)^2 \Big(\int_{a_{i,h-1,N}}^{a_{i,h,N}} \sqrt{v(s)} dW(s) \Big)^2\Big]$\\
$=E \Big[E\Big[\displaystyle \Big(\int_{a_{i,j-1,N}}^{a_{i,j,N}} \sqrt{v(s)} dW(s) \Big)^2 \Big(\int_{a_{i,h-1,N}}^{a_{i,h,N}} \sqrt{v(s)} dW(s) \Big)^2| \mathcal{F}^{v}_{a_{i,j,N}} \Big]\Big]$\\
$ = \displaystyle \int_{a_{i,j-1,N}}^{a_{i,j,N}} \int_{a_{i,h-1,N}}^{a_{i,h,N}}E [v(s)E [ v(r)| \mathcal{F}_s^\nu]]ds dr  .$

 \end{itemize}

 \vspace{0.25cm}
Under Assumption \ref{Ass2} (see Appendix A in \cite{bz}), we also have:

\begin{itemize}
\item[$\bullet$] $ E\Big[ \Big( \displaystyle\int_{a_{i,j-1,N}}^{a_{i,j,N}} v(s) ds \Big)^2 \Big] =   \frac{1}{\theta^{2}}  (1-e^{-\theta \delta_{N}} )^{2} \Big\{e^{-2 \theta i \Delta_{N}-2 \theta j \delta_{N}+2 \theta (1+k_{N} ) \delta_{N} } E[\nu(\tau) ]^{2} $ \\ 
$ +\Big[\frac{\gamma^{2}}{\theta} (e^{-\theta i \Delta_{N}-\theta j \delta_{N}+\theta (k_{N}+1 ) \delta_{N}}-e^{-2 \theta i \Delta_{N}-2 \theta j \delta_{N}+2 \theta (k_{N}+1 ) \delta_{N}} )$\\
$+  2 \alpha e^{-\theta i \Delta_{N}-\theta j \delta_{N}+\theta (k_{N}+1 ) \delta_{N}}  ( 1-e^{ -\theta i \Delta_{N}-\theta j \delta_{N}+\theta ( k_{N}+1 ) \delta_{N}  } ) \Big] E[\nu(\tau)]$\\ 
$ + \Big(\frac{\gamma^{2} \alpha}{2 \theta}+\alpha^{2}\Big ) (1+e^{-2 \theta i \Delta_{N}-2 \theta j \delta_{N}+2 \theta (k_{N}+1) \delta_{N}}  -2 e^{-\theta_{i} \Delta_{N}-\theta j \delta_{N}+\theta (k_{N}+1 ) \delta_{N}} )\Big\} $\\
$+  \frac{\gamma^{2}}{\theta^{2}} \Big(\frac{1}{\theta}-2\delta_{N} e^{-\theta \delta_{N}}  -\frac{1}{\theta} e^{-2 \theta \delta_{N}} \Big)+2 \frac{1}{\theta} (1-e^{-\theta \delta_{N}} ) \Big[\alpha \delta_{N}-\frac{\alpha}{\theta} (1-e^{-\theta \delta_{N}} ) \Big] \\ \times \Big[ e^{-\theta i \Delta_{N}-\theta j \delta_{N}+\theta (k_{N}+1 ) \delta_{N}} E[\nu(\tau)]+\alpha (1-e^{ -\theta i \Delta_{N}-\theta j \delta_N+\theta (k_{N}+1 ) \delta_{N}  })  \Big ]\\ +\frac{\gamma^{2}}{\theta^{2}} \Big[\alpha \delta_{N} (1+2 e^{-\theta \delta_{N}} )+\frac{\alpha}{2 \theta} (e^{-2 \theta \delta_{N}}+4 e^{-\theta \delta_{N}}-5 ) \Big]+\alpha^{2} \delta_{N}^{2}+\frac{\alpha^{2}}{\theta^{2}} (1-e^{-\theta \delta_{N}} )^{2}   -2 \frac{\alpha^{2}}{\theta} \delta_{N} (1-e^{-\theta \delta_{N}} )     ;$

 \vspace{0.25cm}

\item[$\bullet$]for $h < j $ and $s<r$, \\
 
$ \displaystyle \int_{a_{i,j-1,N}}^{a_{i,j,N}} \int_{a_{i,h-1,N}}^{a_{i,h,N}}E [v(s)E [ v(r)| \mathcal{F}_s^\nu]]ds dr   $\\
$=\Big[\Big( E[\nu(\tau)]-\alpha\Big)^{2} + \frac{\gamma^{2}}{\theta} \Big(\frac{\alpha}{2}-E[\nu(\tau)]\Big)\Big] \frac{1}{\theta^{2}}\cdot e^{-2 \theta i \Delta_{N} -\theta j   \delta_{N} -\theta h \delta_{N}+2 \theta {k_{N}} \delta_{N}} (1-e^{\theta {\delta_N  }} )^{2} $
 \\
$-\frac{\gamma^{2} \alpha}{2\theta^{3}} e^{-\theta j \delta_{N}+\theta h \delta_{N}} (2-e^{-\theta \delta_{N}}-e^{\theta \delta_{N}})+\alpha^{2} \delta_{N}^{2}   -\frac{\gamma^2}{\alpha \theta}(E[\nu(\tau)]-\alpha)  \delta_{N} (1-e^{\theta \delta_{N}} ) 
 e^{-\theta i \Delta_{N}-\theta j \delta_{N}+\theta {k_N} \delta_{N}} $\\
 $-\frac{\alpha}{\theta}(E[\nu(\tau)]-\alpha)  \delta_{N} (1-e^{\theta \delta_{N}} ) e^{-\theta i\Delta_N-\theta h \delta_{N}+ \theta k_N \delta_{N}}$. 
  
\end{itemize}

Finally, putting everything together, we obtain the following expression for $E[RV^2 (\tau+i\Delta_N, k_N\delta_N)]$: 
 \vspace{0.25cm}

\noindent $E[RV^2 (\tau+i\Delta_N, k_N\delta_N)]$\\
$=\displaystyle  (1-e^{-2 \theta k_{N} \delta_{N}} ) (1-e^{-2 \theta \delta_{N}} )^{-1} e^{-2 i \Delta_{N}+2 \theta k_{N}  \delta_{N}} (1-e^{-\theta \delta_{N}} )^{2} \frac{3}{\theta^{2}} \Big[ (E[\nu(\tau)]-\alpha)^{2}+\frac{\gamma^{2}}{\theta} \Big(\frac{\alpha}{2}-E[\nu(\tau)] \Big) \Big]$\\
$+  (1-e^{-\theta  k_N \delta_{N}} ) (1-e^{-\theta \delta_{N}} )^{-1} e^{-i \Delta_{N}+\theta k_{N}  \delta_{N}} \Big\{ \frac{\gamma^{2}}{\theta}(E[\nu(\tau)]-\alpha) \frac{3}{\theta^{2}} (1-e^{-\theta \delta_{N}} )^{2}$\\
$+\frac{\gamma^{2}}{\theta} \Big(\frac{1}{\theta}-2 e^{-\theta \delta_{N}} \delta_{N}-\frac{1}{\theta} e^{-2 \theta \delta_{N}} \Big) \frac{3}{\theta}(E[\nu(\tau)]-\alpha)  + \Big[6 \frac{\alpha}{\theta} \delta_{N} (1-e^{- \theta \delta_{N}} )  \Big](E[\nu(\tau)]-\alpha) \Big \}$\\
$ +\frac{\gamma^{2}}{\theta} k_{N} \Big[ \frac{3\alpha}{2\theta^2}  (1-e^{-\theta \delta_{N}} )^{2}  +3 \frac{\alpha}{\theta} \Big(\frac{1}{\theta}-2 e^{-\theta \delta_{N}} \delta_{N}-\frac{1}{\theta} e^{-2 \theta \delta_{N}}  \Big)+ 3\frac{\alpha}{\theta}  \delta_{N} (1+2 e^{-\theta \delta_{N}})$\\
$+  \frac{3\alpha}{2\theta^{2}} (e^{-2 \theta \delta_{N}}+4 e^{-\theta \delta_{N}}-5 )  \Big]+3 \alpha^{2} \delta_{N}^{2} k_N$\\
$+2  \Big[\frac{\gamma^{2}}{\theta}  \Big(\frac{\alpha}{2}-E[\nu(\tau)]  \Big)+(E[\nu(\tau)]-\alpha)^{2}  \Big] \frac{1}{\theta^{2}}e^{2 \theta k_{N} \delta_{N} -2 \theta i \Delta_{N}-  \theta \delta_{N}}  $\\
 $\times\displaystyle (1-e^{-\theta (k_{N}-1 ) \delta_{N}}+e^{-\theta (k_{N}+i ) \delta_{N}}-e^{-\theta \delta_{N}}+e^{-\theta (2 k_{N}-1 ) \delta_{N}}  -e^{-2 \theta k_{N} \delta_{N}} )    (1-e^{-2 \theta \delta_{N}} )^{-1}$\\
 $+2   \alpha(E[\nu(\tau)]-\alpha) \frac{1}{\theta}  \delta_{N}  \displaystyle    e^{\theta {k_{N} \delta_{N}} -\theta i \Delta_{N}} (e^{\theta \delta_{N}}-1 )^{-1} $\\
 $\times [e^{-\theta k_{N} \delta_{N}} (e^{\theta k_{N} \delta_{N}}-1+k_{N}-k_{N} e^{\theta \delta_{N}} )+k_{N} (e^{\theta \delta_{N}}-1 )+  e^{\theta \delta_{N}} (e^{-\theta {k_N \delta_N}}-1 ) ]$\\
 $ +2 \frac{\gamma^{2}}{\theta}(E[\nu(\tau)]-\alpha) \frac{1}{\theta} \delta_{N}  e^{-\theta i \Delta_{N}}    (e^{\theta k_N \delta_{N}}-1+k_{N}-k_{N} e^{\theta \delta_{N}} ) (e^{\theta \delta_N}-1 )^{-1}$\\
 $   +\gamma^{2} \alpha  \frac{1}{\theta^{3}}  ( e^{-\theta k_N \delta_{N}}-1+k_{N}-k_{N} e^{-\theta \delta_{N}} ) +   \alpha^{2} \delta_{N}^{2} (k_{N}^{2}-k_{N} )$.

\vspace{0.25cm}

\textbf{II)} \textbf{Analytic expression of} $\boldsymbol{E[RV^2 (\tau+i\Delta_N-\Delta_N, k_N\delta_N)]$}

\vspace{0.25cm}
The analytic  expression of $E[RV^2 (\tau+i\Delta_N-\Delta_N, k_N\delta_N)] $ under Assumption \ref{Ass2} is easily obtained by replacing $i$ with $ i-1 $ in the explicit expression of $E[RV^2 (\tau+i\Delta_N, k_N\delta_N)] $ derived in {\bf I)}.

\vspace{0.5cm}
\textbf{IIIa)} \textbf{Analytic expression of} $\boldsymbol{E[RV (\tau+i\Delta_N, k_N\delta_N)RV (\tau+ i\Delta_N-\Delta_N\Delta_N, k_N\delta_N)]$} \textbf{for} $\boldsymbol{W_N \le \Delta_N}$

\vspace{0.25cm}
Assume that we are in the no-overlapping case $W_N  \le \Delta _N $. Then

\begin{eqnarray*}
&&E[RV (\tau+i\Delta_N, k_N\delta_N)RV (\tau+(i-1)\Delta_N, k_N\delta_N)] \\
&&=\displaystyle E\Big[ E\Big[\sum_{j=1}^{k_N}  \Big(\int_{a_{i,j-1,N}}^{a_{i,j,N}} \sqrt{v(s)} dW(s)\Big)^2 \displaystyle\sum_{j=1}^{k_N}  \Big(\int_{a_{i-1,j-1,N}}^{a_{i-1,j,N}} \sqrt{v(s)} dW(s)\Big)^2 | \mathcal{F}_{a_{i,k_N,N}}^v \Big]\Big] \\
&&=\displaystyle E\Big[ \sum_{j=1}^{k_N}  \int_{a_{i,j-1,N}}^{a_{i,j,N}} v(s) ds   \sum_{j=1}^{k_N}  \int_{a_{i-1,j-1,N}}^{a_{i-1,j,N}} v(s) ds \Big]=
 \displaystyle   \int_{a_{i,0,N}}^{a_{i,k_N,N}} \int_{a_{i-1,0,N}}^{a_{i-1,k_N,N}}E [v(s)E [ v(r)| \mathcal{F}_s^\nu] ]ds   dr, s<r. 
\end{eqnarray*}

\vspace{0.5cm}
 Under Assumption \ref{Ass2} (see, again, Appendix A in \cite{bz}),

\begin{eqnarray*}
 &&\displaystyle E[RV (\tau+i\Delta_N, k_N\delta_N)RV (\tau+(i-1)\Delta_N, k_N\delta_N)] \\
&& =\displaystyle\frac{1}{\theta^{2}} e^{\theta \Delta_{N}}\left(1-e^{\theta {k_{N}} \delta_{N}}\right)^{2}  e^{-2 \theta i \Delta_{N}}  \left[( E[\nu(\tau)]-\alpha)^{2}+\frac{\gamma^{2}}{\theta}\left(\frac{\alpha}{2}-E[\nu(\tau)]\right)\right] \\
&&\displaystyle- e^{-\theta \Delta_{N}}\left(2-e^{\theta {k_{N}} \delta_{N}}-e^{-\theta_{k_{N}} \delta_{N}}\right) \frac{\gamma^{2} \alpha}{2 \theta^3} \displaystyle-\frac{1}{\theta} k_{N} \delta_{N}\left(1-e^{\theta {k_N} \delta_{N}}\right) e^{-\theta {i} \Delta_{N}}\left[\left(\frac{\gamma^{2}}{\theta}+\alpha\right)(E[\nu(\tau)]-\alpha)\right] \\
&&\displaystyle-\frac{1}{\theta} k_{N} \delta_{N}\left(1-e^{\theta {k_{N}} \delta_{N}}\right) e^{\theta \Delta_N -\theta i\Delta_N}[\alpha(E[\nu(\tau)]-\alpha)]\displaystyle+\alpha^{2}\left(k_{N} \delta_{N}\right)^{2} .
\end{eqnarray*}

\vspace{1cm}

\textbf{IIIb)} \textbf{Analytic expression of} $\boldsymbol{E[RV (\tau+i\Delta_N, k_N\delta_N)RV (\tau+ i\Delta_N-\Delta_N\Delta_N, k_N\delta_N)]}$ \textbf{for} $\boldsymbol{W_N > \Delta_N}$

\vspace{0.25cm}
Assume now that we are in the overlapping case $W_N  > \Delta _N $. Then the parametric expression of  $E[RV (\tau+i\Delta_N, k_N\delta_N)RV (\tau+i\Delta_N-\Delta_N, k_N\delta_N)] $ can be decomposed $E[RV (\tau+i\Delta_N, k_N\delta_N)RV (\tau+(i-1)\Delta_N, k_N\delta_N)] $ into the sum of four components, that is

\vspace{0.5cm}

$E\Big[RV (\tau+i\Delta_N, k_N\delta_N)RV (\tau+(i-1)\Delta_N, k_N\delta_N)\Big] $

 $=E\Big[\Big(RV (\tau+i\Delta_N,\Delta_N)+RV (\tau+(i-1)\Delta_N, k_N\delta_N-\Delta_N)\Big)\Big(RV (\tau+(i-1)\Delta_N, k_N\delta_N-\Delta_N)+RV (\tau+i\Delta_N-k_N\delta_N, \Delta_N)\Big)\Big]$

$=E\Big[ RV (\tau+i\Delta_N,\Delta_N)RV (\tau+(i-1)\Delta_N, k_N\delta_N-\Delta_N)\Big] + E\Big[(RV (\tau+i\Delta_N,\Delta_N)RV (\tau+i\Delta_N-k_N\delta_N, \Delta_N)\Big]+ E\Big[RV^2 (\tau+(i-1)\Delta_N, k_N\delta_N-\Delta_N) \Big]$
$ +E\Big[RV (\tau+(i-1)\Delta_N, k_N\delta_N-\Delta_N)RV(\tau+i
\Delta_N-k_N\delta_N, \Delta_N) \Big]$.

\vspace{0.25cm}
We then obtain the parametric expressions of these four components, which we term $O_1$, $O_2$, $O_3$ and $O_4$, respectively (we omit the intermediate steps, as are they are analogous to those followed in {\bf I)} and {\bf IIIa)}):

\begin{itemize}

\item[-]$\displaystyle O_1:=E\Big[(RV (\tau+i\Delta_N,\Delta_N)RV (\tau+i\Delta_N-k_N\delta_N, \Delta_N)\Big]$\\
$=\alpha^{2} \Delta_{N}^{2}-\left(E[\nu(\tau)]-\alpha\right)\left(\frac{\gamma^{2}}{\theta}+\alpha\right) \Delta_{N} \frac{1}{\theta} e^{-\theta i \Delta_{N}}\left(1-e^{\theta \Delta_{N}}\right)\\-  \alpha\left(E[\nu(\tau)]-\alpha\right) \Delta_{N} \frac{1}{\theta} e^{-\theta {i} \Delta_{N}+\theta {k_N} \delta_{N}}\left(1-e^{\theta \Delta_{N}}\right) $
$\displaystyle -\frac{\gamma^{2} \alpha}{2 \theta^3}    e^{-\theta {k_{N}} \delta_{N}} (2-e^{-\theta \Delta_N} -e^{\theta \Delta_{N}} )\\+  \Big[\frac{\gamma^{2}}{\theta} \Big(\frac{\alpha}{2}-E[\nu(\tau)]\Big )+ (E[\nu(\tau)]-\alpha )^{2} \Big] \frac{1}{\theta^{2}} e^{-2 \theta {i} \Delta_{N}} (1-e^{\theta \Delta_{N}} )^{2} e^{\theta {k_{N}} \delta_{N}};$

\vspace{0.25cm}
\item[-] $\displaystyle O_2:=E\Big[ RV (\tau+i\Delta_N,\Delta_N)RV (\tau+(i-1)\Delta_N, k_N\delta_N-\Delta_N)\Big]$\\
$=\alpha^{2} \Delta_{N} (k_{N} \delta_{N}-\Delta_{N} )+ (E[\nu(\tau)]-\alpha ) \Big(\frac{\gamma^{2}}{\theta}+\alpha \Big) (k_{N} \delta_{N}-\Delta_{N} ) \frac{1}{\theta} e^{-\theta {i} \Delta_{N}}  (e^{\theta \Delta_{N}}-1 )$\\
$+\alpha (E[\nu(\tau)]-\alpha ) \Delta_{N} \frac{1}{\theta} e^{-\theta ({i}-1) \Delta_{N}} (e^{\theta ({k_{N}} \delta_{N}-\Delta_{N})}-1 ) $
$\displaystyle-\frac{\gamma^{2} \alpha}{2 \theta^{3}}   (1-e^{\theta \Delta_{N}} ) e^{-\theta {k_{N}} \delta_{N}}  (e^{\theta ({k_{N}} \delta_{N}- \Delta_N)}-1)$\\
$ -\displaystyle \Big[\frac{\gamma^{2}}{\theta} (\frac{\alpha}{2}-E[\nu(\tau)] )+ (E[\nu(\tau)]-\alpha )^{2} \Big] \frac{1}{\theta^{2}} e^{-2 \theta {i} \Delta_{N}} (1-e^{\theta \Delta_{N}} )e^{\theta \Delta_{N}} (e^{\theta ({k_{N}} \delta_{N}-\Delta_{N})}-1 );$

\vspace{0.25cm}
\item[-]$\displaystyle O_3:=E\Big[RV (\tau+(i-1)\Delta_N, k_N\delta_N-\Delta_N)RV(\tau+i
\Delta_N-k_N\delta_N, \Delta_N) \Big]$\\
$=\alpha^{2} \Delta_{N}\left(k_{N} \delta_{N}-\Delta_{N}\ )+ (E[\nu(\tau)]-\alpha\right)\ \Big(\frac{\gamma^{2}}{\theta}+\alpha \Big) \Delta_{N} \frac{1}{\theta} e^{-\theta (i-1) \Delta_{N}}(e^{\theta (k_{N} \delta_{N} - \Delta_{N})}-1 )$\\
$+\alpha (E[\nu(\tau)]-\alpha ) (k_{N} \delta_{N}-\Delta_{N} ) \frac{1}{\theta} e^{-\theta {i} \Delta_{N}+\theta{k_N} \delta_{N}}   (e^{\theta \Delta_{N}}-1 )+\frac{\gamma^{2} \alpha}{2 \theta^{3}} e^{\theta \Delta_{N}}(e^{\theta (k_{N} \delta_{N}-\Delta_N)}-1 ) e^{-\theta k_{N} \delta_{N}} (1-e^{-\theta \Delta_N} ) - \Big[\frac{\gamma^{2}}{\theta} \Big(\frac{\alpha}{2}-E[\nu(\tau)] \Big)\\+ (E[\nu(\tau)]-\alpha )^{2} \Big] \frac{1}{\theta^{2}} e^{-2 \theta i \Delta_{N}+\theta {k_{N}  \delta_{N}}} (1-e^{\theta \Delta_{N}} )e^{\theta \Delta_{N}} (e^{\theta ({k_{N} \delta_{N}} -  \Delta_{N})}-1 );$

\vspace{0.25cm}
\item[-] $\displaystyle O_4:=E\Big[RV^2 (\tau+(i-1)\Delta_N, k_N\delta_N-\Delta_N) \Big] \\=
 (k_{N} \delta_{N}-\Delta_{N}) \frac{1}{\delta_{N}} \Big[3 \frac{\gamma^{2}}{\theta^{2}} \alpha \Big(\frac{1}{2 \theta}(1-e^{-\theta \delta_{N}})\Big)^{2}+\frac{1}{\theta}-2\delta_{N} e^{-\theta \delta_{N}}     -\frac{1}{\theta} e^{-2 \theta \delta_{N}}+\delta_{N} (1+2 e^{-\theta \delta_{N}} )+\frac{1}{2 \theta} (e^{-2 \theta \delta_{N}}+4 e^{-\theta \delta_{N}}-5 ) + 3 \alpha^{2} \delta_{N}^{2} \Big] \\ 
 +e^{-2 \theta \delta_{N}} (1-e^{-2 \theta (k_{N} \delta_{N}-\Delta_{N} )} ) (1-e^{-2 \theta \delta_{N}})^{-1}   \Big[3 \frac{1}{\theta^2} (1-e^{-\theta\delta_N})^2 e^{-2 \theta {i} \Delta_{N}+2 \theta (1+k_{N} ) \delta_{N}} \Big( {E[\nu(\tau)]^{2}}-\frac{\gamma^{2}}{\theta} E[\nu(\tau)]-2 \alpha E[\nu(\tau)]+\frac{\gamma^{2} \alpha}{2 \theta}+\alpha^{2} \Big) \Big]\\
 +\frac{3}{\theta^{2}}e^{-\theta \delta_N} (1-e^{-\theta(k_{N} \delta_{N}-\Delta_{N} )} ) (1-e^{-\theta \delta_{N}})^{-1} e^{-\theta {i} \Delta_{N}+\theta (1+k_{N} ) \delta_{N}}\Big[ \frac{\gamma^{2}}{\theta}(E[\nu(\tau)]-\alpha )    (1-e^{-\theta \delta_{N}} )^{2}+ \gamma^{2}(E[\nu(\tau)]-\alpha )  \Big(\frac{1}{\theta}-2\delta_{N} e^{-\theta \delta_{N}}    -\frac{1}{\theta} e^{-2 \theta \delta_{N}} \Big)+2\alpha (E[\nu(\tau)]-\alpha )  (1-e^{-\theta \delta_{N}} ) \theta \delta_{N}\Big]\\
+\alpha^2\Big(k_N \delta_N-\Delta_N\Big)^2 -\alpha^2\delta_N \Big(k_N\delta_N - \Delta_N\Big)$\\
$\displaystyle +\frac{2}{\theta^2}e^{-2\theta i \Delta_N+2\theta k_N \delta_N-2 \theta \delta_{N}}(1-e^{\theta\delta_N})^2 (e^{\theta \delta_{N}}-1 )^{-1}(1-e^{-\theta \delta_{N}} )^{-1} (1-e^{-2 \theta \delta_{N}} )^{-1}\Big[\frac{\gamma^2}{\theta}(E[\nu(\tau)]-\alpha)^2 +  \Big(\frac{\alpha}{2}-E[\nu(\tau)]\Big)\Big]\\
\times \Big(1-e^{-\theta (k_N\delta_N  -\Delta_N)}\Big) \Big[-e^{\theta\delta_N}e^{-\theta(k_n\delta_N-\Delta_N)} +\Big(1+e^{-\theta(k_N\delta_N-\Delta_N)}\Big)-e^{-\theta \delta_{N} } \Big] $\\
\vspace{0.1cm}
 $\displaystyle -2 \frac{\gamma^2\alpha}{2 \theta^3}(2-e^{-\theta\delta_N }-e^{\theta \delta_N})  \Big(\Big(e^{-\theta (k_N\delta_N -\Delta_N)}-1\Big)+(k_N\delta_N -\Delta_N)\delta_N^{-1}(1-e^{-\theta \delta_N})\Big)(e^{-\theta \delta_{N}}+e^{\theta \delta_{N}}-2)^{-1}$ \\
 \vspace{0.1cm}
 $ -\frac{2}{\theta} \Big(\frac{\gamma^2}{\theta}+\alpha\Big) (E[\nu(\tau)]-\alpha)\delta_N (1-e^{\theta \delta_N})e^{-\theta i \Delta_N+  \theta  \Delta_N  }   (e^{\theta \delta_{N}}-1 )^{-2} \Big((e^{\theta (k_N\delta_N -\Delta_N) }-1)+ (k_N\delta_N -\Delta_N)\delta_N^{-1}(1- e^{\theta \delta_{N}})  \Big)$\\
 \vspace{0.1cm}
 $ -\frac{2\alpha}{\theta}  (E[\nu(\tau)]-\alpha)\delta_N (1-e^{\theta \delta_N}) e^{-\theta i \Delta_N+\theta k_N \delta_N} (e^{\theta \delta_{N}}-1 )^{-2} \displaystyle [(k_N\delta_N -\Delta_N)\delta_N^{-1} (e^{\theta \delta_{N}}-1)+e^{\theta \delta_{N}} (e^{-\theta {(k_N\delta_N -\Delta_N) }  }-1) ].$

\end{itemize}

The contribution to  the PSRV finite-sample bias due to the overlapping of consecutive local windows to estimate the spot volatility  (i.e., due to assuming that $W_N=k_N\delta_N>\Delta_N$) is mainly due to the terms $O_2$, $O_3$, and $O_4$. In fact, when $k_N\delta _N=\Delta_N$ (i.e., $W_N=\Delta_N$), the terms $O_2$, $O_3$, and $O_4$ are equal to zero.
Interestingly, the terms $O_2$, $O_3$, and $O_4$ are functions of the quantity $(k_N\delta_N-\Delta_N)$ (i.e., $W_N-\Delta_N$) and, in particular, 
are $O(k_N\delta_N-\Delta_N)$ as $(k_N\delta_N-\Delta_N)\to 0^+$ as $N \to \infty$, as one can check focusing on the terms $(k_N\delta_N-\Delta_N)$ and $(e^{\theta(k_N\delta_N-\Delta_N)}-1)$.  \\

\vspace{0.5cm}
After plugging the explicit expressions obtained in {\bf  I)},  {\bf II)} and  {\bf IIIa)} (resp., \textbf{IIIb)}) into Eq. (\ref{A1}),  simple but tedious calculations {yield} the parametric expression of $E[PSRV_{[\tau,\tau+h],N }]  $ under Assumption \ref{Ass2}, which can be expressed in the following compact form:

$$E[PSRV_{[\tau,\tau+h],N }]= \gamma^2 \alpha h A_N + \gamma^2  \Big( E[\nu(\tau)]-\alpha\Big) \frac{1-e^{-\theta h}}{h}B_N + C_N \ \  \ \ \ \ if \ \ \  W_N \le \Delta_N,  $$

$$E[PSRV_{[\tau,\tau+h],N }]= \gamma^2 \alpha h A_N + \gamma^2  \Big( E[\nu(\tau)]-\alpha\Big) \frac{1-e^{-\theta h}}{h}B_N + C_N + O_N \ \ \ \ \ \ if \ \ \ W_N > \Delta_N,$$

where:

\vspace{0.25cm}

 \begin{flalign}\label{AN}
 &\displaystyle A_N = (k_N \delta_N)^{-2} \Delta_{N}^{-1}  \Big\{\frac{2}{\theta} k_{N}  \Big[ \frac{3 }{2\theta^{2}} (1-e^{-\theta \delta_{N}} )^{2}+3 \frac{1}{\theta} \Big(\frac{1}{\theta}-2 e^{-\theta \delta_{N}} \delta_{N}-\frac{1}{\theta} e^{-2 \theta \delta_{N}} \Big)
\nonumber\\
& +3\frac{1}{\theta}  \delta_{N} (1+2 e^{-\theta \delta_{N}} ) + \frac{ 3}{2\theta^{2}} (e^{-2 \theta \delta_{N}}+4 e^{-\theta \delta_{N}}-5 ) \Big]+\frac{2}{\theta^{3}}(e^{-\theta {k_N} \delta_{N}}-1+k_{N}-k_{N} e^{-\theta \delta_{N}})\nonumber\\
&+\frac{1}{\theta^{3}} e^{-\theta \Delta_{N}}   (2-e^{\theta {k_{N} \delta_{N}}}-e^{-\theta {k_{N}} \delta_{N}} ) \Big\}    ;&&
\end{flalign}

 \begin{flalign}\label{BN}
&\displaystyle B_N =(k_N \delta_N)^{-2} e^{-\theta \Delta_{N}}(1-e^{-\theta \Delta_{N}})^{-1}  \Big\{  (1+e^{\theta\Delta_N}) \Big[ \frac{3}{\theta^{2}}(e^{\theta k_N\delta_N  }-1 )    (1-e^{-\theta \delta_{N}} )\nonumber\\
 &+ \frac{3}{\theta}( e^{\theta k_N \delta_{N}}-1 ) (1-e^{-\theta \delta_{N}} )^{-1} \Big(\frac{1}{\theta}-2 e^{-\theta \delta_{N}} \delta_{N}-\frac{1}{\theta} e^{-2 \theta \delta_{N}} \Big)\nonumber\\
 &+2 \frac{1}{\theta} \delta_{N} (e^{\theta \delta_{N}}-1 )^{-1} (k_{N}-1+e^{\theta k_N \delta_{N}}-k_{N} e^{\theta \delta_{N}} )\Big ]
+  \frac{2}{\theta} k_{N} \delta_{N} (1-e^{\theta k_N \delta_{N}}   ) \Big\}  ;&&
 \end{flalign}

  \begin{flalign}\label{CN}
&\displaystyle C_N =(k_N \delta_N)^{-2} \Bigg\{e^{-2 \theta \Delta_{N}}  (1-e^{-2 \theta h})(1-e^{-2 \theta \Delta_{N}})^{-1} \frac{1}{\theta^{2}} \Big[(E[\nu(\tau)]-\alpha)^{2}+\frac{\gamma^{2}}{\theta} \Big(\frac{\alpha}{2}-\nu(\tau) \Big) \Big] \nonumber\\
&\times\Big\{ (1+e^{2 \theta \Delta_{N}} ) (1-e^{-2 \theta \delta_{N}} )^{-1} \Big[3 (e^{2 \theta k_{N} \delta_{N}}-1 ) (1-e^{-\theta \delta_{N}} )^{2}+2 (1-e^{-\theta \delta_{N}} )\nonumber \\
&  +2 e^{\theta {k_{N}} \delta_{N}} (e^{-2 \theta    \delta_{N}}-1 )+2 e^{2 \theta {k_{N} \delta_{N}} -\theta \delta_{N}} (1-e^{-\theta {\delta_{N}}} ) \Big]  
 -2  e^{\theta \Delta_{N}} (1-e^{\theta {k_{N} \delta_{N}}})^{2}\Big\}  +(6 \alpha^{2} \delta_{N}^{2} k_{N} - 2 \alpha^{2} k_{N} \delta_{N}^{2}    ) h \Delta_{N}^{-1}\nonumber \\
 &+  e^{-\theta \Delta_N} (1-e^{-\theta h}) (1-e^{-\theta \Delta_N})^{-1}\Big\{\Big[6 \frac{\alpha}{\theta} \delta_{N} (E[\nu(\tau)]-\alpha) ( e^{\theta k_N \delta_N}-1) 
 +  2 \frac{\alpha}{\theta} \delta_{N}(E[\nu(\tau)]-\alpha) (e^{\theta \delta_{N}}-1 )^{-1}  \nonumber 
 \\
&\times [ (e^{\theta k_N \delta_{N}}-1+k_{N}-k_{N} e^{\theta \delta_{N}} )+ k_N  e^{\theta k_{N} \delta_{N}} (e^{\theta \delta_{N}}-1 )  +e^{\theta \delta_{N}} (1-e^{\theta {k_{N} \delta N}} ) ] \Big] (1+e^{\theta \Delta_N} )\nonumber\\
 & +  \frac{2\alpha}{\theta} k_{N} \delta_{N} (E[\nu(\tau)]-\alpha) (1+e^{\theta \Delta_{N}} ) (1-e^{\theta k_N \delta_{N}} ) \Big\}\Bigg\};&&
\end{flalign}

\begin{flalign}\label{ON}
&\displaystyle O_N= (k_N \delta_N)^{-2}\Bigg\{4\alpha ^2 h   \delta_N \nonumber\\
&\displaystyle + \gamma^2\alpha h \theta ^{-3}\frac{e^{-\theta(\delta_N +k_N\delta_N 
+\Delta_N)  }   }{{\Delta_N }}\Big[ e^{ \theta \delta_N }-2 e^{\theta\delta_N  (1+k_N)  }+e^{ \theta \delta_N  (1+2 k_N)}-2 e^{\theta(\delta_N +\Delta_N)  } \nonumber\\ & \displaystyle -4 e^{\theta(\delta_N  k_N+\Delta_N)  } k_N+e^{\theta(\delta_N +\delta_N 
k_N+\Delta_N)  } (2+k_N (4-6  \theta \delta_N ))\Big]\nonumber\\
&\displaystyle + 2 \gamma^2  \alpha h\theta ^{-3}\frac{ e^{-\theta\delta_N (1+k_N)  }}{{  \Delta_N }}\Big[ e^{\theta\delta_N  }+2 e^{\theta\delta_N k_N 
} k_N-e^{\theta\delta_N (1+k_N)  } (1-k_N (3\theta \delta_N-2  ))\Big]\nonumber\\
&-\displaystyle \gamma ^2\alpha h \theta^{-3} \frac{e^{-\theta (1+2 k_N)  \delta_N }  }{{\delta_N \Delta_N }}\Big[ -4 e^{2 \theta\delta_N k_N 
} (1-e^{ \theta \delta_N }) \Delta_N+6\theta e^{\theta \delta_N (1+2 k_N) } k_N\delta_N^2 \nonumber\\ & \displaystyle  +\delta_N \Big(e^{\theta(\delta_N+\delta_N k_N-\Delta_N)  }-2 e^{\theta(\delta_N+2 \delta_N k_N-\Delta_N)  }+e^{\theta(\delta_N+\delta_N k_N+\Delta_N) }+4 e^{2 \theta \delta_N k_N } k_N  -2 e^{\theta\delta_N(1+2 k_N)  } (2 k_N+3\theta \Delta_N  )\Big)\Big]\nonumber\\
&-\displaystyle \frac{1}{{ \theta ^2(1-e^{-2\theta \Delta_N  })}}\Big[e^{-2 \theta \Delta_N } (1-e^{-2  \theta h})
(-1+e^{\theta k_N\delta_N  }) (-2 e^{\theta\Delta_N  } (-1+e^{\theta k_N\delta_N  })\nonumber\\
&+\displaystyle\frac{(-3+e^{ \theta \delta_N }-e^{k_N  \delta_N \theta }+3 e^{\delta_N (1+k_N)
\theta }) (1+e^{2 \theta\Delta_N  })}{1+e^{  \theta \delta_N}} \Big(\frac{\gamma ^2 (\alpha -2 E[\nu(\tau)] )}{2 \theta }+(\alpha -E[\nu(\tau)] )^2\Big)\Big]\nonumber\\
&-\displaystyle 2\theta ^{-2}(\alpha -E[\nu(\tau)] )(1-e^{- \theta h} ) \frac{ (e^{\theta  k_N \delta_N}-1)  (\gamma ^2+\alpha  \theta(1+e^{\theta\Delta_N
 }k_N)  )\delta_N  }{{ (-1+e^{\theta\Delta_N  })}} &&\nonumber\\
 &+\displaystyle 2\theta^{-3} ( \alpha-E[\nu(\tau)] )  \frac{ e^{-\theta(\delta_N+k_N\delta_N -\Delta_N)  } (-e^{-\theta h}+1) }{ (-1+e^{\theta\delta_N  }) (-1+e^{\theta\Delta_N
 })  }\Big[\alpha  \theta ^2 e^{2\theta \delta_N (1+k_N)  } k_N \delta_N  -\alpha  \theta ^2 e^{\theta\delta_N (1+2
k_N)  } k_N \delta_N +\theta  (\gamma ^2+\alpha  \theta )e^{ \theta(\delta_N+ k_N\delta_N-\Delta_N) } k_N\delta_N \nonumber \\
& \displaystyle-\theta  (\gamma
^2+\alpha  \theta ) e^{\theta(\delta_N (2+k_N)-\Delta_N)  } k_N\delta_N  -e^{\theta\delta_N (2+k_N)  } \Big( \alpha  \theta ^2 (4+k_N)\delta_N+\gamma ^2 (6+\theta k_N\delta_N  -\theta\Delta_N  )\Big)\nonumber\\& \displaystyle+e^{\theta\delta_N (1+k_N)  } \Big(\alpha  \theta ^2
(4+k_N) \delta_N +\gamma ^2 (6+\theta\delta_N (4+k_N)  -\theta\Delta_N  )\Big) +e^{\theta(2 \delta_N (1+k_N)-\Delta_N)  } \Big( \alpha  \theta ^2 (4+k_N)\delta_N+\gamma ^2 (6+\Delta_N \theta)\Big)\nonumber\\
& \displaystyle-e^{\theta(\delta_N+2 k_N\delta_N -\Delta_N)  } \Big(\alpha  \theta ^2 (4+k_N)\delta_N +\gamma ^2 (6+4\theta \delta_N  +\theta\Delta_N  )\Big) \Big]\nonumber\\
&- \displaystyle \frac{2 \alpha ^2 \theta +\alpha  (\gamma ^2-4 \theta  E[\nu(\tau)] )+2 E[\nu(\tau)]  (\theta  E[\nu(\tau)]-\gamma ^2 )}{{2 \theta ^3 
(1+e^{\theta\delta_N  }) (-1+e^{2\theta \Delta_N  }) }}\Big[   (1-e^{-2  \theta h}) (-1+e^{ \theta k_N \delta_N}) (3-e^{\theta\delta_N  }+e^{\theta k_N \delta_N }-3 e^{ \theta (1+k_N)\delta_N }) (1+e^{2\theta
\Delta_N  })\Big]\nonumber\\
&-\displaystyle \frac{\theta ^{-3}(2 \alpha ^2 \theta +\alpha  (\gamma ^2-4 \theta  E[\nu(\tau)] )+2 E[\nu(\tau)]  (-\gamma
^2+\theta  E[\nu(\tau)] ) (1-e^{-2 h \theta }) }{{ (1+e^{\theta \delta_N }) (-1+e^{2\theta \Delta_N  }) }}\Bigg[-2 e^{2\theta k_N\delta_N 
 }+2 e^{ \theta (1+2 k_N)\delta_N }+e^{\theta\Delta_N  }+2 e^{2\theta  \Delta_N }+e^{\theta(\delta_N+\Delta_N)  }\\ &-2 e^{\theta(k_N\delta_N +\Delta_N)  }\nonumber  \displaystyle-2 e^{\theta(\delta_N+k_N\delta_N +\Delta_N)  }+e^{\theta(2  k_N\delta_N+\Delta_N) 
}+e^{\theta (\delta_N+2  k_N\delta_N+\Delta_N) }  -2 e^{\theta(\delta_N+2 \Delta_N)  }\Bigg] \Bigg\}.&&
 \end{flalign}

 The proof is complete.
\end{proof}

\subsection*{Theorem \ref{th4}}

\begin{proof}
Consider the exact parametric expression {for} the PSRV bias under Assumption \ref{Ass2} in the case $W_N > \Delta_N$, given in Lemma \ref{explbias}.  By expanding it sequentially, first as $\lambda \to 0$, and then as $h \to 0$, we obtain:

\vspace{0.5cm}
\hspace{-1cm} $E\Big[PSRV_{[\tau,\tau+h],N} - \langle\nu,\nu\rangle_{[\tau,\tau+h]}  \Big] =\begin{cases} \displaystyle  \Bigg(  \frac{4E[\nu(\tau)]^2}{\kappa^2\delta_N^{1+2b}} -\gamma^2 E[\nu(\tau)]    \Bigg)h  + O(h^{1-b}) +O(\lambda)  \quad \textit{if} \quad b\ge-1/2, c<-b \\ \\ -\gamma^2 E[\nu(\tau)]  h  + O(h^{-2b}) +O(\lambda)  \quad \textit{if} \quad b<-1/2, c<1+b \end{cases},$ 

as $\lambda \to 0$, $h \to 0$.

The sequential expansions as $h \to 0$, $\lambda \to 0$ are performed using the software \textit{Mathematica}. {The} code is available as supplementary material.

 \vspace{0.25cm}
Furthermore, let $(\mathcal{F}^{\nu}_t)_{t\ge0}$ denote the natural filtration associated {with} the process $\nu$.  It is straightforward to see that

\vspace{0.5cm}
$ E\Big[PSRV_{[\tau,\tau+h],N} - \langle\nu,\nu\rangle_{[\tau,\tau+h]}  | \mathcal{F}^{\nu}_\tau\Big] = \begin{cases} \displaystyle  \Bigg(  \frac{4 \nu(\tau)^2}{\kappa^2\delta_N^{1+2b}} -\gamma^2  \nu(\tau)     \Bigg)h  + O(h^{1-b}) +O(\lambda)  \quad \textit{if} \quad b\ge-1/2, c<-b \\ \\ -\gamma^2 \nu(\tau)   h  + O(h^{-2b}) +O(\lambda)  \quad \textit{if} \quad b<-1/2, c<1+b   \end{cases},$ 
 
  as $\lambda \to 0$, $h \to 0$.

\end{proof}

\subsection*{Theorem \ref{th1}}

\begin{proof}
Consider the exact parametric expression {for} the PSRV bias under Assumption \ref{Ass2} in the case $W_N \le \Delta_N$, given in Lemma \ref{explbias}. Then recall that for $N \to \infty$, $\Delta_N =O(\delta_N^c)$, $c \in (0,1),$ and $k_N=O(\delta_N^b) $, $b \in (-1,0)$. Moreover, note that for $b\ge-1/2$ and $c<-b$ or $b<-1/2$ and $c < 1+ b $, we have

$$\lim _{N \rightarrow+\infty} \frac{1}{k_{N} \Delta_{N}}=0$$ and 
$$\lim _{N \rightarrow+\infty} \frac{k_{N} \delta_N}{\Delta_{N}}=0.$$

Expanding $A_N$, $B_N$, and $C_N$ as $N\to \infty$, one obtains 

\begin{itemize}
\item[$\bullet$] $A_N \sim  1+ \frac{2}{\theta k_{N} \Delta_{N}}  + \frac{\theta(k_{N} \delta_{N})^2}{4\Delta_{N}}   -\frac{\theta^{2} (k_{N} \delta_{N} )^{2}}{4}-\frac{\theta\Delta_{N}}{2}  +\frac{\theta^{3}(k_{N} \delta_{N} )^{2}  \Delta_{N}}{8} ;   $
 
\item[$\bullet$] $B_N \sim 1+
\frac{4 }{\theta k_N \Delta_N }+
\frac{2 \delta_N}{\Delta_N}+
\frac{2}{k_N }-
\frac{4  \delta_N}{ k_N \Delta_N}-
\frac{k_N  \delta_N}{ \Delta_N }-
\frac{2\theta   \delta^2_N}{  \Delta_N  }+
\frac{1}{2}\theta \delta_N-
\frac{2\theta \delta_N}{k_N}+
\frac{1}{2}\theta \Delta_N+
  \frac{\theta\Delta_N}{k_N}+
  \frac{\theta \delta^2_N k_N}{2\Delta_N}-
  \frac{\theta^2\delta_N\Delta_N}{k_N}-\theta^2\delta_N^2+ \frac{\theta^2 \delta_N\Delta_N}{4}-\frac{\theta^3\delta_N^2\Delta_N}{2};$
\item[$\bullet$] $C_N \sim \frac{1-e^{-2 \theta h}}{2\theta^3}\Big[(E[\nu(\tau)]-\alpha)^2+\frac{\gamma^2}{\theta}(\frac{\alpha}{2}-E[\nu(\tau)])\Big] \Big[\frac{4 \theta^2  }{k_N \Delta_N} + \frac{4 \theta^3 \delta_N  }{\Delta_N} + \theta^4 \Delta_N+\frac{4 \theta^4 \Delta_N}{k_N} + 3 \theta^5 \Delta_N \delta_N\Big]+\frac{4 \alpha^2 h}{k_N \Delta_N}+\frac{8 \alpha (E[\nu(\tau)]-\alpha)(1-e^{-\theta h})}{\theta k_N \Delta_N},$
\end{itemize} from which  we get Eq.(\ref{dom}).

Based on the corresponding asymptotic expansions, one can easily check that as $N \to \infty$, if $b\ge-1/2$ and $c<-b$ or, alternatively,  $b<-1/2$ and $c < 1+ b $, then $A_N \to 1 $, $B_N \to 1$ and $C_N \to 0$.
This implies that as $N \to \infty$, if $b\ge-1/2$ and $c<-b$ or, alternatively,  $b<-1/2$ and $c < 1+ b $, then $E[PSRV_{[\tau,\tau+h],N }]= \gamma^2 \alpha h A_N + \gamma^2 (E [ \nu(\tau)]-\alpha) \frac{1-e^{-\theta h}}{h}B_N + C_N $ converges to $ E[\langle \nu,\nu\rangle_{[\tau,\tau+h]  }]=\gamma^2 \alpha h   + \gamma^2 (E[ \nu(\tau)]-\alpha) \frac{1-e^{-\theta h}}{\theta}   $, where the equivalence $E[\langle\nu,\nu\rangle_{[\tau,\tau+h]  }]=\gamma^2 \alpha h   + \gamma^2 (E [ \nu(\tau)]-\alpha) \frac{1-e^{-\theta h}}{\theta}$ is obtained from Appendix A in \cite{bz}.  

 \vspace{0.25cm}
  In particular, one can easily verify that, as $N \to \infty$:

 \begin{itemize}
 \item for $b \ge -1/2$ and $c<-b$,  
 \begin{eqnarray}\label{rateABCfirst}
 && A_N-1= O(\Delta_N),\,\, B_N-1=O(\Delta_N),\,\, C_N=O(\Delta_N) \quad \textit{if} \quad c <-b/2,  \\
&& A_N-1= O\Big(\frac{1}{k_N\Delta_N}\Big) ,\,\, B_N-1= O\Big(\frac{1}{k_N\Delta_N}\Big),\,\, C_N= O\Big(\frac{1}{k_N\Delta_N}\Big)\,\, \textit{if} \,  -b/2\le c  <-b;  
 \end{eqnarray}
 
 \item for $-2/3 \le b < -1/2$ and $c<1+b$,  
\begin{eqnarray}
 && A_N-1= O(\Delta_N), \,\, B_N-1 = O(\Delta_N),\,\, C_N = O(\Delta_N) \quad  \quad \textit{if} \quad c <(1+b)/2 ,\\
 && A_N-1= O(\Delta_N), \,\, B_N-1 =O\Big(\frac{k_N\delta_N}{ \Delta_N}\Big) ,\,\, C_N = O(\Delta_N)  \quad \textit{if} \quad (1+b)/2\le c <-b/2 , \\
 && A_N-1= O\Big(\frac{1}{k_N\Delta_N}\Big), \, B_N-1=O\Big(\frac{k_N\delta_N}{ \Delta_N}\Big), \, C_N=O\Big(\frac{1}{k_N\Delta_N}\Big)\, \textit{if} \,  -b/2\le c  <1+b  ; \nonumber\\
 \end{eqnarray}
 
   \item for $b<-2/3 $ and $c<1+b$,  
\begin{eqnarray}
 && A_N-1= O(\Delta_N), \,\, B_N-1= O(\Delta_N), \,\, \quad C_N = O(\Delta_N)   \quad \textit{if} \quad c<(1+b)/2,   \\
&& A_N-1= O(\Delta_N), \,\, B_N-1= O\Big(\frac{k_N\delta_N}{\Delta_N}\Big), \quad \,\, C_N = O(\Delta_N) \,\,  \textit{if} \,  (1+b)/2\le c  <1+b .\label{rateABClast}
 \end{eqnarray}

\end{itemize}

 The proof is complete.
  
\end{proof}

\subsection*{Corollary 1}
\begin{proof}
 Based on Eq. $(\ref{dom})$ and the asymptotic rates of $A_N$, $B_N$ and $C_N$ (see Eqs. $(\ref{rateABCfirst})-(\ref{rateABClast}))$, we observe that:

\begin{itemize}
\item[-] for $b \ge -1/2,$  $c <-b/2$ or $b<-1/2,$ $c < (1+b)/2$ or $b=-2/3,$ $c<1/6 $, 

$$E\Big[PSRV_{[\tau,\tau+h],N} - \langle\nu,\nu\rangle_{[\tau,\tau+h]}  \Big]=a_1\lambda\delta_N^c + o(\delta_N^c);$$

\item[-] for $b>-1/2, c \in (-b,-b/2),$

$$E\Big[PSRV_{[\tau,\tau+h],N} - \langle\nu,\nu\rangle_{[\tau,\tau+h]}  \Big]=a_2\frac{1}{\kappa \lambda   }\delta_N^{-b-c} + o(\delta_N^{-b-c} );$$

\item[-] for $b \in (-2/3,-1/2 ), c \in ((1+b)/2,1+b) \quad or \quad  {b <-2/3, c \in ((1+b)/2,1+b)},$

 $$E\Big[PSRV_{[\tau,\tau+h],N} - \langle\nu,\nu\rangle_{[\tau,\tau+h]}  \Big]=a_3\frac{\kappa }{\lambda }\delta_N^{1+b-c}  + o(\delta_N^{1+b-c}  );$$
 
\item[-] for $b=-2/3$, ${1/6<c<1/3} $,

 $$E\Big[PSRV_{[\tau,\tau+h],N} - \langle\nu,\nu\rangle_{[\tau,\tau+h]}  \Big]=a_3 \frac{\kappa}{ \lambda} \delta_N^{1/3-c}+ o(\delta_N^{1/3-c});$$

\item[-] for $b =-1/2,$ $c=1/4,$

 $$E\Big[PSRV_{[\tau,\tau+h],N} - \langle\nu,\nu\rangle_{[\tau,\tau+h]}  \Big]=\frac{1}{\lambda}\delta_N^{1/4}(a_1 \lambda^2 + a_2 \kappa^{-1} + a_3 \kappa)+ o(\delta_N^{1/4});$$
 
 \item[-] for $b=-1/2  , c >1/4,$

 $$E\Big[PSRV_{[\tau,\tau+h],N} - \langle\nu,\nu\rangle_{[\tau,\tau+h]}  \Big]=    \frac{1}{\lambda}\delta_N^{1/2-c}(a_2 \kappa^{-1} + a_3 \kappa) + o(\delta_N^{1/2-c}); $$ 
 
 \item[-] for $b=-2/3$, $c=1/6$,

 $$E\Big[PSRV_{[\tau,\tau+h],N} - \langle\nu,\nu\rangle_{[\tau,\tau+h]}  \Big]=  \delta_N^{1/6}( a_1 \lambda  + a_3 \kappa \lambda^{-1}) + o(\delta_N^{1/6}).$$

\end{itemize}

Thus, it is possible to select $\kappa$ and $\lambda$ such that the dominant term of the bias expansion is canceled only when $b=-1/2$ and $c\ge 1/4$ or $b=-2/3$ and $c=1/6$, provided that the selected values of $\kappa$ and $\lambda$ verify the condition $W_N \le\Delta_N$, which is equivalent to  $\kappa\delta_N^{1+b} \le \lambda \delta_N^c$.

The case $b=-1/2$ and $c=1/4$ is of particular interest, as it may allow to cancel the dominant term under the usual assumption $ \nu(0)=\alpha $, which is equivalent to $E[\nu(\tau)]=\alpha $. In fact, if $E[\nu(\tau)]=\alpha $, then $a_3=0$ and it is not possible to cancel the leading term of the bias expansion through the selection of $\kappa$ and $\lambda$ when $b=-1/2$ and $c>1/4$ or $b=-2/3$ and $c=1/6$. 

 Specifically, the leading term of the bias expansion in Eq. (\ref{dom}) can be canceled in the case  $b=-1/2$ and  $c=1/4$ if there exists a solution $(\tilde\kappa , \tilde\lambda )  \in \mathbb{R}_{>0} \times \mathbb{R}_{>0}$ to the following system

$$\begin{cases} \displaystyle a_3 \kappa^2 + a_1 \lambda^2 \kappa + a_2=0     \\  W_N \le \Delta_N  
 
 \end{cases},$$
 
 where $W_N=\kappa \delta_N^{1/2}$ and $\Delta_N = \lambda \delta_N^{1/4}$.  If a solution $(\tilde\kappa , \tilde\lambda ) \in \mathbb{R}_{>0} \times \mathbb{R}_{>0}$ exists, the corresponding bias-optimal selection of $W_N$ and $\Delta_N$ reads
 $$W_N=\tilde\kappa \delta_N^{1/2}, \quad\Delta_N=\tilde\lambda \delta_N^{1/4}.$$

\end{proof}

\subsection*{Lemma \ref{explbiasNoise}}
\begin{proof} Let Assumption \ref{Ass3} hold and consider the estimator:  

$ \widetilde{PSRV}_{[\tau,\tau+h],N } := \displaystyle \sum_{i=1}^{\lfloor h / \Delta_N \rfloor}  \Big[ \hat{\nu} (\tau+i\Delta_N)-\hat{\nu} (\tau+ i\Delta_N-\Delta_N ) \Big]^2, $

where, for $s$ taking values on the time grid of mesh-size $\delta_N$:

\begin{itemize}
\item[-]  $\hat{\nu} (s):=\widetilde{RV} (s, k_N\delta_N)(k_N\delta_N)^{-1},$  
\item[-]  $\widetilde{RV}(s, k_N\delta_N) := \displaystyle \sum_{j=1}^{k_N}  \Delta \tilde{p}^2 (s +  j\delta_N-k_N\delta_N, \delta_N ),$
\item[-] $\Delta \tilde p (s   )=\Delta   p (s){+}\Delta \eta (s   ):= \tilde p(s  )-\tilde p(s -\delta_N  )=    p(s  ) + \eta(s)-  p(s -\delta_N  )-  \eta(s -\delta_N  ).$ 

\end{itemize}

We observe that

$\widetilde{PSRV}_{[\tau,\tau+h],N }=\displaystyle  (k_N\delta_N)^{-2}\sum_{i=1}^{\lfloor h / \Delta_N \rfloor} [ \widetilde{RV} (\tau+i\Delta_N, k_N\delta_N) - \widetilde{RV} (\tau+(i-1)\Delta_N, k_N\delta_N)]^2$

$=\displaystyle  (k_N\delta_N)^{-2}\sum_{i=1}^{\lfloor h / \Delta_N \rfloor} \Big[ \sum_{j=1}^{k_N} \Delta \tilde{p}^2 (\tau+i\Delta_N + (j-k_N)\delta_N) - \sum_{j=1}^{k_N}\Delta \tilde{p}^2 (\tau+(i-1)\Delta_N + (j-k_N)\delta_N) \Big]^2 $

$=\displaystyle  (k_N\delta_N)^{-2}\sum_{i=1}^{\lfloor h / \Delta_N \rfloor} \Big[ \sum_{j=1}^{k_N} \Big(\Delta p (\tau+i\Delta_N + (j-k_N)\delta_N)+\Delta \eta (\tau+i\Delta_N + (j-k_N)\delta_N) \Big)^2$

$ \displaystyle- \sum_{j=1}^{k_N} \Big(\Delta p (\tau+(i-1)\Delta_N + (j-k_N)\delta_N)+\Delta \eta (\tau+(i-1)\Delta_N + (j-k_N)\delta_N) \Big)^2\Big]^2.$

\vspace{0.5cm}

 To simplify the notation, we replace $\Delta p (\tau+i\Delta_N +(j-k_N)\delta_N)$ with $r (i,j,N)$ and  
$\Delta \eta (\tau+i\Delta_N +(j-k_N)\delta_N)$
with $\epsilon(i,j,N)$ and rewrite:

\vspace{0.25cm}

$\widetilde{PSRV}_{[\tau,\tau+h],N }= \displaystyle  (k_N\delta_N)^{-2}\sum_{i=1}^{\lfloor h / \Delta_N \rfloor} \Bigg\{ \Bigg[ \sum_{j=1}^{k_N}\Big(r^2(i,j,N)- (r^2(i-1,j,N) \Big) \Bigg]^2$

$ \displaystyle+ \Bigg[ \sum_{j=1}^{k_N}\Big(\epsilon^2(i,j,N) - \epsilon^2(i-1,j,N) \Big)  \Bigg]^2+ 4\Bigg[ \sum_{j=1}^{k_N}\Big(r(i,j,N)\epsilon(i,j,N)- r(i-1,j,N)\epsilon(i-1,j,N)\Big)  \Bigg]^2$

$ \displaystyle + 2\sum_{j=1}^{k_N} \Big(r^2(i,j,N)- (r^2(i-1,j,N) \Big)\sum_{j=1}^{k_N}\Big(\epsilon^2(i,j,N) - \epsilon^2(i-1,j,N) \Big)$

$ \displaystyle +4\sum_{j=1}^{k_N}\Big(r^2(i,j,N) - r^2(i-1,j,N)\sum_{j=1}^{k_N}\Big(r(i,j,N)\epsilon(i,j,N)- r(i-1,j,N)\epsilon(i-1,j,N)\Big)$

$ \displaystyle +4\sum_{j=1}^{k_N}\Big(\epsilon^2(i,j,N) - \epsilon^2(i-1,j,N)\Big)\sum_{j=1}^{k_N}\Big(r(i,j,N)\epsilon(i,j,N)- r(i-1,j,N)\epsilon(i-1,j,N)\Big)\Bigg\}.$

Based on the previous expression, we can split the expected value of $ \widetilde{PSRV}_{[\tau,\tau+h],N }$ into the sum of the following six components:

\begin{itemize}
\item[i)]$   {PSRV}_{[\tau,\tau+h],N },$

\item[ii)]$\displaystyle(k_N\delta_N)^{-2}\sum_{i=1}^{\lfloor h / \Delta_N \rfloor} E\Bigg[ \sum_{j=1}^{k_N}\Big(\epsilon^2(i,j,N) - \epsilon^2(i-1,j,N) \Big)  \Bigg]^2,$

\item[iii)]$\displaystyle 4(k_N\delta_N)^{-2}\sum_{i=1}^{\lfloor h / \Delta_N \rfloor} E\Bigg[ \sum_{j=1}^{k_N}\Big(r(i,j,N)\epsilon(i,j,N)- r(i-1,j,N)\epsilon(i-1,j,N)\Big)  \Bigg]^2,$

\item[iv)]$\displaystyle 2(k_N\delta_N)^{-2}\sum_{i=1}^{\lfloor h / \Delta_N \rfloor} \Bigg\{ E\Big[ \sum_{j=1}^{k_N} \Big(r^2(i,j,N)- (r^2(i-1,j,N) \Big)\Big] E \Big[\sum_{j=1}^{k_N}\Big(\epsilon^2(i,j,N) - \epsilon^2(i-1,j,N) \Big)\Big]\Bigg\},$

\item[v)]$\displaystyle 4(k_N\delta_N)^{-2}\sum_{i=1}^{\lfloor h / \Delta_N \rfloor} E\Bigg[\sum_{j=1}^{k_N}\Big(r^2(i,j,N) - r^2(i-1,j,N)\Big)\sum_{j=1}^{k_N}\Big(r(i,j,N)\epsilon(i,j,N)- r(i-1,j,N)\epsilon(i-1,j,N)\Big)\Bigg],$

\item[vi)]$\displaystyle 4(k_N\delta_N)^{-2}\sum_{i=1}^{\lfloor h / \Delta_N \rfloor} E\Bigg[\sum_{j=1}^{k_N}\Big(\epsilon^2(i,j,N) - \epsilon^2(i-1,j,N)\Big)\sum_{j=1}^{k_N}\Big(r(i,j,N)\epsilon(i,j,N)- r(i-1,j,N)\epsilon(i-1,j,N)\Big)\Bigg].$
\end{itemize}

 Note that under Assumption \ref{Ass3}, $r$ is zero-mean  and $\epsilon$ is a zero-mean stationary process independent of $r$. Therefore components iv), v) and vi) are equal to zero. Moreover, note that the analytic expression of i) {was} already obtained in Theorem \ref{th1}. Thus, in order to obtain the analytic  expression of $ E[\widetilde{PSRV}_{[\tau,\tau+h],N }]$ under Assumption \ref{Ass3}, we only have to compute the analytic expressions of ii) and iii).

\vspace{0.25cm}
We start with ii). We have: 

$\displaystyle(k_N\delta_N)^{-2}\sum_{i=1}^{\lfloor h / \Delta_N \rfloor} E\Bigg[ \sum_{j=1}^{k_N}\Big(\epsilon^2(i,j,N) - \epsilon^2(i-1,j,N) \Big)  \Bigg]^2$

$=\displaystyle(k_N\delta_N)^{-2}\sum_{i=1}^{\lfloor h / \Delta_N \rfloor}  \Bigg\{\sum_{j=1}^{k_N}E\Bigg[ \epsilon^4(i,j,N) + \epsilon^4(i-1,j,N) -2\epsilon^2(i,j,N)\epsilon^2(i-1,j,N)   \Bigg] $

$ \displaystyle + 2\sum_{j=2}^{k_N} \sum_{h=1}^{j-1}E\Bigg[\Big(\epsilon^2(i,j,N)\epsilon^2(i,h,N) -\epsilon^2(i,j,N)\epsilon^2(i-1,h,N) - \epsilon^2(i-1,j,N)\epsilon^2(i,h,N)+  \displaystyle \epsilon^2(i-1,j,N)\epsilon^2(i-1,h,N)\Big) \Bigg]\Bigg\}$

$=\displaystyle(k_N\delta_N)^{-2}\sum_{i=1}^{\lfloor h / \Delta_N \rfloor}  \Bigg\{\sum_{j=1}^{k_N}2E[ \epsilon^4(i,j,N)] -2E[\epsilon^2(i,j,N)]^2\Bigg\}=  \frac{4\Big(Q_{\eta}+V^2_{\eta} \Big)h}{k_N\delta_N^{2}\Delta_N},  $ \\
 
since $\epsilon^2$ is an i.i.d. process such that $E[ \epsilon^2(i,j,N)]=2V_{\eta}$ and $E[\epsilon^4(i,j,N)]=2Q_{\eta}+6V_{\eta}^2$, as one can easily check.

\vspace{0.5cm}
Then we move {on} to iii). First, we rewrite:

$\displaystyle 4(k_N\delta_N)^{-2}\sum_{i=1}^{\lfloor h / \Delta_N \rfloor} E\Bigg[ \sum_{j=1}^{k_N}\Big(r(i,j,N)\epsilon(i,j,N)- r(i-1,j,N)\epsilon(i-1,j,N)\Big)  \Bigg]^2$

$=4\displaystyle(k_N\delta_N)^{-2}\sum_{i=1}^{\lfloor h / \Delta_N \rfloor}  \Bigg\{\sum_{j=1}^{k_N}E\Bigg[r^2(i,j,N)\epsilon^2(i,j,N) + r^2(i-1,j,N)\epsilon^2(i-1,j,N) - \displaystyle 2r(i,j,N)r(i-1,j,N)\epsilon(i,j,N)\epsilon(i-1,j,N)  \Bigg]$

$\displaystyle +  2\sum_{j=2}^{k_N} \sum_{h=1}^{j-1}E\Bigg[r(i,j,N)r(i,h,N) \epsilon(i,j,N) \epsilon(i,h,N) - r(i,j,N)r(i-1,h,N) \epsilon(i,j,N) \epsilon(i-1,h,N) $

$-r(i-1,j,N)r(i,h,N) \epsilon(i-1,j,N) \epsilon(i,h,N) +r(i-1,j,N)r(i-1,h,N) \epsilon(i-1,j,N) \epsilon(i-1,h,N) \Bigg]\Bigg\}.$

Then we note that:

\begin{itemize}

\item[-] $r$ is zero-mean and independent of $\epsilon$, therefore:
\begin{itemize}
\item[-] $E\Big[r(i,j,N)r(i-1,j,N)\epsilon(i,j,N)\epsilon(i-1,j,N)\Big]=0$;
\item[-] $E\Big[r(i,j,N)r(i,h,N) \epsilon(i,j,N) \epsilon(i,h,N) - r(i,j,N)r(i-1,h,N) \epsilon(i,j,N) \epsilon(i-1,h,N)$

$ -r(i-1,j,N)r(i,h,N) \epsilon(i-1,j,N) \epsilon(i,h,N) +r(i-1,j,N)r(i-1,h,N) \epsilon(i-1,j,N) \epsilon(i-1,h,N) \Big]=0;$

\end{itemize}

\item[-]$\epsilon$ is stationary and independent of $ r$, therefore:

$E\Big[r^2(i,j,N)\epsilon^2(i,j,N) + r^2(i-1,j,N)\epsilon^2(i-1,j,N)  \Big]=E\Big[\epsilon^2(i,j,N)\Big]E\Big[r^2(i,j,N) + r^2(i-1,j,N)  \Big]$

$=\displaystyle 2 V_{\eta} E\Big[\int_{a_{i,j-1,N}}^{a_{i,j,N}}v(s)ds + \int_{a_{i-1,j-1,N}}^{a_{i-1,j,N}}v(s)ds \Big],$ where $a_{i,j,N} := i\Delta_N + (j-k_N)\delta_N$.

\end{itemize}

Therefore, simple calculations allow rewriting component iii) as: 
 
$\displaystyle 4(k_N\delta_N)^{-2}\sum_{i=1}^{\lfloor h / \Delta_N \rfloor} E\Bigg[ \sum_{j=1}^{k_N}\Big(r(i,j,N)\epsilon(i,j,N)- r(i-1,j,N)\epsilon(i-1,j,N)\Big)  \Bigg]^2$

$= \displaystyle \frac{8V_{\eta}(\alpha-E[\nu(\tau)])(1+e^{-\theta\Delta_N})(1-e^{\theta k_N \delta_N})(1-e^{- \theta h})}{\theta(1-e^{-\theta\Delta_N}) k_N^2 \delta_N^2}+ \frac{16\alpha V_{\eta}h}{ k_N\delta_N\Delta_N}. $

\vspace{0.5cm}
Finally, putting everything together, we have 
$$\displaystyle E[\widetilde{PSRV}_{[\tau,\tau+h],N }] = E[ {PSRV}_{[\tau,\tau+h],N }]+ D_N,$$
where  
\begin{eqnarray*} 
&&D_N:=\displaystyle [4(Q_{\eta}+V^2_{\eta})+16\alpha V_{\eta}\delta_N]h\frac{1}{k_N \delta^2_N \Delta_N} +\frac{8}{ \theta} V_{\eta}(\alpha-E[\nu(\tau)])(1-e^{-\theta h})\frac{(1+e^{-\theta \Delta_N})(1-e^{-\theta k_N \delta_N})}{(1-e^{-\theta \Delta_N})k_N^2 \delta_N^2 }.\nonumber\\
\end{eqnarray*}
\end{proof}

{Analogous calculations in the overlapping case $W_N>\Delta_N$ lead to}

\begin{equation*}\begin{split}
D^*_N&=[   4\left(Q_\eta+V^2_\eta\right)+ 16\alpha V_\eta \delta_N] h\frac{1}{k_N^2\delta_N^3}+ \frac{8}{\theta}V_\eta(\alpha-E[\nu(\tau)]) (1-e^{-\theta h})\frac{1}{(1-e^{-\theta\Delta_N}) k_{N}^{ 2} \delta_{N}^{ 2}  } \Bigg\{ \frac{(2+  k_{N} )}{2k_N\delta_N}\Bigg[ \frac{(e^{\theta k_N \delta_N - \theta \Delta_N}-1)(k_N\delta_N+\Delta_N)}{k_N\delta_N-\Delta_N}+\\ &(e^{-\theta \Delta_N}-e^{\theta k_N\delta_N})\Bigg]+  \frac{k_N}{2\Delta_N}(1+e^{\theta k_N\delta_N})(1-e^{-\theta\Delta_N})\Bigg\}.\end{split}
\end{equation*}

\subsection*{Theorem \ref{th2}}
\begin{proof}

Consider $D_N$ in Lemma \ref{explbiasNoise}. Then recall that as $N \to \infty$, $\Delta_N =O(\delta_N^c)$, $c \in (0,1),$ and $k_N=O(\delta_N^b) $, $b \in (-1,0)$. Moreover, note that:

\begin{itemize}
\item[-]as $N \to \infty$, $E[ {PSRV}_{[\tau,\tau+h],N }] \to \langle\nu,\nu\rangle_{[\tau,\tau+h]}$  if  $b\ge-1/2$ and  $c<-b$ or $b<-1/2$ and $c < 1+ b $   (see Theorem \ref{th1});

\item[-]  as $N \to \infty$, $D_N \sim 4(Q_{\eta}+V^2_{\eta})h\frac{1}{k_N \delta_N^2 \Delta_N}+16\alpha V_{\eta}h\frac{1}{k_N \delta_N \Delta_N} +8 V_{\eta}(\alpha-E[\nu(\tau)])(1-e^{-\theta h})(1+e^{-\theta \Delta_N})\frac{1}{k_N \delta_N \Delta_N},  $

  thus  $D_N \to \infty$ as $N \to \infty$ for any $(b,c) \in (-1,0)\times (0,1)  $. In particular, as $N \to \infty$, $D_N$ is $O(\frac{1}{k_N \delta_N^2 \Delta_N})$ for any $(b,c) \in (-1,0)\times (0,1).  $

\end{itemize}

Therefore, as $N \to \infty$, if  $b\ge-1/2$ and  $c<-b$ or $b<-1/2$ and $c < 1+ b $, then $\displaystyle E[\widetilde{PSRV}_{[\tau,\tau+h],N }] = E[ {PSRV}_{[\tau,\tau+h],N }]+ D_N$ diverges, with rate $\frac{1}{k_N \delta_N^2 \Delta_N}$. The proof is complete.

\end{proof}

\subsection*{Theorem \ref{th3}}
\begin{proof} Recall from Definition \ref{larv} that for $\tau$ {with} values on the price-sampling grid of mesh size $\delta_N$: 

$\hat\nu(\tau) :=(k_N\delta_N)^{-1} \sum_{j=1}^{k_N} \Big[p(\tau-k_N\delta_N + j \delta_N)-p(\tau-k_N\delta_N + (j-1) \delta_N)\Big]^2.$

\vspace{0.25cm}
Moreover, from Appendix A in \cite{bz}, we have under Assumption \ref{Ass2}:

$\displaystyle    E\Big[\int_{\tau-\Delta }^{\tau} {\nu(t)} dt \Big]=\displaystyle   \alpha \Delta +  ( \nu(0)   -\alpha)\theta^{-1}e^{-\theta \tau} (e^{\theta \Delta} -1) $ and $ E [\nu(\tau)] = \alpha + ( \nu(0) -\alpha)e^{-\theta \tau}.$

\vspace{0.25cm}
Therefore, under Assumption \ref{Ass2},

$\displaystyle E[\hat\nu(\tau)-\nu(\tau)] =(k_N\delta_N)^{-1}  E\Big[\sum_{j=1}^{k_N} \Big[p(\tau-k_N\delta_N + j \delta_N)-p(\tau-k_N\delta_N + (j-1) \delta_N)\Big]^2\Big] - \Big[\alpha + ( \nu(0) -\alpha)e^{-\theta \tau}\Big]$

$=\displaystyle(k_N\delta_N)^{-1}  E\Big[\sum_{j=1}^{k_N} \Big[\int\limits_{\tau-k_N\delta_N + (j-1) \delta_N}^{\tau-k_N\delta_N + j \delta_N} \sqrt{\nu(t)} dW(t) \Big]^2\Big]- \Big[\alpha + ( \nu(0) -\alpha)e^{-\theta \tau}\Big]$

$=\displaystyle(k_N\delta_N)^{-1}   E\Big[\int_{\tau-k_N\delta_N }^{\tau} {\nu(t)} dt \Big] - \Big[\alpha + ( \nu(0) -\alpha)e^{-\theta \tau}\Big]$

$=\displaystyle ( \nu(0) -\alpha)e^{-\theta \tau} \Big[ (\theta k_N\delta_N)^{-1}    (e^{\theta k_N\delta_N} -1)-   1 \Big].$

Expanding {this} as $N \to \infty$, we can rewrite $E[\hat\nu(\tau)-\nu(\tau)] = \displaystyle (\nu(0)  -\alpha)e^{-\theta \tau}  \frac{1}{2}\theta k_N\delta_N   + o(k_N\delta_N).$ Furthermore, recall that $k_N\delta_N = O(\delta_N^{b+1})$ and $b \in (-1,0)$. Therefore,  under Assumption \ref{Ass2}, $E[\hat\nu(\tau)-\nu(\tau)] $ converges to zero  as $N \to \infty$, with rate $k_N\delta_N$.

\vspace{0.5cm}

Now let Assumption \ref{Ass3} hold and replace $p$ with $\tilde p$ in the definition of the locally averaged realized variance, i.e., consider the estimator $ w(\tau) :=(k_N\delta_N)^{-1} \sum_{j=1}^{k_N} \Big[\tilde p(\tau-k_N\delta_N + j \delta_N)-\tilde p(\tau-k_N\delta_N + (j-1) \delta_N)\Big]^2.$ Simple calculations lead to:

\vspace{0.5cm}
$ E[ w(\tau)-\nu(\tau)] =(k_N\delta_N)^{-1}  E\Big[\sum_{j=1}^{k_N} \Big[\tilde p(\tau-k_N\delta_N + j \delta_N)-\tilde p(\tau-k_N\delta_N + (j-1) \delta_N)\Big]^2\Big] - \Big[\alpha + ( \nu(0) -\alpha)e^{-\theta \tau}\Big]$

$=\displaystyle E[ \hat \nu(\tau)-\nu(\tau)]  + (k_N\delta_N)^{-1} \sum_{j=1}^{k_N} E\Big[ \Big[\eta(\tau-k_N\delta_N + j \delta_N)-\eta(\tau-k_N\delta_N + j \delta_N) \Big]^2 \Big] = \displaystyle E[ \hat \nu(\tau)-\nu(\tau)]  +  2 V_\eta\delta_N ^{-1}. $

Therefore, under Assumption \ref{Ass3}, $E[ w(\tau)-\nu(\tau)] $ diverges as $N \to \infty$, with rate $\displaystyle \frac{1}{\delta_N}$. The proof is complete.
 \end{proof}

\section{Indirect inference method for the feasible bias-optimal selection of local-window tuning parameter} 

 The feasible selection of $\kappa$ in equation (\ref{optimal_general}) requires, for a given $\beta$, the knowledge of the volatility process $\nu(t)$ at the instant $t=\tau$ and the vol-of-vol parameter $\gamma$.  A simple and computationally-efficient indirect inference method to obtain estimates of those quantities is as follows.

First, one estimates the spot volatility path using the fast Fourier transform algorithm, following the procedure detailed in Appendix B.5 of \cite{mrs}. In particular, from a given sample of log-price observations, one obtains estimates $\hat\nu_{n,N,M}$ of the spot volatility on the grid of mesh size $\Delta_M:=\frac{1}{2M+1}$, where $n$ denotes the sample size, while $N$ and $M$  denote, resp., the  cutting frequencies for the computation of Fourier coefficients of the volatility and the reconstruction of the spot volatility path.  See Chapter 4 in \cite{mrs} for the consistency of the estimator and guidance on the efficient selection of the cutting frequencies $N$ and $M$ for a given $n$.

 Then, using the reconstructed volatility path $\hat\nu_{n,N,M}(i)$, $i=1,2,\ldots,2M+1$, one infers the value of the parameter $\gamma$  by applying the following zero-intercept multivariate regression, based on the discretization of the CKLS process in Assumption \ref{Ass4}:

\begin{equation}\label{indinf}
\underline{Y}=\alpha\theta\,\Delta_M \underline{X}^1 - \theta \Delta_M\underline{X}^2+{\gamma}\sqrt{\Delta_M} \underline{Z},
\end{equation}
where $\underline{Z}$ is a vector of independent standard normal random variables, while the dependent variable $\underline{Y}$ and  independent variables $\underline{X}^1,$ $\underline{X}^2$ are defined as

$$Y_i:=\frac{\hat\nu_{n,N,M}(i+1)-\hat\nu_{n,N,M}(i)}{\hat\nu_{n,N,M}(i)^\beta},\quad X^1_i:=\hat\nu_{n,N,M}(i)^{-\beta}, \quad X^2_i:=\hat\nu_{n,N,M}(i)^{1-\beta}.$$

\noindent Denoting by $\hat\omega$  the estimate of the standard deviation  of the disturbance term, obtained from the regression residuals, we have $\hat\gamma= {\hat\omega}/\sqrt{\Delta_M}$.

 An estimate of $\nu(\tau)$ is simply given by the Fourier estimate of volatility  in correspondence of the beginning of the period of interest.

Finally, note that comparing the $R^2$ of the regression (\ref{indinf})  for different values of $\beta$ allows deciding which model under Assumption \ref{Ass4} fits the data better.

\end{appendices}

\end{document}